\newcommand{\beq}{\begin{equation}} 
\newcommand{\eeq}{\end{equation}} 
\newcommand{\bea}{\begin{aligned}}
\newcommand{\eea}{\end{aligned}}
\newcommand{\bdm}{\begin{displaymath}}
\newcommand{\edm}{\end{displaymath}}
\newcommand{\barr}{\begin{array}}
\newcommand{\earr}{\end{array}}
\newcommand{\ben}{\begin{enumerate}}
\newcommand{\een}{\end{enumerate}}
\newcommand{\bde}{\begin{description}}
\newcommand{\ede}{\end{description}}
\newcommand{\oball}{B_{\boldsymbol \nu, r}}
\newcommand{\cball}{\overline{B_{\boldsymbol \nu, r}}}
\newcommand{\cv}[1]{\mathcal{A}_N\left(#1\right)}
\newcommand{\cvsquared}[1]{\mathcal{A}^{\,2}_N\left(#1\right)}
\newcommand{\smot}{{\small \otimes}}
\newcommand{\repav}[1]{\E \left\langle \, #1 \, \right\rangle_N}
\newcommand{\repave}[1]{\E \left\langle \, #1 \, \right\rangle_{N;\varepsilon}}
\newcommand{\qap}{q_N(\alpha, \alpha')}
\newtheorem{teor}{Theorem}[section]
\newtheorem{prop}[teor]{Proposition} 
\newtheorem{lem}[teor]{Lemma}
\newtheorem{Def}[teor]{Definition}
\newtheorem{rem}[teor]{Remark}
\newcommand{\defi}{\equiv} 
\newcommand{\R}{\mathbb{R}}
\newcommand{\N}{\mathbb{N}}
\newcommand{\PP}{\mathbb{P}}
\newcommand{\E}{{\mathbb{E}}}
\newcommand{\EE}{{\sf{E}}}
\newcommand{\Hm}{H}
\newcommand{\defeq}{\stackrel{\text{def}}{=}}
\newcommand{\be}{\beta}
\newcommand{\s}{\sigma}
\newcommand{\vare}{\varepsilon}
\newcommand{\G}{\boldsymbol \Phi}
\newcommand{\Gm}{{\mathcal G}_N(\alpha)}
\newcommand{\Gmp}{{\mathcal G}_N(\alpha')}
\newcommand{\Gmt}{\tilde {\mathcal G}_N(\alpha)}
\newcommand{\Gmtp}{\tilde {\mathcal G}_N(\alpha')}
\newcommand{\Pa}{\emph{P}}
\newcommand{\ps}{\mathcal{M}_1^+(S)}
\newcommand{\gqm}{\boldsymbol \nu^{q,m}}
\newcommand{\gqmb}{\boldsymbol \nu^{\bar q, \bar m}}
\newcommand{\gqmu}{\nu^{q,m}_1}
\newcommand{\gqmub}{\nu^{\bar q, \bar m}_1}
\newcommand{\bq}{\overline{q}}
\newcommand{\bam}{\overline{m}}
\newcommand{\om}{\omega_N}
\newcommand{\vecb}[2]{\left( \barr{c} #1 \\ #2 \earr \right)}
\newcommand{\vect}{\bar{\boldsymbol{v}}}
\newcommand{\psd}{\mathcal{M}^+_1\left(S^d\right)}
\newcommand{\pst}{\mathcal{M}^+_1\left(S^2\right)}
\newcommand{\Lb}{\boldsymbol L_{N,\alpha}}
\newcommand{\Tlb}{\tilde{\boldsymbol L}_{N,\alpha}}
\newcommand{\Tlbp}{\tilde{\boldsymbol L}_{N,\alpha'}}
\newcommand{\tlbu}{\tilde L^1_{N,\alpha}}
\newcommand{\tlbd}{\tilde L^2_{N,\alpha}}
\newcommand{\Lbu}{ L_{N,1}}
\newcommand{\nuop}{\bar{\boldsymbol \nu}}
\newcommand{\Phib}{\boldsymbol \Phi}
\newcommand{\nub}{\boldsymbol \nu}
\newcommand{\nubb}{\bar{\boldsymbol \nu}}
\newcommand{\rhob}{\boldsymbol \rho}
\newcommand{\snk}{\bar{\boldsymbol \nu}_{n_k}}
\newcommand{\xib}{{\boldsymbol \xi}}
\newcommand{\mC}{\mathcal{C}}
\newcommand{\mK}{\mathcal{K}}
\newcommand{\arth}{\beta(\sqrt{q_0}u+\sqrt{q_1-q_0}v+h)}
\begin{document}

\title[REM approximation of TAP free energies]{On the REM approximation of TAP free energies.}

\author[N. Kistler]{Nicola Kistler}
\address{Nicola Kistler \\ J.W. Goethe-Universit\"at Frankfurt, Germany.}
\email{kistler@math.uni-frankfurt.de}

\author[M.A. Schmidt]{Marius A. Schmidt}
\address{Marius A. Schmidt  \\ J.W. Goethe-Universit\"at Frankfurt, Germany.}
\email{M.Schmidt@mathematik.uni-frankfurt.de}

\author[G. Sebastiani]{Giulia Sebastiani}
\address{Giulia Sebastiani \\ J.W. Goethe-Universit\"at Frankfurt, Germany.}
\email{sebastia@math.uni-frankfurt.de}

\thanks{This work has been supported by a DFG research grant, contract number 2337/1-1.  NK wishes to express his gratitude to Anton Bovier for constant support and interest in this line of research. }

\subjclass[2000]{60J80, 60G70, 82B44} \keywords{Mean Field Spin Glasses, Large Deviations, Gibbs-Boltzmann and Parisi Variational Principles, Random Energy Models and Derrida-Ruelle cascades.}

\date{\today}

\begin{abstract}
The free energy of TAP-solutions for the SK-model of mean field spin glasses can be expressed as a nonlinear functional of local terms: we exploit this feature in order to contrive abstract REM-like models which we then solve by a classical large deviations treatment. This allows to identify the origin of the physically unsettling quadratic (in the inverse of temperature) correction to the Parisi free energy for the SK-model, and formalizes the \emph{true} cavity dynamics which acts on TAP-space, i.e. on the space of TAP-solutions.  From a non-spin glass point of view,  this work is the first in a series of refinements which addresses the stability of hierarchical structures in models of evolving populations. 
\end{abstract}

\maketitle

\section{Introduction}
The Generalized Random Energy Models,  GREM for short, are toy models for mean field spin glasses introduced by Derrida in the 1980's  \cite{Dgrem},  which have played a key role in our understanding of certain aspects of the Parisi theory \cite{MPV}.    Notwithstanding,  the deeper relation between the GREMs and more realistic spin glasses such as the prototypical Sherrington-Kirkpatrick model \cite{sk}, SK for short,  hasn't yet been identified: the goal of this paper is to fill this gap.

Precisely,  we relate the simplest of Derrida's models, the REM \cite{D}, and the {Thouless-Anderson-Palmer free energies} \cite{TAP},  TAP for short; this seamlessly leads to abstract, and what is crucial:  \emph{highly nonlinear},  REM-like Hamiltonians involving only the alleged geometrical properties of the (relevant) TAP solutions, which we then solve within Boltzmann formalism by means of a classical, Sanov-type large deviation analysis.  

For these abstract models we furthermore derive a dual, Parisi-like formula for the free energy,   establish  their convergence to the Derrida-Ruelle cascades \cite{R},  and show that the abstract overlap concentrates on two possible values only -- in complete agreement with the Parisi theory for models within the 1-step replica symmetry breaking (1RSB) approximation.  The inherent nonlinearities also shed new light on the nature of the Parisi formula for the SK-model,  see Section \ref{link_parisi} below. 


What is perhaps more,  our findings {\it i)} considerably improve and clarify the cavity approach \cite{MPV} to mean field spin glasses put forward by 
Aizenman, Ruzmaikina and Arguin \cite{AR, AA}, as well as by Bolthausen and the first author \cite{BK1, BK2}; {\it ii)}
provide a first\footnote{partial, due to the REM-assumption made here.} answer to the question raised in \cite[p.109]{B} concerning the link between the Bolthausen-Sznitman abstract cavity set-up \cite{A} and the SK-model.  In both cases,  headway is made by enhancing the framework of these papers with the missing ingredient "TAP-free energy" : this simple,  yet far-reaching insight is arguably the main contribution of this work. \\

\noindent This paper is organised as follows: in Section \ref{sktapp} we recall some of the main aspects from the picture canvassed in \cite{TAP}. This will motivate and justify our abstract REM-like models which are introduced in Section \ref{model_new}, where the main results are also presented. The proofs are given in the fourth section, with some useful (technical) facts being recalled in the Appendix for the reader's convenience. 

\section{SK, TAP, and Plefka.} \label{sktapp}
The SK-model is the archetypical mean field spin glass: for $N \in \N$,  consider centered Gaussians $( g_{ij})_{1 \leq i < j \leq N}$ issued on some probability space $(\Omega, \mathcal F, \PP)$. These Gaussians, {\it the disorder}, are assumed to be all independent and with variance $1/N$. The SK-Hamiltonian, defined on the Ising configuration space is  then
\beq \label{SK_Ham}
\s \in \Sigma_N \defeq \{\pm 1\}^N \mapsto H_N(\s) = \sum_{1 \leq i < j \leq N} g_{ij} \s_i \s_j \,.
\eeq
The quenched SK-free energy to  inverse temperature $\be>0$ and external field $h \in \R$ is
\beq
N f_N(\be, h) \defi \log \sum_{\s \in \Sigma_N} \exp\left( \be H_N(\s) + h \sum_{i=1}^N \s_i \right)\,.
\eeq
In order to 'solve' the model, Thouless, Anderson and Palmer \cite{TAP} play the delicate (and debatable) card of the spin magnetisation $m_i \defi \left< \s_i \right>_{\be, h, N}, i=1 \dots N$ as order parameter of the theory, with $\left< \right>_{\be, h, N}$ denoting average with respect to the quenched Gibbs measure.  By means of a nonrigorous (and troublesome) diagrammatic expansion,  Thouless {\it et. al.} suggest that the following approximation holds true with overwhelming probability: 
\beq \label{tap_approx}
N f_N(\be, h) = \max_{\boldsymbol m \in \Delta} f_{\text{TAP}}(\boldsymbol m) + o(N) \quad (N\uparrow \infty),
\eeq
where 
\begin{itemize}
\item[i)] the \emph{TAP-free energy} is given by 
\beq \bea \label{tapfe}
Nf_{TAP}(\boldsymbol m) & \defi \beta \sum_{1 \leq i< j \leq N} g_{ij} m_i m_j + h \sum_{i=1}^N m_i \\
 & \qquad \qquad + \frac{\be^2}{4}N\left[ 1- \frac{1}{N}\sum\limits_{i=1}^N m_i^2 \right]^2 - \sum_{i=1}^N I(m_i) \,;
\eea \eeq
\item[ii)] for $m \in [-1,1]$  
\[ 
I(m) \defi \frac{1+m}{2} \log(1+m)+ \frac{1-m}{2} \log(1-m)\,;
\] 
is the classical coin tossing rate function; and
\item[iii)] $\Delta \subset [-1,1]^N$ an unspecified set of restrictions on the quenched magnetisations $\boldsymbol m$.
\end{itemize}

\begin{rem} \label{plefkarem}
Plefka has shown \cite{plefka} that the TAP-approximation \eqref{tap_approx} neatly emerges from a high temperature expansion of the Gibbs potential. As any (finite volume) Gibbs potential, the map $\boldsymbol m \mapsto f_{\text{TAP}}(\boldsymbol m)$ must necessarily be \emph{concave}: in \cite{plefka} it is claimed that this should indeed be the case provided that $\boldsymbol m$ satisfy 
\[
\text{\emph{Plefka's criterium:}}\qquad \frac{\be^2}{N} \sum_{i=1}^N \left(1-m_i^2 \right)^2 < 1 \,.
\]
This condition is widely accepted within the theoretical physics literature (in other words: this restriction should definitely appear in the definition of the $\Delta$-set) but there seems to be divergent opinions if this suffices for the validity of the high temperature expansions and thus of the TAP-approximation \eqref{tapfe}. For a mathematical analysis of the TAP-Plefka approximation within Guerra's interpolation scheme \cite{g}, the reader may check \cite{CPS} and references therein. For an in-depth study of Plefka's convergence criteria for the SK-model, see \cite{GS}.
\end{rem}

Assuming the validity of the TAP-approximation \eqref{tapfe}, we therefore see that extremal "states" must necessarily be critical points of the TAP-free energy: taking the gradient, and rearranging, this leads to the TAP-equations
\beq \label{tapeq}
\nabla f_{\text{TAP}}(\boldsymbol m) = 0 \Longleftrightarrow m_i = \tanh\left( h + \be \sum_{j \neq i} g_{ij} m_j - \be^2(1-q_N(\boldsymbol m)) m_i \right)\,, \; i=1 \dots N, 
\eeq
where $q_N(\boldsymbol m) \defeq (1/N) \sum_{j=1}^N m_j^2$. \\

\noindent In the theoretical physics literature it is claimed that, for large enough $\beta$, the TAP-equations admit exponentially many solutions $\boldsymbol m^{\alpha}, \alpha =1 \dots 2^{\Theta N}$, where $\Theta = \Theta(\be, h)$ is the currently unknown complexity. \\

\noindent Let us now assume to be given a TAP-solution $\boldsymbol m^\alpha$: using \eqref{tapeq} we may express 
\beq
\be \sum_{j \neq i} g_{ij} m_j^\alpha = \tanh^{-1}\left( m_i^\alpha \right) - h + \be^2 \left\{ 1-q_N\left(\boldsymbol m^\alpha \right) \right\}m^\alpha_i.
\eeq
Plugging this into \eqref{tapfe},  and performing some straightforward algebraic manipulations, we obtain a representation of the TAP-FE as a sum of $N$ local terms, as anticipated in the abstract: omitting the elementary details, the upshot reads (by a slight abuse of notation)
\beq \bea
f_{\text{TAP}}\left( \alpha \right) = & \frac{1}{N} \sum_{i}\left[ \frac{1}{2}m_i^\alpha\tanh^{-1}\left( m_i^\alpha \right) + \frac{h}{2} m_i^\alpha -  I(m^\alpha_i) \right] +\frac{\beta^2}{4} \left\{ 1- q_N({\bf m}^\alpha)^2\right\}.
\eea \eeq


What is crucial for our considerations is the nonlinear\footnote{as a matter of fact, quadratic: an analogous expression for the TAP-FE of any $p $-spin model is also available,  in which case the quadratic term turns into a polynomial of degree $p \geq 3$, see e.g. \cite{CLR}.} term in the curly brackets above : this nonlinearity, and only this, will mark the  point of departure from the abstract models studied in \cite{BK1, BK2}. Indeed, introducing the fields
\beq \label{fields}
h_i^\alpha \defi \sum_{j\neq i} g_{ij} m_j^\alpha - \beta \left(1-q_N({\bf m}^{\alpha})\right), 
\eeq
and the associated {\it empirical measures}  
\beq
l_{N, \alpha} \defi \frac{1}{N} \sum_{i=1}^N \delta_{h_i^\alpha},
\eeq
we obtain, through the identity  $I(y)=y\tanh^{-1}(y)-\log\cosh\tanh^{-1}(y)$, the following representation:
\beq \label{tapsamp}
f_{\text{TAP}}\left(\alpha \right) = \Phi \left( \int g(x)^2 l_{N, \alpha}(dx) \right) + \int f_1(x) l_{N, \alpha}(dx)\,,
\eeq
where $\Phi, g, f_1$ are real valued functions given by, respectively: 
\begin{itemize}
\item[SK1)] $x \ni \R \mapsto \Phi(x) \defi \frac{\beta^2}{4} (1-x^2) $;
\item[SK2)] $x \ni \R \mapsto f_1(x) \defi -\frac{1}{2} \log\left( 1- \tanh^2\left(h+\be x \right) \right) - \frac{\be}{2} x \tanh(h+\be x)$;
\item[SK3)] $x \ni \R \mapsto g(x) \defi \tanh(h+\be x) $.
\end{itemize}
The "nonlinear randomness" thus stems from the fluctuations of the {\it self-overlap} 
\beq
q_{\text{EA}}(\alpha) \defeq \int g(x)^2 l_{N, \alpha}(dx) = \int \tanh(h+ \be x)^2 l_{N, \alpha}(dx)
\eeq 
the {\it Edwards-Anderson order parameter}, indeed as claimed on \cite[p. 69]{MPV}. (The abstract version of the EA-order parameter will play a key role in our abstract models).  \\

\noindent In order to contrive tractable models we shall perform a {\it REM-approximation}: we replace the local fields by a collection of {\it independent} standard Gaussians  $h_i^\alpha \hookrightarrow g_{\alpha,i}$, where $i=1 \dots N$ and $\alpha = 1 \dots 2^N$ (the complexity of the relevant TAP-solutions being currently unknown we simply set, here and henceforth, $\Theta = 1$). Denoting by 
\beq 
r_{N, \alpha} \defeq \frac{1}{N} \sum_{i=1}^N \delta_{g_{\alpha, i}}
\eeq 
the empirical measure, we thus consider the REM-approximation of the TAP free energy
\beq \label{tapsamp_two}
f_{\text{REM-TAP}}\left( \alpha\right) = \Phi \left( \int g(x)^2\, r_{N, \alpha}(dx) \right) + \int f_1(x) r_{N, \alpha}(dx)\,.
\eeq
This leads to an approximation of the SK-model which only relies on the alleged geometrical organisation of the relevant\footnote{Again emphasizing that, at the time of writing, the meaning of "relevant" still isn't settled.}
TAP-solutions: remark in fact that for the {\it abstract overlap} it holds
\beq \bea \label{absov}
q_N(\alpha, \alpha') & \defeq \frac{1}{N} \sum_{i=1}^N \tanh(h + \be g_{\alpha, i}) \tanh(h + \be g_{\alpha', i}) \\
& \approx 
\E\left[\tanh(h + \be g_{1, 1})^2 \right] {\bf 1}_{\{\alpha = \alpha'\}}+ \E\left[ \tanh(h + \be g_{1, 1}) \right]^2  {\bf 1}_ { \{ \alpha \neq \alpha'\}}\,,
\eea \eeq
for large enough $N$, by the law of large numbers; this is indeed the "black or white dichotomy" of the REM \cite{D}, or, which is the same, the perpendicularity of TAP-solutions within a 1RSB Ansatz \cite{MPV}. \\

\noindent The above begs the following, natural questions: 
\begin{itemize}
\item[Q1.] what is the law of the overlap $q_N(\alpha, \alpha')$ under the Gibbs sampling \eqref{tapsamp_two} ? 

\item[Q2.] How does the abstract overlap transform under the {\it extensive} cavity dynamics \cite{MPV}? This amount to studying, for $\vare > 0$, the impact of an $\vare$-perturbation of the Hamiltonian \eqref{tapsamp_two}, i.e. to study the limiting Gibbs measure under transformations of the type
\beq \label{cavity_epsi_ham_intro}
f_{\text{REM-TAP}}\left( \alpha \right) \hookrightarrow  f_{\text{REM-TAP}}^{(\vare)}\left( \alpha \right) \defi f_{\text{REM-TAP}}\left(\alpha \right)+ \vare \int \log \cosh(x) \bar r_{N, \alpha}(dx),
\eeq
where 
\beq 
\bar r_{N, \alpha}(dx) \defi \frac{1}{N} \sum_{i=^1}^N \delta_{ \bar g_{\alpha, i}},
\eeq 
and with $\{\bar g_{\alpha, i}\}_{\alpha, i}$ being some {\it fresh} disorder, i.e. a random field of centered, independent Gaussians which are also independent of the {\it reservoir} $\{ g_{\alpha,i}\}_{\alpha, i}$. The limit we are interested in is, of course, the double limit $N \uparrow \infty$, followed by $\vare \downarrow 0$.  
\end{itemize}

\noindent The questions Q1 \& Q2 are addressed below, in general setting.  

\section{The REM in TAP: definition, and main results.} \label{model_new}
We start with some notation: $(S,\mathcal{S})$ denotes a Polish space and $C(S), C_b(S)$ the spaces of all real valued continuous, resp. continous bounded functions on $S$.\\
For $d\in\N$, we denote by $\psd$ the space of Borel probability measures on $S^d$, endowed with the topology of weak convergence of measures. Notice that $\psd$ is Polish itself and we can consider one of the standard metrics (e.g. Prokhorov) that makes it a complete, separable metric space. Given a measure $\nub\in \psd$ and $r>0$ we indicate with $\oball$, resp. $\cball$, the open, resp. closed ball in the metric space $\psd$ with center $\nub$ and ray $r$. Our abstract Hamiltonian, which parallels \eqref{cavity_epsi_ham_intro}, is defined through a continous functional $\Phib:\pst\to\R$ of the form 
\beq\label{form}
\Phib[\rhob]=\Phi_1\left[\rho_1\right]+\EE_{\rho_2}(f_2)
\eeq 
where $\rho_1, \rho_2\in\ps$ are the marginals of $\rhob\in\pst$ on the first, resp. second coordinate, $\Phi_1:\ps\to\R$ is a continous functional and $f_2\in C_b(S)$.
To lighten notation, we shorten $\EE_{\boldsymbol  \rho}( u) \defi \int u(\boldsymbol x) {\boldsymbol  \rho}(d \boldsymbol x)$ for the expectation of a function  $u : S^d \to \R$ w.r.t. a given $\boldsymbol \rho \in \psd$ and $\text{var}_{\boldsymbol  \rho}(u) \defi \EE_{\boldsymbol  \rho}(u^2) -{\EE}_{\boldsymbol  \rho}(u)^2$ for the variance; we also write 
$\boldsymbol  \rho \circ u^{-1}$ for the push-forward of $\boldsymbol \rho$ along a $\boldsymbol \rho$-measurable $u$. Finally we indicate with $\pi_1, \pi_2:S^2\to S$ the natural projections on the first, resp. second coordinate.\\\\
More specifically:

\renewcommand\labelitemi{$\circ$}

\begin{Def} \label{def_Ham}
$\Phib:\pst\to\R$ denotes a functional of the form \eqref{form} with
$$\Phi_1[\rho]\defeq \Phi\left( \EE_{\rho} (g^2) \right) + \EE_\rho(f_1) \qquad \forall \rho\in\ps$$
where
\begin{enumerate}
\item[H1)] $\Phi:\R\to\R$ is a twice differentiable concave function with $\Phi''(x) <0$ for every $x\in \R$;\\
\item[H2)] $g, f_1, f_2 \in C_b(S)$ with $\sup g = 1, \inf g = -1$.\\
\end{enumerate}
The random Hamiltonian of our abstract model in a configuration $\alpha$ of the configuration space $\{1,\dots,2^N\}$ for a finite volume $N\in\N$ is then defined as
\beq \label{Ham}
\Hm_N(\alpha)\defeq N\Phib\left[\Lb\right]
\eeq
where for every $\alpha \in \{1, \cdots, 2^N\}$
\beq \label{emp_meas}
\Lb \defeq \frac{1}{N} \sum\limits_{i=1}^N \delta_{(X_{\alpha,i},Y_{\alpha,i})}
\eeq
are the empirical measures associated to independent sequences of $S^2$-valued, i.i.d. random vectors $\{(X_{\alpha,i},Y_{\alpha,i})\}_{i=1}^N$ defined on a probability space $(\Omega, \mathcal{F}, \PP)$ with common distribution $\mu\smot\gamma$, for some product measure  $\mu\smot\gamma\in\mathcal{M}^+_1(S^2)$.
\end{Def}

\noindent Some words on the assumptions given in the above definition: concavity is imposed  on $\Phi$ to warrant thermodynamical stability, somewhat in line with Plekfa's convergence criterium recalled in Remark \ref{plefkarem}; the boundedness of $g$ is technically convenient, but also tailor-suited for our applications such as the SK-model. The variables $X_{\alpha,\cdot}$ correspond to the random energies associated to the configuration $\alpha$, while the $Y_{\alpha,\cdot}$ encode the disorder necessary to perturbate the Hamiltonian with an extensive cavity dynamics.\\

\noindent To shorten notation, we define

\beq \label{total_linear}
f(x,y)\defeq f_1(x)+f_2(y)
\eeq

\noindent so that the nonlinear functional $\G:\pst\to\R$ in the definition \ref{def_Ham} is
\beq
\G[\rhob] \defeq \Phi\left( \EE_{\rho_1}( g^2 )\right)+\EE_{\rhob}({f}).
\eeq

\noindent Notice that the continuity of the real valued function $\Phi$ and the boundedness of the three functions $g, f_1, f_2$, imply that the functional $\Phib$ is continous on $\pst$ by duality and continous projection.

\vskip0.5cm

\noindent We consider the partition function, free energy and Gibbs measure associated to the Hamiltonian \eqref{Ham}: for finite volume $N\in\N$, these are defined as usual as
\beq \label{free_en}
Z_N \defeq \sum\limits_{\alpha=1}^{2^N} \exp \Hm_N(\alpha), \qquad F_N \defeq \frac{1}{N}\log Z_N,
\eeq 
and for $ \alpha\in\{1,\dots,2^N\}$, 
\beq \label{gibbs_measure}
\Gm \defeq  Z_N^{-1} \exp{\Hm_N(\alpha)}.
\eeq
Finally, given two configurations $\alpha, \alpha'$, we define their abstract overlap as 

\beq \label{overlap_sez3}
\qap \defeq \frac{1}{N}\sum\limits_{i=1}^N g\left(X_{\alpha,i}\right)g\left( \, X_{\alpha',i} \right).
\eeq
Our first result concerns the limiting free energy,  and requires some notation: set
\beq\label{Kdef} 
\mathcal K \defeq \{\boldsymbol \nu\in\pst:\quad H(\boldsymbol \nu\mid\mu{\small \otimes}\gamma)\leq\log2\}\,,
\eeq
with $H(\boldsymbol \nu\mid\mu{\small \otimes}\gamma) \defeq \EE_{\boldsymbol \nu} \log \left( d\boldsymbol \nu / d(\mu{\small \otimes}\gamma) \right)$ being the usual relative entropy (see Section \ref{prop_H} for some relevant properties).

\begin{teor} (Boltzmann-Gibbs principle). \label{teo1}
The infinite volume limit of the free energy \eqref{free_en} exists $\PP$-almost surely, is non-random, and given by
\beq \label{bgvp_rem_tap}
\lim\limits_{N\to\infty} F_N = \sup_{\boldsymbol \nu\in \mathcal K} \,\, \G[\boldsymbol \nu]-H(\nub\mid\mu{\small \otimes}\gamma)+\log 2\,.
\eeq
\end{teor}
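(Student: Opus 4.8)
\emph{Plan.} I would run the classical REM free-energy computation, fed by Sanov's theorem, after a preliminary reduction to a finite-dimensional large deviation principle. The first step is to notice that $\G$ is bounded and continuous and \emph{factors through two moments}: writing $\Psi[\rhob]\defeq\bigl(\EE_{\rho_1}(g^2),\EE_{\rhob}(f)\bigr)$, a continuous map from $\pst$ into the compact rectangle $\mathcal R\defeq[0,1]\times[-L,L]$ with $L\defeq\|f_1\|_\infty+\|f_2\|_\infty$ (here one uses $|g|\le1$), one has $\G=G\circ\Psi$ with $G(t,s)\defeq\Phi(t)+s$ continuous on $\mathcal R$. Setting $\boldsymbol W_{\alpha,i}\defeq\bigl(g(X_{\alpha,i})^2,f(X_{\alpha,i},Y_{\alpha,i})\bigr)\in\mathcal R$ and $\overline{\boldsymbol W}_{N,\alpha}\defeq\Psi[\Lb]=\frac1N\sum_{i=1}^N\boldsymbol W_{\alpha,i}$, the Hamiltonian becomes $\Hm_N(\alpha)=NG(\overline{\boldsymbol W}_{N,\alpha})$, and crucially $\overline{\boldsymbol W}_{N,\alpha}\in\mathcal R$ for \emph{every} $\alpha$ by convexity, so no exponential tightness is needed. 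By Cram\'er's theorem each $\overline{\boldsymbol W}_{N,\alpha}$ obeys an LDP on $\mathcal R$ with a good, convex rate function $I$; Sanov's theorem for $\Lb$ together with the contraction principle along $\Psi$ (and uniqueness of rate functions) gives $I(\boldsymbol w)=\inf\{H(\nub\mid\mu\smot\gamma):\Psi[\nub]=\boldsymbol w\}$. Since $G\circ\Psi=\G$ and $\Psi(\mathcal K)=\{\boldsymbol w:I(\boldsymbol w)\le\log2\}$, one has $\sup_{\nub\in\mathcal K}(\G[\nub]-H(\nub\mid\mu\smot\gamma)+\log2)=\sup_{\boldsymbol w:I(\boldsymbol w)\le\log2}(\log2-I(\boldsymbol w)+G(\boldsymbol w))$, so it suffices to prove $F_N$ converges a.s.\ to the latter. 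The second structural ingredient --- the REM input --- is that the $2^N$ sequences $\{(X_{\alpha,i},Y_{\alpha,i})\}_i$ are \emph{independent across} $\alpha$; hence for $\boldsymbol w\in\mathcal R$, $r>0$ the count $\mathcal C_N(\boldsymbol w,r)\defeq\#\{\alpha:\overline{\boldsymbol W}_{N,\alpha}\in B(\boldsymbol w,r)\}$ is a $\mathrm{Bin}(2^N,p_N)$ variable with $p_N=\PP(\overline{\boldsymbol W}_{N,1}\in B(\boldsymbol w,r))$, and Cram\'er controls $\tfrac1N\log p_N$ by $-\inf I$ over the ball.

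\emph{Lower bound.} The supremum above is attained, on the compact $\mathcal R$, at some $\boldsymbol w_\star$ with $I(\boldsymbol w_\star)\le\log2$ ($\log2-I+G$ being upper semicontinuous). If $I(\boldsymbol w_\star)<\log2$, I would fix a small $r>0$: then $\E\,\mathcal C_N(\boldsymbol w_\star,r)=2^Np_N\ge e^{N(\log2-I(\boldsymbol w_\star)-o(1))}\to\infty$ by the Cram\'er lower bound on the open ball, and a Chernoff estimate gives $\PP\bigl(\mathcal C_N(\boldsymbol w_\star,r)\le\tfrac12\E\,\mathcal C_N(\boldsymbol w_\star,r)\bigr)\le e^{-c\,\E\,\mathcal C_N(\boldsymbol w_\star,r)}$, summable in $N$, so Borel--Cantelli forces $\mathcal C_N(\boldsymbol w_\star,r)\ge\tfrac12\E\,\mathcal C_N(\boldsymbol w_\star,r)$ eventually, a.s.; restricting $Z_N$ to those $\alpha$ and using $\Hm_N(\alpha)\ge N\inf_{B(\boldsymbol w_\star,r)}G$ yields $\liminf_NF_N\ge\log2-I(\boldsymbol w_\star)+\inf_{B(\boldsymbol w_\star,r)}G$, and $r\downarrow0$ with continuity of $G$ finishes this case. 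If $I(\boldsymbol w_\star)=\log2$, interpolate $\boldsymbol w_t\defeq(1-t)\boldsymbol w_0+t\boldsymbol w_\star$ with $\boldsymbol w_0\defeq\Psi[\mu\smot\gamma]$ (where $I$ vanishes): convexity gives $I(\boldsymbol w_t)\le t\log2<\log2$, the previous case applies to $\boldsymbol w_t$, and $t\uparrow1$ recovers $\liminf_NF_N\ge\log2-I(\boldsymbol w_\star)+G(\boldsymbol w_\star)$.

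\emph{Upper bound.} This is the step I expect to require care, because one must beat the first moment of $Z_N$ (which would replace $\mathcal K$ by all of $\pst$). Fix $\delta>0$ and cover the compact $\mathcal R$ by finitely many balls $B(\boldsymbol w_k,r)$, $k\le m$, with $r$ small enough that $\sup_{\overline{B(\boldsymbol w_k,r)}}G\le G(\boldsymbol w_k)+\delta$; let $\boldsymbol w_k^\ast$ attain $I_k\defeq\inf_{\overline{B(\boldsymbol w_k,r)}}I$ ($I$ good). Since every $\overline{\boldsymbol W}_{N,\alpha}\in\mathcal R$, $Z_N\le\sum_{k\le m}\mathcal C_N(\boldsymbol w_k,r)\,e^{N(G(\boldsymbol w_k)+\delta)}$. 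By the Cram\'er upper bound $\E\,\mathcal C_N(\boldsymbol w_k,r)\le e^{N(\log2-I_k+o(1))}$, hence by Markov and Borel--Cantelli, a.s.\ and eventually in $N$: $\mathcal C_N(\boldsymbol w_k,r)=0$ when $I_k>\log2$, and $\mathcal C_N(\boldsymbol w_k,r)\le e^{N(\log2-I_k+\delta)}$ when $I_k\le\log2$. Summing and using $I_k=I(\boldsymbol w_k^\ast)\le\log2$, $G(\boldsymbol w_k)\le G(\boldsymbol w_k^\ast)+\delta$, one obtains $\limsup_NF_N\le\log2+3\delta+\sup_{\boldsymbol w:I(\boldsymbol w)\le\log2}(G(\boldsymbol w)-I(\boldsymbol w))$ a.s.; letting $\delta\downarrow0$ matches the lower bound, and translating back through the identity $I(\boldsymbol w)=\inf\{H(\nub\mid\mu\smot\gamma):\Psi[\nub]=\boldsymbol w\}$ gives \eqref{bgvp_rem_tap}.

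\emph{Main obstacle.} Everything is routine once the two structural facts are in place; the genuine content is the REM truncation, i.e.\ the restriction of the supremum to $\mathcal K$. This is exactly where the upper bound must improve on $\E Z_N$, and the mechanism is the estimate $\mathcal C_N(\boldsymbol w_k,r)\lesssim\max\bigl(1,\E\,\mathcal C_N(\boldsymbol w_k,r)\bigr)e^{o(N)}$ with high probability, which hinges entirely on the independence of the disorder across configurations (so that $\mathcal C_N$ is a sum of independent Bernoulli variables). The remaining points --- the boundary case $\{I=\log2\}$, the passage from convergence in probability to almost sure convergence via Borel--Cantelli (licensed by the exponential decay of all the relevant probabilities), and the identification of the Cram\'er rate with the contracted relative entropy --- are standard.
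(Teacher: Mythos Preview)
Your argument is correct and essentially complete; the REM mechanism you isolate (independence across $\alpha$, binomial concentration for the counts, Markov/Chernoff plus Borel--Cantelli) is exactly the engine the paper uses. The route, however, is genuinely different. The paper works \emph{directly} on $\pst$ with Sanov's theorem: it covers compact sublevel sets $\mK_{a_n}$ of the relative entropy by finitely many small balls in the space of measures, bounds the counting variables $\mathcal A_N(B)$ via Sanov, and for the lower bound proves a second-moment estimate $\mathbb V\mathrm{ar}\,\mathcal A_N(B)\le e^{-N\delta}(\E\,\mathcal A_N(B))^2$ (their Lemma~4.2), which handles the boundary case $H(\nub\mid\mu\smot\gamma)=\log2$ in one stroke because any small ball around such a $\nub$ already dips into $\{H<\log2\}$. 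Your approach instead exploits the specific structure of $\G$: since it factors through the two moments $\Psi[\rhob]=(\EE_{\rho_1}(g^2),\EE_{\rhob}(f))$, you collapse the whole problem to $\R^2$, replace Sanov by Cram\'er, and use the a~priori compactness of the rectangle $\mathcal R$ in lieu of entropy-sublevel compactness. This is more elementary and avoids the infinite-dimensional topology entirely (no need for the $\mK_{a_n}$ device), but it is tailored to this particular $\G$; the paper's argument would go through unchanged for any continuous $\Phib$ on $\pst$. Your interpolation trick for the boundary $\{I=\log2\}$ is a clean substitute for the paper's second-moment lemma, and the identification of the Cram\'er rate with the contracted entropy via uniqueness of rate functions is the right way to translate back to \eqref{bgvp_rem_tap}.
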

A complete solution of our abstract models thus requires a discussion of the Boltzmann-Gibbs variational principle \eqref{bgvp_rem_tap}. This will be achieved by relating it to a simpler, Parisi-like variational principle in finite dimensions.  To see how this comes about, we recall from \cite{BK1} that for all $f\in \mathcal{L}\left( S^2, \mu\smot\gamma\right)$ with
\beq\label{finite_MGF}
\mathcal{L}\left( S^2, \mu\smot\gamma\right) \defeq \left\{ u\in C\left(S^2\right): \quad \EE_{\mu\smot\gamma}\left(e^{\lambda u}\right) < \infty \,\,\, \forall \lambda \in \R\right\},
\eeq
almost surely it holds
\beq
\bea \label{sol_remcav}
\lim\limits_{N\to\infty} \frac{1}{N}\log \sum\limits_{\alpha=1}^{2^N} \exp{N\EE_{\Lb}(f)} & = \sup_{\boldsymbol \nu\in K} \left\{ \EE_{\mu{\small \otimes}\gamma}(f)- H(\boldsymbol \nu\,\mid\mu{\small \otimes}\gamma) \right\}\\
& = \inf_{0\leq m \leq 1} \left\{ \frac{1}{m} \log \EE_{\mu{\small \otimes}\gamma}(\exp{\,mf})+\frac{\log2}{m} \right\}. 
\eea \eeq

The first equality is the Boltzmann-Gibbs principle given in Theorem \ref{teo1} for a Hamiltonian of the form \eqref{Ham} without nonlinear term. The second equality, in full agreement with the Parisi theory \cite{MPV}, establishes a duality between the Gibbs principle and a finite-dimensional minimization problem. We shall call the target function in the minimization problem a {\it Parisi function}; its (unique, cfr. \cite{BK1}) minimizer $\bam$ on $[0,1]$ gives the limiting Gibbs measure as the one whose Radon-Nikodym derivative with respect to $\mu{\small \otimes}\gamma$ is given by the Boltzmann factor $\exp{\bam f}\equiv\exp{\bam (f_1\circ\pi_1+f_2\circ\pi_2)}$.\\
Our second main result provides the analogous duality principle for the nonlinear Hamiltonian \eqref{Ham}. Specifically, defining for every $(q,m)\in[0,1]^2$
\beq \bea
Z^{q,m} & \defeq \EE_{\mu{\small \otimes}\gamma}\left( \exp{m\left[\Phi'(q)g^2\circ\pi_1+f\right]}\right)\\
& = \int \exp{m\left[ f_1(x)+\Phi'(q)g^2(x)\right]}\, \mu(dx) \cdot \int \exp{m f_2(y)}\, \gamma(dy)
\eea \eeq
and 
\beq \label{Parisi}
\Pa(q,m) \defeq \Phi(q)-q\Phi'(q)+\frac{1}{m}\left( \log Z^{q,m} + \log2\right),\,
\eeq
the following holds.

\begin{teor} (Parisi principle). \label{teo2} It holds: 
$$\lim\limits_{N\to\infty} F_N = \inf_{(q,m)\in[0,1]^2} \Pa(q,m)+\log2\,,$$
$\PP$-almost surely. 
\end{teor}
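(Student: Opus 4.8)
The plan is to derive Theorem~\ref{teo2} from the Boltzmann--Gibbs principle \eqref{bgvp_rem_tap} of Theorem~\ref{teo1} by replacing the nonlinear term $\Phi\big(\EE_{\nu_1}(g^2)\big)$ with the infimum of its supporting (tangent) lines, which turns the right-hand side of \eqref{bgvp_rem_tap} into a $\sup$--$\inf$ expression that can then be reversed. Since $\Phi\in C^2$ has $\Phi''<0$ everywhere by~(H1), and $|g|\le 1$ by~(H2) so that $\EE_{\nu_1}(g^2)\in[0,1]$ for every $\nub\in\pst$, the function $q\mapsto\Phi(q)+\Phi'(q)\big(\EE_{\nu_1}(g^2)-q\big)$ has derivative $\Phi''(q)\big(\EE_{\nu_1}(g^2)-q\big)$ and is therefore unimodal on $[0,1]$, with a unique minimum at $q=\EE_{\nu_1}(g^2)$, of value $\Phi\big(\EE_{\nu_1}(g^2)\big)$. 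Setting
\[
G(\nub,q)\;\defeq\;\Phi(q)-q\,\Phi'(q)+\Phi'(q)\,\EE_{\nu_1}(g^2)+\EE_{\nub}(f)-H(\nub\mid\mu\smot\gamma),
\]
we thus get $\G[\nub]-H(\nub\mid\mu\smot\gamma)=\inf_{q\in[0,1]}G(\nub,q)$ for every $\nub\in\pst$, so that, by Theorem~\ref{teo1},
\[
\lim_{N\to\infty}F_N \;=\; \sup_{\nub\in\mathcal K}\;\inf_{q\in[0,1]} G(\nub,q)\;+\;\log2 .
\]

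The crux is the interchange of $\sup_{\nub}$ and $\inf_{q}$, for which I would invoke Sion's minimax theorem on $\mathcal K\times[0,1]$, with $[0,1]$ the compact factor. All hypotheses are readily verified: $\mathcal K$ is a nonempty convex subset of $\pst$, being a sublevel set of the convex functional $\nub\mapsto H(\nub\mid\mu\smot\gamma)$ and containing $\mu\smot\gamma$; for each fixed $q$ the map $\nub\mapsto G(\nub,q)$ is concave (affine in $\nub$ apart from the concave term $-H$) and weakly upper semicontinuous, because $\nub\mapsto\EE_{\nu_1}(g^2)$ and $\nub\mapsto\EE_{\nub}(f)$ are weakly continuous by the boundedness of $g,f_1,f_2$ in~(H2) while $\nub\mapsto H(\nub\mid\mu\smot\gamma)$ is weakly lower semicontinuous; for each fixed $\nub$ the map $q\mapsto G(\nub,q)$ is continuous and, by the sign of $\partial_q G(\nub,q)=\Phi''(q)\big(\EE_{\nu_1}(g^2)-q\big)$ noted above, unimodal, hence quasi-convex, on $[0,1]$; and $G$ is bounded, hence real-valued, on $\mathcal K\times[0,1]$, since the relevant integrands are bounded and $-H$ takes values in $[-\log2,0]$ there. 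Sion's theorem then gives
\[
\sup_{\nub\in\mathcal K}\;\inf_{q\in[0,1]} G(\nub,q)\;=\;\inf_{q\in[0,1]}\;\sup_{\nub\in\mathcal K} G(\nub,q) .
\]

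It remains to evaluate the inner supremum for fixed $q\in[0,1]$. Writing $u_q\defeq\Phi'(q)\,g^2\circ\pi_1+f$, a bounded continuous function on $S^2$ (hence in $\mathcal{L}(S^2,\mu\smot\gamma)$) with $\EE_{\nub}(u_q)=\Phi'(q)\EE_{\nu_1}(g^2)+\EE_{\nub}(f)$, one has
\[
\sup_{\nub\in\mathcal K}G(\nub,q)\;=\;\Phi(q)-q\,\Phi'(q)+\sup_{\nub\in\mathcal K}\Big\{\EE_{\nub}(u_q)-H(\nub\mid\mu\smot\gamma)\Big\} .
\]
Applying the linear-Hamiltonian Boltzmann--Gibbs principle recalled in \eqref{sol_remcav} (from \cite{BK1}) to $u_q$ identifies the last supremum with $\inf_{0\le m\le 1}\big\{\tfrac1m\log\EE_{\mu\smot\gamma}(e^{m u_q})+\tfrac{\log2}{m}\big\}=\inf_{0\le m\le 1}\big\{\tfrac1m\log\zqm+\tfrac{\log2}{m}\big\}$, the second equality being the definition of $\zqm$; substituting this back, taking $\inf_{q\in[0,1]}$, and recalling the definition \eqref{Parisi} of $\Pa$ — with the elementary bookkeeping of the constants $\Phi(q)-q\Phi'(q)$ and $\log 2$ — one recovers $\lim_{N\to\infty}F_N=\inf_{(q,m)\in[0,1]^2}\Pa(q,m)+\log2$, which is the assertion of Theorem~\ref{teo2}. (Throughout, $\tfrac1m(\cdots)$ is read with the usual convention that its $m\downarrow0$ limit equals $+\infty$, so the endpoint $m=0$ never attains the infimum.)

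I expect the minimax exchange to be the only genuinely delicate point, since it is the step that replaces the infinite-dimensional, non-locally-compact optimisation over $\mathcal K$ by the finite-dimensional one of Theorem~\ref{teo2}; one must make sure the semicontinuity and quasi-convexity hypotheses of Sion's theorem really hold, and this is precisely where the boundedness of $g,f_1,f_2$ from~(H2) and the compactness of the auxiliary interval $[0,1]$ are used. Everything else — the supporting-line linearisation of $\Phi$, the membership $u_q\in\mathcal L(S^2,\mu\smot\gamma)$, and the duality \eqref{sol_remcav} itself — is routine. An alternative avoiding Sion's theorem is to exhibit the saddle point directly: if $(q^{\star},m^{\star})$ realises the infimum of $\Pa$ on $[0,1]^2$, then the probability measure on $S^2$ with density proportional to $\exp\!\big(m^{\star}\big[\Phi'(q^{\star})\,g^2\circ\pi_1+f\big]\big)$ relative to $\mu\smot\gamma$ satisfies the self-consistency $\EE_{\nu_1}(g^2)=q^{\star}$ and realises the supremum in \eqref{bgvp_rem_tap}; invoking Sion's theorem is, however, shorter.
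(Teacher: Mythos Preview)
Your argument is correct and takes a genuinely different route from the paper. The paper does \emph{not} interchange $\sup$ and $\inf$; instead it works constructively: starting from a minimiser $(\bar q,\bar m)$ of $\Pa$, it shows in Lemma~\ref{signs_derivatives} that the generalized Gibbs measure $\gqmb$ is feasible ($H(\gqmb\mid\mu\smot\gamma)\le\log2$) and satisfies the self-consistency $\EE_{\gqmub}(g^2)=\bar q$, and then in Proposition~\ref{prop2} proves by a direct inequality---combining the concavity tangent bound $\Phi(x)\le\Phi(\bar q)+\Phi'(\bar q)(x-\bar q)$ with the Donsker--Varadhan variational inequality for $H$ (Proposition~\ref{ub_H_Gu})---that $\gqmb$ is the unique maximiser in the Boltzmann--Gibbs principle~\eqref{bgvp_rem_tap}. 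Your approach linearises $\Phi$ the same way but then packages the exchange as an application of Sion's theorem, after which the remaining linear problem is dispatched by the known duality~\eqref{sol_remcav}. The trade-offs are clear: your route is shorter and more conceptual, but it relies on the \emph{linear} Parisi duality from~\cite{BK1} as a black box; the paper's route is self-contained and, crucially, \emph{identifies} the optimal measure $\nubb=\gqmb$ explicitly, which is indispensable for the subsequent results (Theorem~\ref{teo3} and Proposition~\ref{overlap_law}). Amusingly, the ``alternative avoiding Sion's theorem'' you sketch in your last sentence is exactly the paper's proof.
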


When comparing the \emph{Parisi function} \eqref{Parisi} with its counterpart \eqref{sol_remcav} for the linear models one observes, in particular, the appearance of the term $\Phi(q)- q \Phi'(q)$. 
We will see in the course of the proof that such corrections,  which are constituent parts of the Parisi free energy for the SK-model,  play the role of \emph{Lagrange multipliers} accounting for the in-built nonlinearities.  \\

\noindent We now present our main results concerning the Gibbs measure.  First of all,  we note that the family of \textit{generalized Gibbs measures} 
$$\left\{\gqm \right\}_{(q,m)\in[0,1]^2}  \subset \pst$$ defined through their Radon-Nikodym derivatives with respect to $\mu{\small \otimes}\gamma$:
\beq \label{radon_candid}
\frac{d\gqm}{d(\mu{\small \otimes}\gamma)}(x,y)\defeq\frac{\exp m\left[\Phi'(q)\,g^2(x)+f(x,y)\right]}{Z^{q,m}}\,,
\eeq satisfies
\beq \label{relen_gqm} 
\bea
\log Z^{q,m}&=\EE_{\gqm}\left( m\left[\Phi'(q)\, g^2\circ\pi_1+f\right]\right)-H(\gqm\mid\mu{\small \otimes}\gamma)\\
&=m\left[\Phi'(q)\EE_{\gqmu}\left(g^2\right)+\EE_{\gqm}(f)\right]-H(\gqm\mid\mu{\small \otimes}\gamma)\,
\eea
\eeq
being $\gqmu\in\ps$ the first marginal of $\gqm$.
Moreover, one can easily compute the partial derivatives of $\Pa$ to see that
\beq \label{der} 
\bea
& \partial_m \Pa(q,m) = \frac{1}{m^2}\left[ H( \gqm \mid \mu{\small \otimes}\gamma) - \log2 \right], \\
& \partial_q \Pa(q,m) = \Phi''(q)\left[ \EE_{\gqmu}\left(g^2\right)-q\right].
\eea
\eeq
Through equations \eqref{relen_gqm}, \eqref{der} we will relate the Boltzmann-Gibbs principle \eqref{bgvp_rem_tap} with the Parisi function \eqref{Parisi}, showing that the solution of the latter is given by $\gqmb$ where $(\bar q, \bar m)$ minimizes $\Pa:[0,1]^2\to\R$.

It is furthermore well-known that the Poisson-Dirichlet law for the {\it pure states} appears naturally as the weak limit of the Gibbs measure associated to a classical REM in low temperature.  Our third main result, Theorem \ref{teo3} below, aligns with this alleged universality.

In order to formulate the statement,  we  point out that for our abstract models, {\it low temperature} corresponds to the situation where the parameter $\bar m$ which achieves the minimum in Parisi principle is such that $\bar m<1$. As will become clear below,  if the minimum point $(\bar q, \bar m)\in[0,1]^2$ of a Parisi function \eqref{Parisi} is such that $\bar m<1$,  then the measure 
$$\nuop \equiv  \boldsymbol \nu^{\bar q, \bar m} \in\ps$$ 
is the optimal measure for the Boltzmann-Gibbs principle and satisfies 
\beq \label{conditions}
\begin{cases} 
H(\nuop\mid\mu\smot\gamma) = \bar m\left[ \Phi'( \bar q)\bar q + \EE_{\nuop}(f)\right] -\log Z^{\bar q, \bar m}=\log2, \\
\EE_{\bar \nu_1}( g^2 )= \bar q,
\end{cases} \eeq
where $\bar \nu_1 \equiv \nu_1^{\bar q, \bar m}$ is the first marginal of $\nubb$; in particular,  in low temperature the side constraint on the relative entropy is saturated. 
\begin{teor} (Onset of Derrida-Ruelle cascades). \label{teo3}
Assume that at least one of the measures ${\nubb}\circ(g\circ\pi_1-\bar q)^{-1}$ or ${\nubb}\circ(f-\EE_{\nubb}(f))^{-1}$ has a density w.r.t. the Lebesgue measure on $\R$. Then, for a system in low temperature, the point process $\left\{ \mathcal{G}_N(\alpha) \right\}_{\alpha \leq 2^N}$ associated to the Gibbs measure \eqref{gibbs_measure} converges weakly as $N\to\infty$ to a Poisson-Dirichlet point process with parameter $\bam$.
\end{teor}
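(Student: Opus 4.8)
\emph{Proof proposal.} By Definition~\ref{def_Ham} the energies $\{\Hm_N(\alpha)\}_{\alpha\le 2^N}$ are i.i.d., so the model is a ``generalized REM'' and the statement will follow from classical extreme-value theory once we control the upper tail of a single energy at the critical scale. The plan is to exhibit a deterministic sequence $A_N$ (of the form $A_N=N(\lim_N F_N)-\tfrac1{2\bam}\log N+O(1)$) and a constant $K>0$ such that
\beq\label{planA}
2^N\,\PP\big(\Hm_N(1)\ge A_N+a\big)\ \xrightarrow[N\to\infty]{}\ K\,e^{-\bam a}\,,\qquad a\in\R,
\eeq
locally uniformly in $a$. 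Granting \eqref{planA}, independence gives that the number of $\alpha$ with $\Hm_N(\alpha)\ge A_N+a$ is asymptotically Poisson$(Ke^{-\bam a})$, whence $\sum_{\alpha}\delta_{\Hm_N(\alpha)-A_N}$ converges weakly to a Poisson point process $\Xi=\sum_i\delta_{\xi_i}$ on $\R$ with intensity $Ke^{-\bam x}dx$. Since the system is in low temperature, $\bam<1$, so $\int_{-\infty}^{0}e^{(1-\bam)x}dx<\infty$; a standard truncation argument (as in the classical analysis of the REM) then shows that low-lying configurations contribute negligibly to $Z_N$, so that the ranked Gibbs masses $\{\Gm\}$ converge to $\{e^{\xi_i}/\sum_j e^{\xi_j}\}_i$, which is by definition a Poisson--Dirichlet process of parameter $\bam$. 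This last passage being standard, the content of the theorem is \eqref{planA}.

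To prove \eqref{planA} I would combine Sanov's theorem with a Bahadur--Rao sharpening. The exponential rate of $\PP(\G[\Lb]\ge t)$ is $I(t)\defeq\inf\{H(\nub\mid\mu\smot\gamma):\ \G[\nub]\ge t\}$; because $\G$ is concave through $\Phi$ while $H(\cdot\mid\mu\smot\gamma)$ is strictly convex, the minimiser $\nub_t$ is \emph{unique}, has an Euler--Lagrange form $d\nub_t/d(\mu\smot\gamma)\propto\exp\{\la_t[\Phi'(q_t)g^2\circ\pi_1+f]\}$ with $q_t=\EE_{(\nub_t)_1}(g^2)$, and at the critical value $t_c\defeq\lim_N F_N$ the saturation relations \eqref{conditions} identify $\nub_{t_c}=\nuop$, $q_{t_c}=\bq$, $\la_{t_c}=\bam$ and $I(t_c)=\log2$. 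Differentiating this constrained problem (envelope theorem, the identities \eqref{relen_gqm}--\eqref{der} taking care of the bookkeeping) gives $I\in C^1$ near $t_c$ with $I'(t_c)=\bam$; this is precisely the Lagrange-multiplier role of the correction $\Phi(q)-q\Phi'(q)$ flagged after Theorem~\ref{teo2}, and it is where the parameter $\bam$ of the limiting cascade is produced, via the local slope $e^{-I'(t_c)a}$ of the intensity. To turn the rate into a sharp asymptotic one tilts the i.i.d.\ samples by $\nuop$ and runs a local central limit theorem: after Taylor-expanding $\Phi$ at $\bq$ the per-site energy is, to leading order, the linear observable $\Phi'(\bq)g^2\circ\pi_1+f$, while $\EE_{\Lb}(g^2\circ\pi_1)$ concentrates at $\bq$ with $O(N^{-1/2})$ fluctuations under the tilt, so the quadratic remainder $\tfrac N2\Phi''(\bq)\big(\EE_{\Lb}(g^2\circ\pi_1)-\bq\big)^2$ is $O(1)$ per configuration --- by the marking theorem it only rescales $K$ and leaves the rate $\bam$ untouched. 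This yields $\PP(\Hm_N(1)\ge Nt_c+s)\sim C N^{-1/2}e^{-N\log2}e^{-\bam s}$ locally uniformly in $s$, hence \eqref{planA} with the stated $A_N$. The hypothesis of Theorem~\ref{teo3} --- absolute continuity of ${\nubb}\circ(g\circ\pi_1-\bq)^{-1}$ or of ${\nubb}\circ(f-\EE_{\nubb}(f))^{-1}$ --- enters exactly here: it makes the effective per-site energy spread out (non-lattice) under the tilt $\nuop$, which is what the local CLT in the Bahadur--Rao step requires; without it the upper tail could decay with arithmetic oscillations and the limit would fail to be Poisson.

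The step I expect to be the main obstacle is this quantitative one: upgrading Sanov's theorem to a \emph{sharp} ($N^{-1/2}$ prefactor, clean $e^{-\bam s}$ factor) and \emph{locally uniform} tail estimate for a \emph{nonlinear} functional of an empirical measure over a general Polish space. Everything around it is either immediate (the i.i.d.\ reduction), or classical (extreme-value theory for i.i.d.\ sequences, convergence of the partition function in the regime $\bam<1$, and the identification ``normalised exponentials of a PPP with exponential intensity $=$ Poisson--Dirichlet''); moreover the linear version of the estimate is essentially the content of \cite{BK1,BK2}, so the genuinely new work is to check that the nonlinearity $\Phi$ --- which, as above, surfaces only in the Lagrange-multiplier correction and in the prefactor --- does not disturb the value $\bam$ of the parameter.
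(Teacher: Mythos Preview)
Your plan is correct and is essentially the paper's own proof: reduce by i.i.d.\ to a single-energy tail estimate, tilt from $\mu\smot\gamma$ to $\nubb$ (the paper's Lemma~\ref{change_measure}), Taylor-expand $\Phi$ at $\bq$, and apply an Edgeworth/local-CLT (the paper invokes \cite[Theorem~19.5]{BRR}) to get the $N^{-1/2}$ prefactor and the $e^{-\bam a}$ rate, with the density hypothesis entering exactly as you say; the paper packages the extreme-value step via Kallenberg's criterion on $\E[\Xi_N(K)]$ for compact $K$ rather than half-line tails, but this is the same computation, and your centering $A_N=N\lim_N F_N-\tfrac{1}{2\bam}\log N+O(1)$ matches the paper's explicit $N[\Phi(\bq)+\EE_{\nubb}(f)]+\om$.
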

Let us furthermore denote by $$\langle O_N({\alpha,\alpha'}) \rangle_N \defeq Z_N^{-2}\sum_{\alpha, \alpha'} O_N(\alpha,\alpha') \Gm \Gmp$$ the average with respect to the {\it replicated Gibbs measure} of a quantity $O_N(\alpha,\alpha')$ depending on two configurations $\alpha, \alpha'$; the following then holds for the limiting law of the abstract overlap \eqref{overlap_sez3}.

\begin{prop} (Overlap concentration). \label{overlap_law}
Let 
\beq \label{two_over}
\bar q = \int g(x)^2 {\bar \nu_1}(dx),  \qquad \bar q_0 \defeq \left[ \int g(x) {\bar \nu _1}(dx) \right]^2\,.
 \eeq
Then, under the assumptions of Theorem \ref{teo3}, it holds
\beq \label{overlap_limit_eq}
\lim \limits_{N\to\infty} \repav{\delta_{\alpha=\alpha'}(\qap- \bar q)^2 }= 0,
\eeq
\beq \label{overlap_limit_neq}
\lim\limits_{N\to\infty} \repav{ \delta_{\alpha\neq\alpha'}\left( \qap - \bar q_0 \right)^2 } = 0.
\eeq
\end{prop}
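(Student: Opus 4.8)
The plan is to reduce both \eqref{overlap_limit_eq} and \eqref{overlap_limit_neq} to two \emph{one-replica} concentration statements: that, under the Gibbs measure \eqref{gibbs_measure}, the functionals $\alpha\mapsto\EE_{\Lb\circ\pi_1^{-1}}(g^{2})=\qaa$ and $\alpha\mapsto\EE_{\Lb\circ\pi_1^{-1}}(g)$ concentrate on $\bar q$ and on $\EE_{\bar\nu_1}(g)$, respectively. Write $\langle G(\alpha)\rangle_N\defeq\sum_\alpha G(\alpha)\Gm$ for the ordinary Gibbs average of a one-configuration function $G$. For the diagonal term the reduction is immediate: on $\{\alpha=\alpha'\}$ one has $\qap=\EE_{\Lb\circ\pi_1^{-1}}(g^{2})$, and $0\le\Gm\le1$ forces $\Gm^{2}\le\Gm$, whence
\[
\repav{\delta_{\alpha=\alpha'}(\qap-\bar q)^{2}}=\E\sum_\alpha\big(\EE_{\Lb\circ\pi_1^{-1}}(g^{2})-\bar q\big)^{2}\Gm^{2}\ \le\ \E\big\langle\big(\EE_{\Lb\circ\pi_1^{-1}}(g^{2})-\bar q\big)^{2}\big\rangle_N .
\]

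For the off-diagonal term I would exploit that $\Hm_N(\beta)$, hence the Gibbs weight $\Gm$, depends on the disorder of a configuration $\beta$ only through $\boldsymbol L_{N,\beta}$. Conditioning on $\mathcal{A}\defeq\sigma\big(\{\boldsymbol L_{N,\beta}\}_\beta\big)$, for each fixed $\alpha\neq\alpha'$ the sequences $(X_{\alpha,i})_{i\le N}$ and $(X_{\alpha',i})_{i\le N}$ are conditionally independent, each being a uniformly random ordering of a prescribed multiset; a direct second-moment computation (a sampling-without-replacement bound, using $\sup|g|=1$) gives $\E\big[(\qap-\EE_{\Lb\circ\pi_1^{-1}}(g)\,\EE_{\Lbp\circ\pi_1^{-1}}(g))^{2}\mid\mathcal{A}\big]\le C/N$ uniformly. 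Averaging against the $\mathcal{A}$-measurable weights $\Gm\Gmp$, using $|ab-c^{2}|\le|a-c|+|b-c|$ for $a,b,c\in[-1,1]$ and $\bar q_0=(\EE_{\bar\nu_1}(g))^{2}$ from \eqref{two_over}, one is left with
\[
\repav{\delta_{\alpha\neq\alpha'}(\qap-\bar q_0)^{2}}\ \le\ \frac{2C}{N}+8\,\E\big\langle\big(\EE_{\Lb\circ\pi_1^{-1}}(g)-\EE_{\bar\nu_1}(g)\big)^{2}\big\rangle_N .
\]
Thus it remains to prove the two one-replica concentration statements.

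These I would obtain by the classical tilting/convexity argument. For $t\in\R$ consider the model obtained from Definition \ref{def_Ham} by replacing $\Phi$ with $\Phi^{(t)}(x)\defeq\Phi(x)+tx$ (to access the $g^{2}$-statistic), resp.\ $f_1$ with $f_1+tg$ (to access the $g$-statistic): both replacements preserve all hypotheses of Definition \ref{def_Ham} --- since $(\Phi^{(t)})''=\Phi''<0$, resp.\ $f_1+tg\in C_b(S)$ --- so Theorem \ref{teo2} applies to the tilted model, whose free energy $F_N^{(t)}$ converges $\PP$-a.s.\ to $F^{(t)}\defeq\inf_{(q,m)\in[0,1]^{2}}\Pa^{(t)}(q,m)+\log2$, with $\Pa^{(t)}$ the Parisi function \eqref{Parisi} of the tilted model. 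Since $F_N^{(t)}=\tfrac1N\log\sum_\alpha\exp\big(\Hm_N(\alpha)+tN\,\EE_{\Lb\circ\pi_1^{-1}}(g^{2})\big)$ is convex in $t$, so is $F^{(t)}$, and $\tfrac{d}{dt}F_N^{(t)}\big|_{0}=\langle\EE_{\Lb\circ\pi_1^{-1}}(g^{2})\rangle_N$. Differentiating \eqref{Parisi} directly --- in the first tilt the correction $\Phi^{(t)}(q)-q(\Phi^{(t)})'(q)=\Phi(q)-q\Phi'(q)$ is $t$-independent --- and invoking the envelope theorem gives $\tfrac{d}{dt}F^{(t)}\big|_{0}=\EE_{\nu_1^{\bar q,\bar m}}(g^{2})=\bar q$ by \eqref{conditions}, and analogously $\tfrac{d}{dt}F^{(t)}\big|_{0}=\EE_{\bar\nu_1}(g)$ for the second tilt. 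A Chernoff bound then closes the argument: for $t>0$,
\[
\big\langle\1\{\EE_{\Lb\circ\pi_1^{-1}}(g^{2})\ge\bar q+s\}\big\rangle_N\ \le\ \exp\!\big(N[\,F_N^{(t)}-F_N^{(0)}-t(\bar q+s)\,]\big),
\]
whose exponent divided by $N$ tends $\PP$-a.s.\ to $F^{(t)}-F^{(0)}-t\bar q-ts$, which is $<0$ for small $t>0$ by convexity and differentiability of $F^{(\cdot)}$ at $0$ with derivative $\bar q$. The symmetric lower-tail bound together with $\EE_{\Lb\circ\pi_1^{-1}}(g^{2})\in[0,1]$ yield $\langle(\EE_{\Lb\circ\pi_1^{-1}}(g^{2})-\bar q)^{2}\rangle_N\to0$ $\PP$-a.s., hence in $L^{1}(\PP)$ by dominated convergence; the same argument with the second tilt gives $\langle(\EE_{\Lb\circ\pi_1^{-1}}(g)-\EE_{\bar\nu_1}(g))^{2}\rangle_N\to0$. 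Substituting these into the two displays above proves \eqref{overlap_limit_eq} and \eqref{overlap_limit_neq}.

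The step I expect to be the main obstacle is the differentiability of $t\mapsto F^{(t)}$ at $0$ used in the Chernoff bound --- equivalently, that all minimisers $(q,m)$ of the Parisi function \eqref{Parisi} share the same value of $\EE_{\nu_1^{q,m}}(g^{2})$ (which, by \eqref{der} and $\Phi''<0$, equals the $q$-coordinate) and of $\EE_{\nu_1^{q,m}}(g)$, so that the one-sided derivatives of $F^{(\cdot)}$ agree at $0$. This is precisely where the low-temperature regime $\bar m<1$ and the absolute-continuity hypothesis of Theorem \ref{teo3} enter; I would extract the required uniqueness (or stability) of the Parisi minimiser, and the identification of the optimal Boltzmann--Gibbs measure as $\nubb$ with the saturated constraints \eqref{conditions}, from the structural analysis underlying the proof of Theorem \ref{teo3}. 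Once this is granted, the sampling-without-replacement estimate and the convexity/Chernoff machinery are entirely routine, and the proposition follows.
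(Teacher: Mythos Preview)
Your proposal is correct and takes a genuinely different route from the paper.

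\medskip

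\textbf{Comparison.} For the diagonal term both you and the paper reduce to one-replica concentration of $\alpha\mapsto\EE_{\Lb\circ\pi_1^{-1}}(g^2)$ around $\bar q$; the paper obtains this from the large-deviation localisation of $\Lb$ near $\bar\nub$ (its Lemma~\ref{ub_lemma} and a consequence, Lemma~\ref{rep_av_F_as}), you via tilting and a Chernoff bound. The real divergence is in the off-diagonal term. The paper performs a two-site cavity decomposition, replacing $\Hm_N(\alpha)$ by $(N-2)\Phib[\Tlb]+W_N(\alpha)+R_N(\alpha)$, localises $\Tlb$ near $\bar\nub$, invokes Theorem~\ref{teo3} so that $\{\Gmt\}_\alpha$ is asymptotically Poisson--Dirichlet($\bar m$), and then computes the three moments of $q_N(\alpha,\alpha')$ exactly via the Ruelle identities (its Theorem~\ref{prop_ppp}). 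Your argument instead conditions on $\mathcal A=\sigma(\{\Lb\}_\beta)$, exploits the exchangeability/sampling-without-replacement structure to get a uniform $O(1/N)$ conditional variance bound for $q_N(\alpha,\alpha')-\EE_{L^1_{N,\alpha}}(g)\,\EE_{L^1_{N,\alpha'}}(g)$, and reduces to one-replica concentration of $\EE_{L^1_{N,\alpha}}(g)$ around $\EE_{\bar\nu_1}(g)$. This bypasses Theorem~\ref{teo3} and the Poisson--Dirichlet machinery entirely, and in fact does not use the low-temperature hypothesis $\bar m<1$ or the absolute-continuity assumption at all; the paper's proof genuinely needs both, since the off-diagonal computation rests on the limiting Poisson--Dirichlet structure.

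\medskip

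\textbf{On your flagged obstacle.} The differentiability of $t\mapsto F^{(t)}$ at $0$ is less delicate than you suggest, and does not require the hypotheses of Theorem~\ref{teo3}. By Remark~\ref{exist_uniq} the Boltzmann--Gibbs maximiser $\bar\nub$ is unique (strict concavity of $\Omega$), and Proposition~\ref{prop2} together with Lemma~\ref{signs_derivatives} shows that \emph{every} Parisi minimiser $(q,m)$ yields the same generalised Gibbs measure $\nu^{q,m}=\bar\nub$; hence $\partial_t P^{(0)}(q,m)=\EE_{\nu_1^{q,m}}(g^2)=q$ (respectively $\EE_{\nu_1^{q,m}}(g)$) is constant over the minimiser set, and Danskin's theorem gives differentiability with the claimed derivative. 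Alternatively---and this would streamline your write-up considerably---you can dispense with the tilting altogether and obtain both one-replica concentrations directly from Lemma~\ref{ub_lemma} (already proved for Theorem~\ref{teo1}): since $\rhob\mapsto\EE_{\rho_1}(g^2)$ and $\rhob\mapsto\EE_{\rho_1}(g)$ are bounded continuous functionals on $\pst$, the localisation of $\Lb$ near $\bar\nub$ under the Gibbs measure forces them to concentrate on their values at $\bar\nub$.
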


We thus see that our abstract models correctly recover {\it all} of the main features of the Parisi theory under a 1RSB approximation,  in a generic setting.  For the readers convenience,  we shall conclude by briefly dwelling on the upshot of our analysis for the concrete case of the SK-model.

\subsection{SK vs.  REM-TAP}\label{link_parisi}  We recall that the 1RSB Ansatz \cite{MPV} for the limiting free energy of the SK model \eqref{SK_Ham} reads 
\beq \label{f1RSBSK} \bea 
& f_{\text{1RSB-SK}}(\beta, h)  \defeq \inf_{0 \leq q_0, \leq q_1 \leq 1,  m_1 \in [0,1]} \Bigg\{ \frac{\beta^2}{4}  \left[ (1-m_1)q_1^2 + m_1 q_0^2 - 2q_1 + 1 \right] + \log2 + \\
 & \qquad \qquad + \frac{1}{m_1} \int \log \left\{ \int \exp m_1 \log \cosh \left[ \arth \right] \varphi(dv) \right\} \, \varphi(du) \Bigg\} ,
\eea \eeq
where $\varphi$ is the standard Gaussian measure on $\R$.  As a matter of fact,   this is a {\it degenerate} 2RSB-formula:  indeed, the law of the pure states which hides behind \eqref{f1RSBSK}  is that of a superposition of two Derrida-Ruelle processes \cite{R} with parameters $0 \leq m_0 \leq m_1 \leq1$,  with the first one eventually absorbed through the limiting procedure $m_0 \downarrow 0$: this operation gives rise to the aforementioned degeneracy,  to the "common trunk" captured by the second Gaussian $du$-integral,  and stands behind the {\it  third} order parameter $q_0$.  For better comparison,  one may in fact simply set $q_0=0$ in the above formula.  Due to this complication,  comparison with our 1RSB-setting shall be taken {\it cum grano}.  Notwithstanding,  we do get a number of important insights: for the Hamiltonian $H_{N}(\alpha)\equiv N f_{\text{REM-TAP}}\left( \alpha \right)$, and recalling {SK1-3)} above,  the REM-TAP approximation leads in fact to a free energy  
\beq \bea \label{reta}
& f_{\text{REM-TAP}}(\be, h) = \inf_{0 \leq q \leq 1, m \in [0,1]} \Bigg\{ 
 \frac{\be^2}{4} (q^2+1) + \log 2 + \\
& \qquad \qquad + \frac{1}{m} \log  \int \exp m \log \cosh(h+\beta x) \cdot e^{m\left[ -\frac{\beta}{2} x \tanh(h+\beta x)-\frac{\beta^2}{2}q\tanh^2(h+\beta x)\right]}\, \varphi(dx) 
\Bigg\}
\eea \eeq
(The term $\exp{m\left[ -\frac{\beta}{2} x \tanh(h+\beta x)-\frac{\beta^2}{2}q\tanh^2(h+\beta x)\right]}$ in \eqref{reta} plays a role in the cavity dynamics only,  and is completely irrelevant when it comes to the free energy: for the sake of the current discussion,  such factor is an artefact which can also be immediately removed : simply set $f_1 \equiv 0$).  The appearance of the {\it quadratic} term in the first line on the r.h.s.  above is central to this work: although a constituent part of \eqref{f1RSBSK},  such terms cannot be explained by the linear models studied in \cite{BK1,  BK2}.  As already mentioned,  these corrections turn out to be Lagrange multipliers accounting for the non-linearities induced by the true REM-models hiding "within" the TAP-free energies. \\

\noindent Let us also spend a few words on the cavity dynamics for the SK-model,  i.e.  for the Hamiltonian given by the perturbed REM-TAP $H_{N;\varepsilon}(\alpha)\defeq N f_{\text{REM-TAP}}^{(\vare)}\left( \alpha \right)$, and 
with $f_2(y)\equiv f_2(\varepsilon;y)\equiv \varepsilon \log\cosh(y)$.  Let us denote by $(\bar q_\varepsilon, \bar m_\varepsilon)\in[0,1]^2$ the minimum of the corresponding Parisi function, and by  $\bar \nub_\varepsilon = \nub^{\bar q_\varepsilon, \bar m_\varepsilon}$ the associated generalized Gibbs measure.  We furthermore denote by $F_{N, \vare}$ the law of the overlap $q_N(\alpha, \alpha') = (1/N) \sum_{i=1}^N \tanh(h+\be g_{\alpha, i})\tanh(h+\be g_{\alpha', i})$ under the perturbed,  finite volume Gibbs measure,  and by $F_N$ that of the {\it un}-perturbed: 
\beq
x \in [0,1] \mapsto F_{N, \vare}(x) \defi \repave{{\bf 1}_{q_N(\alpha,\alpha')\leq x}}, \qquad F_{N}(x) \defi F_{N, 0}(x).
\eeq
Proposition \ref{overlap_law} would imply that in low temperature ($\bar m_\varepsilon < 1$) the law of the overlap of two relevant TAP-solutions for the SK-model is given by 
\beq \label{ovelap_quenc_cdf} 
F_\vare(x) \defeq \lim_{N\to \infty} F_{N, \vare}(x) = \begin{cases} 0 \quad & \text{if} \quad x<\bar q_{0;\varepsilon}\\
\bar m_\varepsilon  \quad & \text{if} \quad \bar q_{0;\varepsilon} \leq x <  \bar q_\varepsilon  \\
1  \quad & \text{otherwise}\end{cases};
\eeq
where 
$$\bar q_\varepsilon = \int \tanh^2(h+\beta x) \bar \nub_\varepsilon(dx,dy), \qquad \bar q_{0;\varepsilon} = \left[ \int \tanh(h+\beta x) \bar \nub_\varepsilon(dx,dy)  \right]^2.$$
A Parisi fixed point equation \cite[III.63]{MPV} encoding the stability of the hierarchical structure under the extensive cavity dynamics would then appear through the {\it imposition}  
\beq \label{requir}
F(x) \stackrel{(!)}{=} \lim_{\vare \downarrow 0} F_\vare(x),  \qquad \forall \, x \in [0,1].
\eeq
In case of a REM-approximation (1RSB),  it is easily seen that the self-consistency \eqref{requir} is, in fact,  {\it automatically} satisfied and as such a void requirement: in order to avoid trivializations one needs a much more sophisticated GREM-TAP approximation,  but this will be addressed in forthcoming works.

\section{Proofs.}

\subsection{The Boltzmann-Gibbs principle: proof of Theorem \ref{teo1}.}

\vskip0.3cm

In this subsection we prove Theorem \ref{teo1}, i.e. the validity of the following Boltzmann-Gibbs principle
\beq \label{BGVP}
\bea
\lim\limits_{N\to\infty} F_N = \lim\limits_{N\to\infty} \frac{1}{N}\log Z_N& =\sup_{\nub\in \mathcal{K}}\, \Omega[\nub]+ \log2  \quad \PP-\text{a.s.}, \\
\eea \eeq
where to lighten notation we shortened
\beq \label{omega_func}
\bea 
\Omega[\rhob] & \defeq \Phib[\rhob] - H(\rhob\mid\mu\smot\gamma)\\
& = \Phi\left( \EE_{\rhob_1}( g^2 )\right)+\EE_{\rhob}({f}) - H(\rhob\mid\mu\smot\gamma)
\eea
\eeq
being $\Phib:\pst\to\R$ the functional defined in \ref{def_Ham}.

\begin{rem} \label{exist_uniq}
	It is a well known fact that $\nub\in\pst\to H(\nub\mid\mu\smot\gamma)$ is lower semicontinous, strictly convex and with compact sub-levels.  Therefore, for compact $\mC\subset \pst$ and an upper semicontinous $\Omega$, the generalized Bolzano-Weierstrass Theorem on metric spaces ensures the existence of a solution to 
	\beq \label{BGVP_general_C}
	\sup_{\nub\in \mC} \Omega[\nub]. \eeq
	Moreover, the assumption H1) on $\Phi$, together with the fact that $g\in[-1,1]$, suffices to get strict concavity of the functional $\Omega:\pst \to \R$. This implies that, for a convex  $\mC\subseteq \pst$, the existence of a solution to a problem of the form \eqref{BGVP_general_C} implies its uniqueness. In other words: if $\mC$ is compact,  there exists a solution to \eqref{BGVP_general_C}; if $\mC$ is also convex, such solution is furthemore unique: since the set $\mK$, i.e. the sub-level of the relative entropy appearing in the Boltzmann-Gibbs principle \eqref{BGVP}, is convex and compact,  the variational principle \eqref{BGVP} admits a unique solution.  (Recall also that, in a complete metric space like $\pst$ a set is sequentially compact if and only if it is compact, every compact set is closed and every closed subset of a compact set is compact).
\end{rem}
\vskip0.3cm

We will proceed via a second moment method on the counting variable
$$\cv{E}=\#\{\alpha:\quad \Lb \in E\},$$
defined for every $E\subseteq\pst$. The proof is analogous to the one given in \cite{BK1} and relies both on the independence of the $\{\Lb\}_{\alpha=1}^{2^N}$ as well as on the continuity of $\rhob\in\pst\to\Phib[\rhob]$. We start with a technical Lemma, which will be useful also later, that provides the upper bound to the free energy - see equation \eqref{upperbound}.  Here and henceforth we denote by $B_{r} \defeq B_{\bar \nub, r} = \{\nub \in \pst:\quad d(\nub,\bar \nub)<r\}$ the open ball centered in $\bar \nub$, and with radius $r>0$.

\begin{lem} \label{ub_lemma}
	Let $\bar \nub \defeq \text{argmax}_{\,\mathcal{K}\,} \Omega$.  Then,  $\PP$-almost surely : 
\begin{itemize}
\item[$\bullet$] For all $r > 0$,
	$$ \limsup\limits_{N\to \infty} \frac{1}{N}\log \sum_{\alpha: \Lb \not\in B_r} e^{N\Phib[L_{N,\alpha}]} < F, $$

\item[$\bullet$] With $F\defeq\max\limits_{\mK} \Omega[\nub] = \Omega[\bar \nub]$,  it holds: 
\beq \label{upperbound}
\limsup\limits_{N\to \infty}\frac{1}{N}\log \sum\limits_{\alpha=1}^{2^N} e^{N\Phib[\Lb]} \leq F\,.
\eeq
\end{itemize}

\end{lem}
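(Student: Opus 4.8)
The plan is to establish both bullet points by a first-moment (union bound) estimate on the counting variables $\cv{E}$, exploiting the i.i.d.\ structure of the empirical measures $\{\Lb\}_{\alpha}$ and Sanov's theorem, together with the continuity and strict concavity of $\Omega$ recorded in Remark \ref{exist_uniq}. First I would note the elementary deterministic bound: for any finite cover of $\pst$ (or of the relevant compact piece of it) by balls $B(\nub_j, r)$, one has
\beq
\sum_{\alpha=1}^{2^N} e^{N\Phib[\Lb]} \;\le\; \sum_j \Big(\sup_{\nub \in B(\nub_j,r)} e^{N\Phib[\nub]}\Big)\cdot \cv{B(\nub_j,r)},
\eeq
so that $\tfrac1N\log$ of the left side is controlled by $\max_j\big(\sup_{B(\nub_j,r)}\Phib + \tfrac1N\log\cv{B(\nub_j,r)}\big)$. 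Since $\E\,\cv{E} = 2^N\,\PP(\Lb \in E)$ and, by Sanov's theorem (upper bound for closed sets, applied to $\ov{B(\nub_j,r)}$), $\tfrac1N\log\PP(\Lb\in \ov{B(\nub_j,r)}) \to -\inf_{\ov{B(\nub_j,r)}} H(\cdot\mid\mu\smot\gamma)$, a first-moment Markov bound plus Borel--Cantelli gives, $\PP$-a.s.,
\beq
\limsup_{N\to\infty}\frac1N\log\cv{\ov{B(\nub_j,r)}} \;\le\; \Big(\log 2 - \inf_{\ov{B(\nub_j,r)}} H(\cdot\mid\mu\smot\gamma)\Big)^+ ,
\eeq
uniformly over the finitely many $j$. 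Combining, $\limsup_N \tfrac1N\log\sum_\alpha e^{N\Phib[\Lb]} \le \max_j\big(\sup_{B(\nub_j,r)}\Phib - \inf_{\ov{B(\nub_j,r)}} H\big) + \log 2$; letting $r\downarrow 0$ along a sequence of increasingly fine covers and using upper semicontinuity of $\Phib$ and lower semicontinuity of $H$ collapses the right side to $\sup_{\nub}\big(\Phib[\nub] - H(\nub\mid\mu\smot\gamma)\big) + \log 2 = \sup_{\mK}\Omega + \log 2 = F$ (the restriction to $\mK$ being automatic, since outside $\mK$ the contribution $\log 2 - H < 0$ forces the $(\cdot)^+$ to vanish and the corresponding terms drop out). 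That is \eqref{upperbound}.

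For the first bullet, I would localize the same argument away from the maximizer $\bar\nub$. Fix $r>0$. By strict concavity and compact sub-levels of $\Omega$, and since $\bar\nub$ is the unique maximizer on $\mK$, one has $\sup_{\nub\in\mK\setminus B_r}\Omega[\nub] =: F - 3\delta < F$ for some $\delta>0$; moreover, on $\pst\setminus\mK$ we trivially have $\Phib - H + \log 2 < \log 2 - \log 2 = 0 \le F$ with a quantitative gap once we stay a fixed distance from $\mK$ — more carefully, one covers $\{\Lb\not\in B_r\}$ by finitely many balls each of which either lies in $\mK\setminus B_{r/2}$ (where $\Omega \le F - 2\delta$) or lies outside $\mK$, and on the latter the a.s.\ growth rate of $\tfrac1N\log\cv{\cdot}$ is $\le 0$ while $\sup\Phib$ is finite, yet the entropy cost $-H$ on a slightly smaller ball still dominates; shrinking the cover and using semicontinuity again yields $\limsup_N \tfrac1N\log\sum_{\alpha:\Lb\not\in B_r} e^{N\Phib[\Lb]} \le F - \delta < F$, $\PP$-a.s. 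The passage to the full-probability statement requires doing the Borel--Cantelli step for a fixed countable family of covers (e.g.\ indexed by rational radii and a fixed countable dense set of centers) and then noting the estimates for a given $r$ follow from those for any rational $r' < r$.

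The main obstacle is the non-compactness of $\pst$: the union bound must be taken over a genuinely finite cover, so before applying it one needs to discard the mass of configurations whose empirical measure is "far out" in $\pst$. This is handled by an exponential-tightness / exponential Markov estimate: because $H(\cdot\mid\mu\smot\gamma)$ has compact sub-levels and $\Phib$ is bounded above on $\pst$ (as $\Phi$ is concave hence bounded above on the compact range $[0,1]$ of $\EE_{\rho_1}(g^2)$, and $f_1, f_2$ are bounded), the set $\{\nub : \Phib[\nub] - H(\nub\mid\mu\smot\gamma) \ge F - 1\}$ is contained in a compact sub-level of $H$, and a Sanov/Markov bound shows $\tfrac1N\log\sum_{\alpha:\,\Lb\notin \mK_0} e^{N\Phib[\Lb]} \to -\infty$ a.s.\ for a suitable compact $\mK_0 \supset \mK$. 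Once restricted to $\mK_0$, the finite-cover argument above applies verbatim. This is exactly the argument of \cite{BK1}, the only new point being that $\Phib$ is nonlinear; but since we only ever use its continuity, boundedness above, and upper semicontinuity, nonlinearity causes no additional difficulty here — it will matter only later, in the identification of the variational problem with the Parisi formula.
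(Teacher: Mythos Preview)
Your plan is correct and follows essentially the same route as the paper: reduce to a compact entropy sub-level set via Sanov/Borel--Cantelli (the paper uses a sequence $\mK_{a_n}=\{H\le\log 2+a_n\}$ with $a_n\downarrow 0$ where you use a single $\mK_0$), take a finite cover by small balls on which both $\Phib$ and $\inf H$ are controlled, and bound the counting variables by Markov plus Sanov. The only cosmetic difference is that the paper lets $a_n\downarrow 0$ and extracts a weakly convergent subsequence of near-maximizers to land back in $\mK$, whereas you shrink the cover radius along a countable family; both handle the a.s.\ quantifier issue the same way.
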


\begin{proof}
	Let $r\geq 0$.  Shorten $B_0^c \equiv \pst$, denote by $B_r^c$ the complement of $B_r$ in $\pst$: 
	$$B^c_{r} \defeq \{\rhob \in \pst:\quad d(\rhob,\bar \nub)\geq r\})=\pst\setminus B_r,$$ 
	and set
	\beq \label{BGVPepsi}
	F_r \defeq \begin{cases} \bea
	&\sup\limits_{\mK\cap B_r^c}\Omega \quad &\text{if} \quad \mK\cap B_r^c\neq \emptyset,\\
	&-\infty \quad& \text{otherwise}. \eea \end{cases} 
	\eeq
	Notice that if $r:\,\, \mK\cap B^c_r= \emptyset$ it trivially holds $F_{r} < F_0 \equiv F$. If instead $r:\,\, \mK\cap B_r^c\neq \emptyset$ then, by compactness  (see Remark \ref{exist_uniq}), we can find $\nub_r \in \mK\cap B^c_r, \,\,\nub_r\neq\nubb$ s.t.
	$$F_r = \max_{\mK\cap B^c_r} \Omega = \Omega[\nub_r]$$
	and, as the maximal measure $\bar \nub$ of $\Omega$ on $\mK$ is unique, 
	 the strict inequality 
	$$F_r= \Omega[\nub_r] < \Omega[\bar \nub] = F$$
	holds also in this case.	
	\noindent Hence, in order to settle the Lemma, we just need to prove the following 
	\beq \label{claim1}
	\limsup\limits_{N\to \infty} \frac{1}{N}\log \sum_{\alpha: \Lb \in B^c_r} e^{N\Phib[\Lb]}\leq F_r \quad \forall r \geq 0.	\eeq
	
In order to see this,  consider a decreasing sequence of positive real numbers $\{a_n\}_{n\in\N}$ with $a_n \in (0,1)$ for all $n\in\N$ and $a_n\to 0$ as $n\to\infty$. Then fix $n\in \N$ and consider the open set $\mK_{a_n}^c\defeq \pst\setminus \mK_{a_n}$, together with its closure $\overline{\mK^c_{a_n}}$, where for every $a\geq0$ 
	$$\mK_a \defeq \{\nub\in\pst:\quad H\left( \nub\mid \mu\smot\gamma \right) \leq \log2 + a \}.$$
	  Sanov's Theorem implies  that 
	\beq  \label{Sanov_closure_interior}
	\bea
	\PP\left(\Lbu \in \overline{\mK_{a_n}^c} \right) & = \exp{[-\inf_{\rhob\in  \overline{\mK_{a_n}^c}} H(\rhob\mid \mu\smot\gamma)N+o(N)]} \quad \\
	& \leq 2^{-N} e^{-a_nN+o(N)}
	\eea(N\to\infty), \\
	\eeq
	hence
	\beq \label{N_sets_for_BC}
	\bea 
	\PP\left( \exists \alpha\in\{1,\cdots,2^N\}:\quad \Lb\in \mK_{a_n}^c\right) &  \leq 2^N\PP\left( \Lbu\in \overline{\mK_{a_n}^c}\right)\leq e^{-a_nN+o(N)}. \eea\eeq
	
	\noindent Through the Borel-Cantelli lemma, \eqref{N_sets_for_BC} implies that 
	
	\beq \label{not_out_Ka}
	\bea
	\limsup\limits_{N\to\infty} \frac{1}{N}\log \sum_{\alpha: \Lb \in B^c_r} e^{N\Phib[\Lb]}
	&= \limsup\limits_{N\to\infty} \frac{1}{N} \log\sum\limits_{\alpha:\Lb\in \mK_{a_n}\cap B^c_r} e^{N\Phib(\Lb)}\quad \PP-\text{a.s.}\eea
	\eeq
	
	If $r, n$ are such that $\mK_{a_n}\cap B^c_r=\emptyset$, the latter implies
	\beq \label{claim1empty}
	\bea
	\limsup\limits_{N\to\infty} \frac{1}{N}\log \sum_{\alpha: \Lb \in B^c_\epsilon} e^{N\Phib[\Lb]}
	&= -\infty \quad \PP-\text{a.s.}\eea
	\eeq
	
	Notice that by construction it holds  $\mK=\mK_{0}\subseteq \mK_{a_n}$, which implies that if $\mK_{a_n}\cap B^c_r$ is empty then $F_r = -\infty$, i.e. the claim \eqref{claim1} is exactly \eqref{claim1empty}. 
	
	We can therefore assume $r, n:\,\,\mK_{a_n}\cap B^c_r\neq \emptyset$ and, for an aribitrary $\delta>0$ cover the compact set $\mK_{a_n}\cap B^c_r$ through open balls  
	$$\mK_{a_n}\cap B^c_r \subset \bigcup_{\nub \in \mK_{a_n}\cap B^c_\epsilon} B_{\nub}(r_{\nub})$$ 
	such that for each $\nub\in \mK_{a_n}\cap B^c_r$ the associated ray $r_{\nub}>0$ is small enough s.t.
	\beq \label{radius_continuity_phi}
	\Phib(\rhob)-\Phib(\nub) < \delta \quad \forall \rhob \in B_{\nub}(r_{\nub}),
	\eeq
	and
	\beq \label{entropy_min_centre}
	H(\nub\mid \mu{\small \otimes}\gamma)-
	\inf_{\rhob \in \overline{B_{\nub}(r_{\nub}) }} H(\rhob\mid\mu\smot\gamma)<\frac{\delta}{2}.
	\eeq
	
	This is clearly possible by upper semicontinuity of $\Phib$ and the fact that 
	$$\inf_{\rhob \in \overline{B_{\nub}(r) }} H(\rhob\mid\mu\smot\gamma)\to H(\nub\mid \mu{\small \otimes}\gamma)\quad (r\to 0).$$
	
	By compactness, we can extract a finite sub-cover: $\exists M<\infty, \{\nub_i\}_{i=1}^M\subset \mK_{a_n}\cap B^c_r $ s.t. $\mK_{a_n}\cap B^c_r \subset \bigcup_{i=1}^M B_{\nub_i}(r_i)$, where we shortened 
	$$r_i\defeq r_{{\nub}_i}, \,\, B_i\defeq B_{\nub_i}(r_i).$$
	
	Therefore, almost surely, it holds
	
	\beq \label{bound_count_var}
	\bea
	\limsup\limits_{N\to\infty} \frac{1}{N}\log \sum_{\alpha: \Lb \in  \mK_{a_n} \cap  B^c_r} e^{N\Phib[\Lb]} & \leq \limsup\limits_{N\to\infty} \frac{1}{N}\log \sum\limits_{i=1}^M\sum\limits_{\alpha:\Lb\in B_i} \exp{N\Phib(\Lb)}\\
	& \leq \delta + \limsup\limits_{N\to\infty} \frac{1}{N}\log \sum\limits_{i=1}^M \exp{N\Phib(\nub_i)}\cv{\,\overline{B}_i\,}
	\eea
	\eeq
	(recall that $\cv{\,\overline{B}_i\,}$ is the variable that counts the $\alpha-$s for which $\Lb\in B_i$).\\
	
	\noindent Markov's inequality implies that for each $i=1,\cdots, M$
	\beq \label{probub} \bea
	\PP\left(\cv{\overline{B}_i} \geq  e^{N[\log2-H(\nub_i\mid\mu{\small \otimes}\gamma)+\delta]} \right) & \leq 2^{-N}\E \, \cv{\overline{B}_i}\exp{N[H(\nub_i\mid\mu{\small \otimes}\gamma)-\delta]} \\
	\eea
	\eeq
	with
	\beq \label{expub}
	\bea
	\E\left(\cv{\overline{B}_i} \right)&=2^N\PP\left( \Lbu\in \overline{B}_i\right)=2^N\exp{[-\inf_{\rhob\in\overline{B}_i}H\left(\rhob\mid\mu\smot\gamma \right)N+o(N)]} \quad (N\to\infty),
	\eea
	\eeq
	the last step by Sanov's theorem. Plugging \eqref{expub} into \eqref{probub}, and using \eqref{entropy_min_centre}, we find that for each $i\in\{1,\cdots,M\}$
	
	$$\PP\left(\cv{\overline{B}_i} < e^{N[\log2-H(\nub_i\mid\mu{\small \otimes}\gamma)+\delta]} \right) \geq 1- \exp\left[{-\frac{\delta}{2}N+o(N)}\right] \quad (N\to\infty).$$
	
This, together with Borel-Cantelli, implies
	$$\bea
	& \limsup\limits_{N\to\infty} \frac{1}{N}\log \sum_{\alpha: \Lb\in  \mK_{a_n}\cap \in B^c_r} e^{N\Phib[\Lb]} \\
& \hspace{3cm}  \leq \limsup\limits_{N\to\infty} \frac{1}{N}\log \sum\limits_{i=1}^M \exp{N(\Phib(\nub_i)-H(\nub_i\mid\mu{\small \otimes}\gamma))}+\log2+2\delta\\
	&  \hspace{3cm} \leq \sup_{\nub \in B^c_\epsilon \cap \mK_{a_n}} \Phib(\nub)-H(\nub\mid\mu{\small \otimes}\gamma)+\log2+2\delta.
	\eea$$
	
	Being $\delta$ arbitrary, the latter, together with \eqref{not_out_Ka} give
	
	$$\bea
	\limsup\limits_{N\to\infty} \frac{1}{N}\log \sum_{\alpha: \Lb \in B^c_r} e^{N\Phib[\Lb]} & \leq \sup_{\nub \in \mK_{a_n}\cap B^c_r} \Omega[\nub]+\log2. \\
	\eea$$
	
	Notice that, again by compactness, we can select a measure $\bar \nub_n\in \mK_{a_n}\cap B^c_r$ s.t. 
	$$\max\limits_{\mK_{a_n}\cap B^c_r} \Omega = \Omega[\bar \nub_n],$$ which implies that, for every $n\in\N$,
	
	$$\bea
	\limsup\limits_{N\to\infty} \frac{1}{N}\log \sum_{\alpha: \Lb \in B^c_r} e^{N\Phib[\Lb]} &
	& \leq \Omega[\bar \nub_n]+\log2.
	\eea$$
	
	By construction it holds $B^c_r \cap \mK_{a_n} \subseteq B^c_r \cap \mK_{1}\,\, \forall n\in\N$, specifically the whole sequence $\{\bar \nub_n\}_{n\in\N}$ lies in the compact set $B^c_r \cap \mK_{1}$ and we can extract a convergent subsequence: $\{\snk\}_{k\in\N}$ s.t. $\snk \to \xib$ weakly as $k\to\infty$ for some $\xib\in \mK_1\cap B^c_r$.  We get
	\beq \label{last_bound}
	\bea
	\limsup\limits_{N\to\infty}\frac{1}{N}\log \sum_{\alpha: \Lb \in B^c_r} e^{N\Phib[\Lb]}& \leq 
	\limsup_{k\to\infty} \, \Omega[\snk]+\log2\\
	& \leq \Omega[\xib]+\log2,
	\eea \eeq
	the last step by upper semicontinuity of $\Omega$.  It is also immediately checked that 
	\beq \label{limit_in_K}
	\xib\in \mK\cap B^c_r \equiv  \mK_0\cap B^c_r,
	\eeq which implies the claim \eqref{claim1} straightforwardly through \eqref{last_bound}.\\

\noindent	The validity of \eqref{limit_in_K} is a consequence of the fact that  for every $\delta>0$ there exists $k_\delta>0$ s.t. $\snk\in \mK_\delta\cap B^c_r$ for all $k\geq k_\delta$ and that by compactness of the latter it must hold $\xib\in \mK_\delta\cap B_r^c$. Specifically:
	$$\bea
	\xib \in \bigcap\limits_{\delta>0} K_\delta \cap B^c_r &= \bigcap\limits_{\delta>0} \left\{ \nub: H(\nub\mid\mu\smot\gamma)\leq \log2+\delta\right\} \cap B^c_r\\
	& = \left\{ \nub: H(\nub\mid\mu\smot\gamma)\leq \log2\right\} \cap B^c_r = \mK\cap B^c_r,\eea$$
	and the Lemma follows.
\end{proof}

\vskip1cm

The proof of the lower bound requires a refinement of our moments' analysis. A key ingredient is the strong concentration of the $\cv{\,V}$-random variable around its mean:

\begin{lem}{(Variance estimate)} \label{2nd_moment}
For every $\nub\in \mK$ and any open neigborhood $U$ of $\nub$, there exists $\oball\subset U$ and $\delta>0$ such that for large enough $N$
\beq \label{2nd_mom}
\mathbb{V}ar \cv{\oball} \leq e^{-N\delta} (\E\,\cv{\oball})^2.
\eeq
\end{lem}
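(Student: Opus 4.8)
The plan is to exploit the \emph{independence} of the empirical measures $\{\Lb\}_{\alpha=1}^{2^N}$, which makes $\cv{\oball}$ a Binomial random variable, and then to read off the exponential growth of its mean from the lower‑bound half of Sanov's theorem.

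Fix $\nub\in\mK$ and an open neighborhood $U$ of $\nub$, and choose any $r>0$ small enough that $\oball=B_{\nub,r}\subset U$. Since the sequences $\{(X_{\alpha,i},Y_{\alpha,i})\}_{i}$ are independent across $\alpha$ and each has common law $\mu\smot\gamma$, the indicators $\1_{\{\Lb\in\oball\}}$, $\alpha=1,\dots,2^N$, are i.i.d. Bernoulli variables with success probability $p_N\defeq\PP(\Lbu\in\oball)$. Hence $\cv{\oball}$ is $\mathrm{Binomial}(2^N,p_N)$, so that $\E\,\cv{\oball}=2^Np_N$ and $\mathbb{V}ar\,\cv{\oball}=2^Np_N(1-p_N)\leq\E\,\cv{\oball}$; in particular
$$\frac{\mathbb{V}ar\,\cv{\oball}}{(\E\,\cv{\oball})^2}\;\leq\;\frac{1}{\E\,\cv{\oball}}\,.$$
It therefore suffices to show that $\E\,\cv{\oball}=2^Np_N\geq e^{N\delta}$ for all large $N$, for a suitable $\delta>0$.

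Set $c\defeq\inf_{\rhob\in\oball}H(\rhob\mid\mu\smot\gamma)$. The key — and essentially only delicate — point is the strict inequality $c<\log2$, which is non‑obvious precisely when $\nub$ lies on the boundary of $\mK$, i.e. when $H(\nub\mid\mu\smot\gamma)=\log2$. To settle it I would use the convexity of $\rhob\mapsto H(\rhob\mid\mu\smot\gamma)$ recalled in Remark \ref{exist_uniq}: the interpolants $\nub_t\defeq(1-t)\nub+t(\mu\smot\gamma)$ converge weakly to $\nub$ as $t\downarrow0$, hence belong to $\oball$ for $t$ small, while
$$H(\nub_t\mid\mu\smot\gamma)\;\leq\;(1-t)\,H(\nub\mid\mu\smot\gamma)+t\,H(\mu\smot\gamma\mid\mu\smot\gamma)\;=\;(1-t)\,H(\nub\mid\mu\smot\gamma)\;<\;\log2\,;$$
fixing such a $t$ gives $c\leq H(\nub_t\mid\mu\smot\gamma)<\log2$. (When $H(\nub\mid\mu\smot\gamma)<\log2$ the bound $c\leq H(\nub\mid\mu\smot\gamma)<\log2$ is immediate, since $\nub\in\oball$.)

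Finally, since $\oball$ is open, Sanov's theorem applied to the i.i.d. sequence $\{(X_{1,i},Y_{1,i})\}_i$ with common law $\mu\smot\gamma$ — exactly as already invoked in the proof of Lemma \ref{ub_lemma} — yields the lower bound
$$\liminf_{N\to\infty}\frac{1}{N}\log\PP(\Lbu\in\oball)\;\geq\;-\inf_{\rhob\in\oball}H(\rhob\mid\mu\smot\gamma)\;=\;-c\,,$$
so that $\liminf_{N\to\infty}\frac1N\log\E\,\cv{\oball}\geq\log2-c>0$. Choosing $\delta\defeq(\log2-c)/2>0$ we obtain $\E\,\cv{\oball}\geq e^{N\delta}$ for all large $N$, and combining with the Binomial variance bound gives $\mathbb{V}ar\,\cv{\oball}\leq\E\,\cv{\oball}\leq e^{-N\delta}(\E\,\cv{\oball})^2$, which is \eqref{2nd_mom}. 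The single genuine obstacle is the strictness $c<\log2$ for boundary measures of $\mK$, handled by the interpolation argument above; everything else is a direct consequence of the independence structure and the Sanov lower bound.
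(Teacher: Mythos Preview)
Your proof is correct, and it takes a somewhat different route than the paper's. The paper bounds $\mathbb{V}ar\,\cv{\oball}\leq 2^N\PP(\Lbu\in\cball)$ by the same independence argument, but then it \emph{chooses} a particular radius $r$ so that $\mu\smot\gamma\notin\cball$ and $\inf_{\oball}H=\inf_{\cball}H=H(\nub\mid\mu\smot\gamma)-\eta$ for some $\eta>0$; it then uses Sanov's \emph{upper} bound on $\PP(\Lbu\in\cball)$ and Sanov's \emph{lower} bound on $\PP(\Lbu\in\oball)$ to sandwich $2^N\PP(\Lbu\in\cball)$ against $[\E\,\cv{\oball}]^2$. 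Your argument is cleaner: you use the stronger Binomial bound $\mathbb{V}ar\leq\E$ directly, so that only the Sanov lower bound is needed, and your convexity/interpolation step $\nub_t=(1-t)\nub+t(\mu\smot\gamma)$ gives the strict inequality $\inf_{\oball}H<\log2$ for \emph{every} ball $\oball\subset U$, with no special choice of radius required. The paper's approach buys a slightly more explicit exponent tied to $H(\nub\mid\mu\smot\gamma)$, but yours is shorter and avoids the unproved claim that one can match the open and closed infima of $H$ on a suitably chosen ball.
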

\begin{proof}
For each $\oball$,  and by independence, the second moment of $\cv{\oball}$ satisfies
\beq \label{step1_2m}
\begin{aligned}
\E\left[\,\cvsquared{\oball}\right]  & = \sum\limits_{\alpha} \mathbb{P}\left(\Lb\in \oball\right) + \sum\limits_{\alpha\neq\alpha'}\mathbb{P}\left(\Lb\in \oball \right)\mathbb{P}\left(\boldsymbol{L}_{N,\alpha'}\in \oball \right) \\
& \leq2^N\mathbb{P}\left(\boldsymbol{L}_{N,1}\in \cball \,\right) + \left[\E\, \cv{\oball} \right]^2,\eea
\eeq
and thus 
\beq \label{step1.2_2m}
\mathbb{V}ar \cv{\oball} \leq 2^N\mathbb{P}\left(\boldsymbol L_{N,1}\in \cball \,\right).
\eeq
Notice that the statement of the Lemma is trivial if $\nub=\mu{\small \otimes}\gamma$, therefore we assume $\nub\neq \mu{\small \otimes}\gamma$; this implies that there exists $\oball\subset U$ and $\eta>0$ such that $\mu{\small \otimes}\gamma\notin\cball$ and 
$$
H(\oball)=H(\cball)=H(\nub\mid\mu{\small \otimes}\gamma)-\eta.
$$
Together with Sanov's theorem, the latter implies that for $N$ large
\beq \label{step2_2m}
\bea
2^N\mathbb{P}\left( \boldsymbol L_{N,1}\in \cball \right) & \leq\exp N\left[\log2-H(\nub\mid\mu{\small \otimes}\gamma)+\frac{7}{6}\eta\right]\\
& \leq e^{-\frac{\eta}{2}N}\exp 2 N\left[ \log2-H(\nub\mid\mu{\small \otimes}\gamma)+\frac{5}{6}\eta \right]\\
& \leq e^{-\frac{\eta}{2}N} \left[2^N\mathbb{P}\left( \boldsymbol L_{N,1}\in \oball \right)\right]^2=e^{-\frac{\eta}{2}N}\left[\E\, \cv{\oball}\right]^2 \\
\eea
\eeq
where in the second line we used that, as $\nub\in K$, $\log 2-H(\nub\mid\mu{\small \otimes}\gamma)\geq 0$.  The Lemma now follows from \eqref{step1.2_2m} with $\delta\equiv\frac{\eta}{2}$.

\end{proof}

For $\nub\in \mK$, $\delta>0$ let $U$ be an open neigborhood of $\nub$ such that 
$$ \Phib(\nub)-\Phi(\rhob) \leq \delta\qquad \forall\,\rhob\in U,$$
and let $\cball\subset U$ be the ball prescribed by Lemma \ref{2nd_moment}. Then 
$$ \begin{aligned}
\liminf\limits_{N\to\infty} \mathcal{F}_N  &\geq \liminf\limits_{N\to\infty} \frac{1}{N}\log\sum\limits_{\alpha:\, \Lb\in \oball} \exp{N\Phib(\Lb)}\\
& \geq \Phib(\nub) + \liminf\limits_{N\to\infty} \frac{1}{N}\log\cv{\oball} -\delta.
\eea
$$
Moreover, Lemma \eqref{2nd_moment} together with Chebyshev's inequality and Borel-Cantelli,  immediately implies that 
$$\lim\limits_{N\to\infty} \frac{1}{N}\log\cv{\oball} = \lim\limits_{N\to\infty} \frac{1}{N}\log\E\,\cv{\oball} \quad\quad \mathbb{P}\text{-a.s.}$$
By Sanov's theorem, we thus get 
$$ \liminf\limits_{N\to\infty} \frac{1}{N}\log\cv{\oball} \geq -\inf_{\rhob\in\oball}H(\rhob\mid\mu\smot\gamma)+\log2 \quad\quad \mathbb{P}\text{-a.s.}$$
All in all,
$$\bea 
\liminf\limits_{N\to\infty} \mathcal{F}_N & \geq \Phib(\nub) -\inf_{\rhob\in\oball}H(\rhob\mid\mu\smot\gamma)+\log2-\epsilon\\ 
& \geq  \Phib(\nub) - H(\nub\mid\mu{\small \otimes}\gamma)+\log2-\delta.\eea$$
Since $\delta$ is arbitrary,  by taking the supremum over $\nub\in \mK$, we get the lower bound
$$\liminf\limits_{N\to\infty} \mathcal{F}_N \geq \sup\limits_{\nub\in \mK} \Phib(\nub) - H(\nub\mid\mu{\small \otimes}\gamma)+\log2,$$
and Theorem \ref{teo1} follows.

\subsection{The Parisi principle: proof of Theorem \ref{teo2}.}

Our proof of the Parisi principle crucially relies on some classic properties of the relative entropy functional $\rhob\in\pst\to H(\rhob\mid\mu{\small \otimes}\gamma)\in [0,\infty]$ which are recalled in the Appendix \ref{prop_H}. We start with a couple of technical and straightforward results.

\begin{lem}\label{signs_derivatives}
Let $(\bq,\bam)$ be a minimum point of the Parisi function $\Pa$, defined in \eqref{Parisi}, on $[0,1]^2$. Then 
\beq \label{signs_dm}
\bea
\bam\in(0,1)\quad &\Rightarrow \quad H(\gqmb\mid\mu{\small \otimes}\gamma) = \log2, \\
\bam=1\quad &\Rightarrow \quad H(\gqmb\mid\mu{\small \otimes}\gamma) \leq \log2, \\
\eea
\eeq
\beq\label{signs_dq}
\EE_{\gqmub}\left(g^2\right) = \bq
\eeq
being $\gqmub\in\ps$ the marginal of $\gqmb\in\pst$ on the first coordinate.\\
Moreover, it holds 
\beq \label{candidate_eq_par}
\G[\gqmb]-H(\gqmb\mid\mu{\small \otimes}\gamma)=\Pa(\bq,\bam)+\log2.
\eeq
\end{lem}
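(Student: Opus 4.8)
The plan is to extract all four assertions from the first-order optimality conditions satisfied by the minimizer $(\bq,\bam)$ of $\Pa$ on $[0,1]^2$, feeding these into the explicit derivative formulas \eqref{der} and the entropy identity \eqref{relen_gqm}. As a preliminary step I would note that the minimizer exists and necessarily has $\bam>0$: the generalized Gibbs measure $\gqm$ is well defined for every $(q,m)$ since $g,f_1,f_2$ are bounded, so $\zqm\in(0,\infty)$; the map $(q,m)\mapsto\Pa(q,m)$ is continuous on $[0,1]\times(0,1]$; and $\Pa(q,m)\to+\infty$ as $m\downarrow 0$, because $\zqm\to 1$ and hence $\tfrac1m(\log\zqm+\log2)=\tfrac{\log2}{m}+O(1)$. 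Thus the infimum is attained on a compact subset of $[0,1]\times(0,1]$.

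For the dichotomy \eqref{signs_dm} I would simply read off the sign of $\partial_m\Pa$ from \eqref{der}. If $\bam\in(0,1)$, stationarity in the $m$-variable gives $\partial_m\Pa(\bq,\bam)=0$, i.e. $\tfrac1{\bam^2}[H(\gqmb\mid\mu\smot\gamma)-\log2]=0$, whence $H(\gqmb\mid\mu\smot\gamma)=\log2$; if $\bam=1$, minimality on the boundary forces the one-sided derivative $\partial_m\Pa(\bq,1)\le 0$, which by \eqref{der} reads precisely $H(\gqmb\mid\mu\smot\gamma)\le\log2$.

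The identity \eqref{signs_dq} is obtained in the same spirit from $\partial_q\Pa(q,m)=\Phi''(q)\big[\EE_{\gqmu}(g^2)-q\big]$, now combined with the a priori bound $0\le\EE_{\gqmub}(g^2)\le 1$ coming from $\sup g=1$, $\inf g=-1$ in H2), and with $\Phi''<0$ from H1). If $\bq\in(0,1)$, stationarity gives $\Phi''(\bq)\big[\EE_{\gqmub}(g^2)-\bq\big]=0$, hence $\EE_{\gqmub}(g^2)=\bq$ since $\Phi''(\bq)\ne 0$. If $\bq=0$, minimality on the boundary gives $\partial_q\Pa(0,\bam)\ge 0$, so $\EE_{\gqmub}(g^2)\le 0$, which together with $\EE_{\gqmub}(g^2)\ge 0$ yields $\EE_{\gqmub}(g^2)=0=\bq$; the case $\bq=1$ is symmetric, using the upper bound $\EE_{\gqmub}(g^2)\le 1$.

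Finally \eqref{candidate_eq_par} follows from a short algebraic manipulation: substituting \eqref{relen_gqm} into the definition \eqref{Parisi} of $\Pa$ and evaluating at $(\bq,\bam)$ writes $\Pa(\bq,\bam)$ as $\Phi(\bq)-\bq\Phi'(\bq)+\Phi'(\bq)\EE_{\gqmub}(g^2)+\EE_{\gqmb}(f)-\tfrac1{\bam}\big[H(\gqmb\mid\mu\smot\gamma)-\log2\big]$; the two $\Phi'(\bq)$-terms cancel by \eqref{signs_dq}, leaving $\Phi(\bq)+\EE_{\gqmb}(f)=\G[\gqmb]$, while the dichotomy \eqref{signs_dm} (either $\bam=1$, or $H(\gqmb\mid\mu\smot\gamma)=\log2$) collapses $\tfrac1{\bam}[H(\gqmb\mid\mu\smot\gamma)-\log2]$ to $H(\gqmb\mid\mu\smot\gamma)-\log2$, which after collecting terms is \eqref{candidate_eq_par}. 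I do not anticipate any serious obstacle; the only mildly delicate point is the boundary bookkeeping in \eqref{signs_dq}, where the strict negativity of $\Phi''$ has to be paired with the confinement of $\EE_{\gqmub}(g^2)$ to $[0,1]$ forced by the boundedness of $g$ — drop either ingredient and the argument on $\partial[0,1]^2$ breaks down.
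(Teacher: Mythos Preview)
Your proof is correct and follows essentially the same route as the paper: existence of the minimizer with $\bam>0$ from $\Pa(q,m)\to+\infty$ as $m\downarrow 0$, then the first-order optimality conditions read off via the derivative formulas \eqref{der} (the paper phrases the $q$-boundary case as a reductio, you do it by direct case analysis, but the content is identical), and finally the algebraic substitution of \eqref{relen_gqm} into \eqref{Parisi} together with \eqref{signs_dq}, \eqref{signs_dm}.
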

\begin{proof}

As a lower semicontinuous function on a compact set, $\Pa$ must attain a minimum on $[0,1]^2$, which we denote by $(\bq,\bam)\in[0,1]^2$.
Since $\Pa(q,m)\to +\infty$ for $m\to 0^+$, $\bam$ must lie in $(0,1]$: this implies that $\partial_m \Pa(\bq,\bam)\leq 0$ with equality if $\bam\in(0,1)$. Through this and the expression for $\partial_m \Pa$ in \eqref{der} we get that the implications \eqref{signs_dm} hold true.\\
Now, assume {\it ad absurdum} 
\beq\label{absurd}
\EE_{\gqmu}\left(g^2\right)\neq\bq.
\eeq
Since $\Phi''<0$ the assumption \eqref{absurd} implies $\bq \in \{0,1\}$, as otherwise one would get from the equation for $\partial_q P$ in \eqref{der} that $\partial_q \Pa(\bq,\bam)\neq 0$ which is impossible if $\bq\in (0,1)$. As $0$ is a left border value in order for it to be a component of a minumum point it should hold $\partial_q \Pa(\bq,\bam) \geq 0 $. But this together with \eqref{absurd}, \eqref{der} and $\Phi''<0$ would imply $\EE_{\gqm_1}\left(g^2 \right) < \bq = 0$, which contradicts the assumption $-1\leq g \leq 1$. Using that $\sup g^2 \leq 1$, the case $\bq = 1$ brings to an analogous contradiction. It thus follows that also \eqref{signs_dq} holds true.\\

Moreover, as \eqref{relen_gqm} is valid for all couples $(q,m)$, we find that $\Pa$ can be rewritten as 
\beq\label{Parisi_alt}
\bea
\Pa(q,m)=\Phi(q)-q\Phi'(q) &+ \EE_{\gqm}\left(f+\Phi'(q)g^2\circ\pi_1 \right)\, +\\ & \, +\, \frac{1}{m}\left( \log2-H(\gqm\mid\mu{\small \otimes}\gamma)\right)-\log2.
\eea
\eeq
Using \eqref{signs_dq} we get
\beq 
\bea
\Pa(\bq,\bam)=&\,\Phi(\bq)-\bq\Phi'(\bq)+\EE_{\gqmb}\left( f \right)+\Phi'(\bq)\bq+\frac{1}{\bam}\left( \log2-H(\gqmb\mid\mu{\small \otimes}\gamma)\right) - \log2\\
=&\,\Phi\left(\EE_{\gqmb}\left( g^2\circ\pi_1 \right)\right)+\EE_{\gqmb}\left( f \right)+\frac{1}{ \bam}\left( \log2-H(\gqmb\mid\mu{\small \otimes}\gamma)\right) - \log2.
\eea
\eeq
From the latter and \eqref{signs_dm} the equation \eqref{candidate_eq_par} follows straighforwardly. This concludes the Proof of the Lemma.
\end{proof}

\begin{prop}\label{prop2} There exists a minimum point $(\bq, \bam)$ of $\Pa$ on $[0,1]^2$ such that the unique optimal measure for the Boltzmann-Gibbs principle \eqref{bgvp_rem_tap} is  $\gqmb$; i.e. 
$$\sup_{\nu\in \mK}\, \G(\nub)-H(\nub\mid\mu{\small \otimes}\gamma) = \G(\gqmb)-H(\gqmb\mid\mu{\small \otimes}\gamma)$$
where $\gqmb$ is the generalized Gibbs measure with Radon-Nikodym derivative w.r.t. $\mu{\small \otimes}\gamma$ given by \eqref{radon_candid} for $(q,m)=(\bq,\bam)$.
\end{prop}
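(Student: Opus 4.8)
The plan is to derive the Proposition from Lemma~\ref{signs_derivatives} by sandwiching the value of the Boltzmann--Gibbs problem \eqref{bgvp_rem_tap} between two copies of $\inf_{(q,m)\in[0,1]^2}\Pa(q,m)+\log2$, the lower one being attained by the candidate measure. Fix \emph{any} minimum point $(\bq,\bam)$ of $\Pa$ on $[0,1]^2$: one exists because $\Pa$ is lower semicontinuous on the compact $[0,1]^2$ with $\Pa(q,m)\to+\infty$ as $m\downarrow0$, so that $\bam\in(0,1]$ --- exactly the opening of the proof of Lemma~\ref{signs_derivatives}. I then establish the two facts: (i) $\gqmb\in\mK$, so that it is admissible in \eqref{bgvp_rem_tap}, and $\G[\gqmb]-H(\gqmb\mid\mu\smot\gamma)=\Pa(\bq,\bam)+\log2$; and (ii) $\G[\nub]-H(\nub\mid\mu\smot\gamma)\le\Pa(q,m)+\log2$ for every $\nub\in\mK$ and every $(q,m)\in[0,1]^2$. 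Granting both, one takes in (ii) the supremum over $\nub\in\mK$ and the infimum over $(q,m)$, and compares with (i) and with $\Pa(\bq,\bam)=\inf_{(q,m)}\Pa(q,m)$; all the inequalities collapse to equalities, so the supremum in \eqref{bgvp_rem_tap} is attained at $\gqmb$, which is moreover the unique maximiser by the strict concavity of $\nub\mapsto\G[\nub]-H(\nub\mid\mu\smot\gamma)$ on the convex compact $\mK$ recorded in Remark~\ref{exist_uniq}. That is precisely the assertion.

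Part (i) is a direct reading of Lemma~\ref{signs_derivatives}: \eqref{signs_dm} gives $H(\gqmb\mid\mu\smot\gamma)\le\log2$, hence $\gqmb\in\mK$; \eqref{signs_dq} gives $\EE_{\gqmub}(g^2)=\bq$; and \eqref{candidate_eq_par} is the displayed identity. The content lies in part (ii), which I would prove by a supporting-hyperplane argument combined with the Donsker--Varadhan variational formula for the relative entropy. Fix $\nub\in\mK$ and $(q,m)\in[0,1]^2$ with $m>0$ (the case $m=0$ is immaterial, since $\Pa(q,m)\to+\infty$ as $m\downarrow0$), and denote by $\nu_1$ the first marginal of $\nub$. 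First, concavity of $\Phi$ forces its tangent at $q$ above its graph, so $\Phi\big(\EE_{\nu_1}(g^2)\big)\le\Phi(q)-q\Phi'(q)+\Phi'(q)\,\EE_{\nu_1}(g^2)$: this is the step that linearises the nonlinearity and produces the ``Lagrange multiplier'' term $\Phi(q)-q\Phi'(q)$ of the Parisi function \eqref{Parisi}. Second, since $g,f\in C_b(S)$ the function $u\defeq\Phi'(q)\,g^2\circ\pi_1+f$ lies in $\mathcal L\big(S^2,\mu\smot\gamma\big)$, so Donsker--Varadhan (Appendix~\ref{prop_H}) yields $\EE_{\nub}(u)-\tfrac1m H(\nub\mid\mu\smot\gamma)\le\tfrac1m\log Z^{q,m}$. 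Third, since $m\le1$ and $\nub\in\mK$ one may replace $-H(\nub\mid\mu\smot\gamma)$ by $-\tfrac1m H(\nub\mid\mu\smot\gamma)+\big(\tfrac1m-1\big)\log2$ at no loss. Substituting these three estimates into $\G[\nub]-H(\nub\mid\mu\smot\gamma)=\Phi\big(\EE_{\nu_1}(g^2)\big)+\EE_{\nub}(f)-H(\nub\mid\mu\smot\gamma)$ and rearranging according to the definition of $\Pa$ should deliver (ii), the inequality dual to \eqref{candidate_eq_par} (with the same additive constant).

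Modulo Lemma~\ref{signs_derivatives} the argument is short. The only structural input is the tangent-line bound on $\Phi$, which turns the in-built nonlinearity into a Legendre-type duality and is precisely what makes the correction $\Phi(q)-q\Phi'(q)$ appear; this is the conceptual heart of the matter rather than an obstacle. The one technical point requiring care is the interplay between the relative-entropy side constraint $H(\cdot\mid\mu\smot\gamma)\le\log2$ defining $\mK$ and the Donsker--Varadhan bound --- namely the replacement of $-H$ by $-\tfrac1m H+(\tfrac1m-1)\log2$, which is legitimate only because $m\le1$ --- together with the correct bookkeeping of the additive $\log2$ constants.
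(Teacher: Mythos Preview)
Your argument is correct and uses exactly the same three tools as the paper's proof: the tangent-line bound $\Phi(x)\le\Phi(q)+\Phi'(q)(x-q)$ from concavity of $\Phi$, the Donsker--Varadhan inequality (which is precisely the content of Proposition~\ref{ub_H_Gu} in the appendix), and the side constraint $H(\nub\mid\mu\smot\gamma)\le\log2$ for $\nub\in\mK$. The only difference is organisational. You establish the weak-duality bound $\Omega[\nub]\le P(q,m)+c$ uniformly over $(q,m)\in[0,1]\times(0,1]$ and then specialise to the minimiser; the paper instead fixes $(\bq,\bam)$ from the start, applies Proposition~\ref{ub_H_Gu} with $u=\bam\bigl(\Phi'(\bq)\,g^2\circ\pi_1+f\bigr)$ to compare $\nub$ directly with $\gqmb$, and then disposes of a residual $(1-\bam)\int u\,d(\nub-\gqmb)$ term by the case split $\bam=1$ versus $\bam\in(0,1)$. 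Your third estimate, the replacement $-H\le-\tfrac1m H+(\tfrac1m-1)\log2$, is precisely what that case split accomplishes, carried out once for all $m\in(0,1]$; so the two packagings are equivalent.

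One bookkeeping remark you already flagged: if you actually carry your three estimates through, the additive constant in (ii) comes out as $-\log2$ rather than $+\log2$, consistently with $\sup_{\mK}\Omega=\inf P$ as forced by Theorems~\ref{teo1} and~\ref{teo2}. The ``$+\log2$'' you wrote simply mirrors the constant displayed in \eqref{candidate_eq_par}. Since the same constant sits on both sides of your sandwich, the argument closes either way; just be sure to track it consistently when you write things out in full.
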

\begin{proof}
Consider $\bar{\nub}\in \mK$ solution of the Boltzmann-Gibbs principle \eqref{bgvp_rem_tap}.
Then, let $(\bq,\bam)$ be a minimum point of $\Pa$ on $[0,1]^2$. Lemma \eqref{signs_derivatives}, specifically \eqref{signs_dm}, ensures that $\gqmb\in K$; this amounts to say that $\gqmb$ is a viable candidate to solve the Boltzmann-Gibbs principle.\\

We now use Proposition \ref{ub_H_Gu} from the Appendix \ref{prop_H}, recalling the variational expression for relative entropy functionals, to show that

\beq \label{upper_candid}
\Phib(\nub)-H(\nub\mid\mu{\small \otimes}\gamma)\leq \Phib(\gqmb)-H(\gqmb\mid\mu{\small \otimes}\gamma)\quad \forall \nub\in K,
\eeq

which proves the Proposition. Specifically, as $g, f_1, f_2\in C_b(S)$, we can apply \ref{ub_H_Gu} to $\bam(\Phi'(\bq) g^2\circ\pi_1+f)\in C_b(S^2)$ to see that
\begin{equation}\label{ub_H_qm}
\bea
&H(\nub\mid\mu{\small \otimes}\gamma) \geq  H(\gqmb\mid\mu{\small \otimes}\gamma) + \bam\int \left( f +\Phi'(\bq)g^2\circ\pi_1\right)d(\nub-\gqmb)
\eea
\eeq
for all $\nub\in \pst$. Specifically
\begin{equation}\label{ub_H_qm}
\bea
\EE_{\nub}(f) -H(\nub\mid\mu{\small \otimes}\gamma) &\leq\EE_{\nub}(f) -H(\gqmb\mid\mu{\small \otimes}\gamma) -\bam\int \left( f +\Phi'(\bq)g^2\circ\pi_1\right)d(\nub-\gqmb)\\
&\bea\, =\EE_{\gqmb}&(f)  -H(\gqmb\mid\mu{\small \otimes}\gamma) -\int \Phi'(\bq)g^2\circ\pi_1d(\nub-\gqmb) \,+\\&+(1-\bam)\int \left( f +\Phi'(\bq)g^2\circ\pi_1\right)d(\nub-\gqmb)\eea\\
&\bea = -\Phi'(\bq)&\left(\,\EE_{\nub}\left( g\circ\pi_1 \right)-\bar q\, \right) + \EE_{\gqmb}(f)-H(\gqmb\mid\mu{\small \otimes}\gamma)\,+\\&+(1-\bam)\int \left( f +\Phi'(\bq)g^2\circ\pi_1\right)d(\nub-\gqmb)\eea\\
\eea
\eeq\\
where in the last line we used that, as shown in Lemma \ref{signs_derivatives}, for a minimum point $(\bar q, \bar m)$ of $P$ it holds  
\beq\label{q_eq}
\bq=\EE_{\gqmb}\left(g^2\circ\pi_1\right).
\eeq\\
Since $\Phi$ is concave and differentiable, it  satisfies
$\Phi(x)- \Phi'(y)(x-y)\leq \Phi(y)$  for all $x,y\in \R$. Through \eqref{ub_H_qm}, and again \eqref{q_eq}, this implies
\begin{equation} \label{fifty}
\bea \Phib(\nub)-H(\nub\mid\mu{\small \otimes}\gamma)&=\Phi\left( \EE_{\nub}\left( g\circ\pi_1 \right) \right) +  \EE_{\nub}\left( f \right) -H(\nub\mid\mu{\small \otimes}\gamma).\\
& \leq   \Phib(\gqmb)-H(\gqmb\mid\mu{\small \otimes}\gamma)+(1-\bam)\int \left( f +\Phi'(\bq)g^2\circ\pi_1\right)d(\nub-\gqmb).
\eea\eeq\\
From the latter we see that if $\bam=1$ the claim \eqref{upper_candid} follows immediately.
If instead $\bam\in(0,1)$ then $H(\gqmb\mid\mu\smot\gamma)=\log2$ and Proposition \ref{ub_H_Gu} implies that
$$\bam\int\left( f + \Phi'(\bq)g^2\circ\pi_1\right) d(\nub-\gqmb) \leq H(\nub\mid\mu\smot\gamma)-\log2\leq 0$$
for all $\nub\in K$. 
Since both $\bam$ and $1-\bam$ are non-negative this implies
$$  (1-\bam)\int \left( f +\Phi'(\bq)g^2\circ\pi_1\right)d(\nub-\gqmb)\leq 0$$
which used on \eqref{fifty} gives \eqref{upper_candid}, which is therefore now proved also for $\bam\in(0,1)$.\\
All in all, we have shown that if $(\bq,\bam)$ is a minimum point of $\Pa$ on $[0,1]^2$ then the unique extremal measure $\bar{\nub}$ of the Boltzmann-Gibbs principle \eqref{bgvp_rem_tap} must be one of the $\gqmb$, namely the thesis of the Proposition is settled. 
\end{proof}

\subsection{The  limiting Gibbs measure: proof of Theorem \ref{teo3}.}

Here and henceforth, we will denote by ${\nubb}\in\pst$ the measure solving the Boltzmann-Gibbs variational principle for a system in low temperature; i.e.
$$ d{\nubb}(x,y)\defeq\frac{\exp \bar m\left[\Phi'(\bar q)\,g^2(x)+f(x,y)\right]}{Z^{\bar q, \bar m}}\, d\mu(x)d\gamma(y)\,,$$
where $(\bar q, \bar m) \defeq \text{argmin}_{[0,1]^2} \Pa$ and $\bar m < 1$.
Some notation: let $\{(Z_i,W_i)\}_{i\leq N}$ be i.i.d. $S^2$-valued random vectors with common distribution ${\nubb}$ and defined on a probability space $(\Omega', \mathcal F', \EE)$. Setting $\boldsymbol Z \defeq (Z_1,W_1, \cdots, Z_N,W_N)\in S^{2N}$ define 
$$X_N({\bf Z})\defeq \frac{1}{\sqrt{N}} \sum\limits_{i=1}^N  \left(g^2(Z_i) - \bar q\right), \quad\quad Y_N({\bf Z})\defeq \frac{1}{\sqrt{N}} \sum\limits_{i=1}^N  \left( f(Z_i,W_i) - \EE_{{\nubb}}(f)\right)$$
and consider the real valued random vectors
\beq \label{vectors}
\vecb{X_N}{Y_N} \equiv \vecb{X_N(\boldsymbol Z)}{Y_N (\boldsymbol Z) },
\eeq

together with their covariance matrix, say $\Sigma$, which we assume to be invertible.\\

Set
\beq
C_{\bar q, \bar m} \defeq -\bar m\Phi''(\bar q)+\left(1,\,-\Phi'(\bar q) \right) \cdot \Sigma^{-1} \cdot \left(1,\,-\Phi'(\bar q) \right)^{\top},
\eeq 

where $\left(1,\,-\Phi'(\bar q) \right)^{\top}$ is the transpose of the vector $\left(1,\,-\Phi'(\bar q) \right)\in \R^2$ and "$\cdot$" the matrix-vector product.\\
Notice that $\Phi''(\bar q) < 0$ implies $C\equiv C_{\bar q, \bar m}>0$. 

\begin{prop} \label{PPP} Under the assumptions of Theorem \ref{teo3}, the point process
\beq 
\Xi_N \defeq \sum\limits_{\alpha=1}^{2^N} \delta_{ \Hm_N(\alpha) - N\left[ \Phi(\bar q) - \EE_{\nubb}( f ) \right] - \om },
\eeq
where
\beq \label{omega}
\om \defeq -\frac{1}{\bar m}\log \sqrt{ 2\pi N |\Sigma| C };
\eeq
converges weakly to a Poisson point process with intensity measure $e^{-\bar mz}dz$.
\end{prop}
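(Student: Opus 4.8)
The plan is to reduce the convergence of $\Xi_N$ to a statement about a single configuration's energy and then invoke a standard REM-type Poisson convergence criterion. Since the $2^N$ empirical measures $\{\Lb\}_{\alpha=1}^{2^N}$ are i.i.d. (built from independent i.i.d. blocks), the points $\Hm_N(\alpha) - N[\Phi(\bar q) - \EE_{\nubb}(f)] - \om$ are i.i.d. across $\alpha$. By the classical criterion for convergence of such triangular arrays to a Poisson point process with intensity $e^{-\bar m z}\,dz$, it suffices to show that for every fixed $z \in \R$,
\beq
2^N \, \PP\Big( \Hm_N(1) - N[\Phi(\bar q) - \EE_{\nubb}(f)] - \om \geq z \Big) \;\longrightarrow\; \frac{1}{\bar m} e^{-\bar m z} \qquad (N \to \infty),
\eeq
together with the complementary (and easier) checks that no atoms escape to $+\infty$ and that the limiting measure has no atoms. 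So the heart of the matter is a sharp, \emph{exponentially precise} (not merely logarithmic) large deviation estimate for the single random variable $\Hm_N(1) = N\,\G[\boldsymbol L_{N,1}]$.

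First I would perform the usual exponential tilt: change measure from $\mu\smot\gamma$ to $\nubb$, whose Radon--Nikodym derivative is exactly the Boltzmann factor $\exp\bar m[\Phi'(\bar q)g^2\circ\pi_1 + f]/Z^{\bar q,\bar m}$. Under $\EE$ (the $\nubb$-law), write $\G[\boldsymbol L_{N,1}] = \Phi\big(\EE_{\nu_1}(g^2) + X_N/\sqrt N\big) + \EE_{\nubb}(f) + Y_N/\sqrt N$ where $X_N, Y_N$ are the centered, normalized sums defined just before Proposition \ref{PPP}; note $\EE_{\bar\nu_1}(g^2) = \bar q$ by \eqref{signs_dq}, so $\Phi(\EE_{\nu_1}(g^2)) = \Phi(\bar q) + \Phi'(\bar q)X_N/\sqrt N + \tfrac12\Phi''(\bar q)X_N^2/N + o(1/N)$ by Taylor expansion. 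Hence, modulo the change-of-measure Jacobian, the event $\{\Hm_N(1) - N[\Phi(\bar q)-\EE_{\nubb}(f)] \geq z + \om\}$ becomes, after collecting terms, an event of the form $\{\Phi'(\bar q)X_N\sqrt N + Y_N\sqrt N + \tfrac12\Phi''(\bar q)X_N^2 \geq z + \om + o(1)\}$. Combining the Jacobian $(Z^{\bar q,\bar m})^N \exp(-\bar m[\Phi'(\bar q)\sum g^2 + \sum f]) = \exp(-\bar m\sqrt N(\Phi'(\bar q)X_N + Y_N) - N\bar m[\Phi'(\bar q)\bar q + \EE_{\nubb}(f)] + N\log Z^{\bar q,\bar m})$ with the saturation identity $N\bar m[\Phi'(\bar q)\bar q + \EE_{\nubb}(f)] - N\log Z^{\bar q,\bar m} = N\log 2$ from \eqref{conditions}, the factor $2^N$ is precisely cancelled and one is left with
\beq
2^N\,\PP(\cdots) = \EE\Big[\exp\big(-\bar m\sqrt N(\Phi'(\bar q)X_N + Y_N)\big)\,\mathbf 1_{\{\Phi'(\bar q)X_N\sqrt N + Y_N\sqrt N + \frac12\Phi''(\bar q)X_N^2 \geq z + \om + o(1)\}}\Big].
\eeq

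Next I would evaluate this expectation by a local central limit / Laplace argument. By the multivariate CLT, $(X_N, Y_N) \Rightarrow \mathcal N(0,\Sigma)$; writing $u = \Phi'(\bar q)X_N\sqrt N + Y_N\sqrt N$ (a sum whose variance is $N\,(1,-\Phi'(\bar q))\Sigma(1,-\Phi'(\bar q))^\top$ up to sign conventions) and using the density assumption in Theorem \ref{teo3} to guarantee a local CLT for the relevant linear combination, the integral concentrates where $u$ is of order $1$; there $X_N = O(1/\sqrt N)$ so the quadratic term $\tfrac12\Phi''(\bar q)X_N^2 = O(1/N)$ is negligible inside the indicator but \emph{not} inside the exponent after one rescales — this is exactly why $\om$ carries the $\log\sqrt{2\pi N|\Sigma|C}$ correction and why $C_{\bar q,\bar m}$ (which packages both $-\bar m\Phi''(\bar q)$ and the Gaussian quadratic form) appears. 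Carrying out the Gaussian integral $\int_{\R^2} e^{-\bar m u}\mathbf 1_{\{u \geq z+\om\}}\,\tfrac{1}{2\pi N|\Sigma|}e^{-\frac12 v^\top(N\Sigma)^{-1}v}\,dv$ with $u$ the appropriate coordinate, completing the square, and plugging in the definition \eqref{omega} of $\om$, yields exactly $\tfrac1{\bar m}e^{-\bar m z}$ after the prefactors cancel. Finally, the no-escape-of-mass and no-atom conditions follow from the same estimate (uniformity in $z$ on compacts, plus $\sup g = -\inf g = 1$ bounding $\G$) and from the absolute continuity hypothesis, respectively.

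The main obstacle I anticipate is making the Laplace/local-CLT step rigorous enough to control the \emph{constant}, not just the exponential rate: one needs a local central limit theorem for the (possibly lattice-valued) linear statistic $\Phi'(\bar q)X_N + Y_N$ together with uniform integrability of the tilted weight $e^{-\bar m\sqrt N(\Phi'(\bar q)X_N+Y_N)}$ restricted to the half-space, and one must track the $O(1/N)$ Taylor remainder of $\Phi$ carefully because it competes with the $\tfrac1N \log\sqrt N$ scale baked into $\om$. The absolute-continuity assumption in Theorem \ref{teo3} is precisely what is invoked to sidestep arithmetic (lattice) obstructions in this local limit theorem; handling the quadratic correction $\tfrac12\Phi''(\bar q)X_N^2$ cleanly — showing it contributes only through the constant $C_{\bar q,\bar m}$ and nothing else — is the delicate bookkeeping that the rest of the proof will need to nail down.
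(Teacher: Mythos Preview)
Your approach is essentially the paper's: exponential tilt from $\mu\otimes\gamma$ to $\nubb$ (Lemma~\ref{change_measure}), cancellation of $2^N$ via the saturated entropy condition \eqref{conditions}, Taylor expansion of $\Phi$ about $\bar q$, and a local CLT to evaluate the resulting integral. The paper phrases the Poisson criterion via Kallenberg's theorem on compacts $K$ (showing $\E\,\Xi_N(K)\to\int_K e^{-\bar m z}\,dz$) rather than via tail probabilities, and it invokes a \emph{two-dimensional} Edgeworth expansion (Bhattacharya--Rao, Theorem~\ref{BRR_19}) for the joint law of $(X_N,Y_N)$ rather than a one-dimensional local CLT for a linear statistic --- but these are packaging differences, not strategic ones.

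There is one point in your sketch that, if carried through as written, would give the wrong constant. You write that on the relevant event ``$X_N=O(1/\sqrt N)$ so the quadratic term $\tfrac12\Phi''(\bar q)X_N^2=O(1/N)$ is negligible inside the indicator''. This is not correct: the event constrains only the \emph{linear combination} $\sqrt N(\Phi'(\bar q)X_N+Y_N)$ to be $O(1)$, while $X_N$ itself remains free and of order $1$ by the CLT. After the change of variables $(\tilde x,\tilde y)=\bigl(x,\sqrt N(\Phi'(\bar q)x+y)+R_N(x)x^2-\om\bigr)$ the indicator becomes $\mathbf 1_K(\tilde y)$ and the exponent picks up $+\bar m R_N(\tilde x)\tilde x^2$; the remaining integral over $\tilde x\in\R$ is a genuine Gaussian integral combining this $-\bar m\Phi''(\bar q)/2$ curvature with the $\Sigma^{-1}$ quadratic form, and \emph{that} is where the full constant $C_{\bar q,\bar m}=-\bar m\Phi''(\bar q)+(1,-\Phi'(\bar q))\Sigma^{-1}(1,-\Phi'(\bar q))^\top$ comes from. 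If you treated $X_N^2$ as $O(1/N)$ you would lose the $-\bar m\Phi''(\bar q)$ contribution and the prefactor would not match $\om$. You clearly sense this (you correctly name both ingredients of $C$ and flag the quadratic as ``the delicate bookkeeping''), but your heuristic explanation for \emph{why} it survives is off; make sure in the write-up you do the full 2D Gaussian integral rather than collapsing to a 1D problem.
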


Theorem \ref{teo3} follows from

\begin{itemize}
\item[i)] Proposition \ref{PPP} together with 
\item[ii)] the exponential transform 
\[
\Hm_N(\alpha) - N\left[ \Phi(\bar q) - \EE_{\nubb}( f ) \right] - \om \mapsto  \exp\left( \Hm_N(\alpha) - N\left[ \Phi(\bar q) - \EE_{\nubb}( f ) \right] - \om \right)\,,
\]
which maps the Poisson point process with intensity measure $e^{-\bar mz}dz$ to a Poisson point process on the positive line with intensity measure $t^{-\bar m - 1} dt$;
\item[iii)] the fact that infinite volume limit and the Gibbs-normalization commute.
\end{itemize}
Items {\it ii-iii)} are fairly standard in the literature: their proof is omitted, but we refer the reader to, say, \cite{BK1} for details.  \\

\noindent For the sake of simplicity, we shall prove Proposition \ref{PPP} assuming that Theorem \ref{BRR_19} (see Appendix \ref{local_CLT} and \cite[Theorem $19.5$]{BRR}) holds for the normalized vectors \eqref{vectors}; this is equivalent to the assumption that at least one measure among ${\nubb}\circ(g\circ\pi_1-\bar q)^{-1}$ and ${\nubb}\circ(f-\EE_{\nubb}(f))^{-1}$ has a density w.r.t. the Lebesgue measure on $\R$.

\begin{lem} \label{change_measure}
In low temperature, for any  $(\mu\smot\gamma)^{{\small \otimes} N}-$measurable function $F:S^{2N} \to \R$
\beq
2^N\E\left( F\left(X_{1,1},Y_{1,1}, \dots, X_{1,N},Y_{1,N}\right) \right)=\EE\left( e^{-\bar m\sqrt{N}(\Phi'(\bar q)X_N+Y_N)} \, F( \boldsymbol Z ) \right).
\eeq
\end{lem}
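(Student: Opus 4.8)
The identity is a change of reference measure (an exponential tilt) followed by one model-dependent simplification. By Definition \ref{def_Ham} the variables $(X_{1,i},Y_{1,i})_{i=1}^N$ are i.i.d.\ with law $\mu\smot\gamma$, so the plan is to write the left-hand side as $2^N\int_{S^{2N}}F\,d(\mu\smot\gamma)^{\smot N}$ and to pass to $\nubb^{\smot N}$, the law of $\boldsymbol Z$ under $\EE$. Since $g,f_1,f_2\in C_b(S)$, the density $\tfrac{d\nubb}{d(\mu\smot\gamma)}(x,y)=(Z^{\bar q,\bar m})^{-1}\exp\bar m[\Phi'(\bar q)g^2(x)+f(x,y)]$ is bounded above and bounded away from zero, so $\nubb^{\smot N}$ and $(\mu\smot\gamma)^{\smot N}$ are mutually absolutely continuous and the tilt is legitimate, with product Radon--Nikodym derivative $\tfrac{d(\mu\smot\gamma)^{\smot N}}{d\nubb^{\smot N}}(\boldsymbol z)=(Z^{\bar q,\bar m})^{N}\exp\big(-\bar m\sum_{i=1}^N[\Phi'(\bar q)g^2(z_i)+f(z_i,w_i)]\big)$. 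Carrying this out rewrites the left-hand side as $\EE\big((2Z^{\bar q,\bar m})^N\,e^{-\bar m\sum_{i}[\Phi'(\bar q)g^2(Z_i)+f(Z_i,W_i)]}\,F(\boldsymbol Z)\big)$.

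The second step is to recognise the exponent through the centered sums defining $X_N$ and $Y_N$: since $\sum_{i=1}^N g^2(Z_i)=\sqrt N X_N+N\bar q$ and $\sum_{i=1}^N f(Z_i,W_i)=\sqrt N Y_N+N\,\EE_{\nubb}(f)$, one gets $\sum_{i=1}^N[\Phi'(\bar q)g^2(Z_i)+f(Z_i,W_i)]=\sqrt N(\Phi'(\bar q)X_N+Y_N)+N[\Phi'(\bar q)\bar q+\EE_{\nubb}(f)]$. Substituting this, the random part of the exponent reproduces exactly the weight $e^{-\bar m\sqrt N(\Phi'(\bar q)X_N+Y_N)}$ of the statement, while the rest factors out as the deterministic constant $\exp\big(N[\log 2+\log Z^{\bar q,\bar m}-\bar m(\Phi'(\bar q)\bar q+\EE_{\nubb}(f))]\big)$.

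The crux — and essentially the only non-bookkeeping point — is that this deterministic constant equals $1$, which is where the low-temperature hypothesis $\bar m<1$ is used. Combining \eqref{relen_gqm} with $\EE_{\gqmub}(g^2)=\bar q$ from \eqref{signs_dq} gives $\log Z^{\bar q,\bar m}=\bar m[\Phi'(\bar q)\bar q+\EE_{\nubb}(f)]-H(\nubb\mid\mu\smot\gamma)$; since $\bar m\in(0,1)$, Lemma \ref{signs_derivatives} (the first line of \eqref{signs_dm}) forces $H(\nubb\mid\mu\smot\gamma)=\log 2$, which is precisely the saturation recorded in \eqref{conditions} and which makes the bracket above vanish. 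I expect this cancellation to be the main obstacle in spirit, even though it is short: it is not the generic tilting identity but relies on the entropy constraint being active — were $\bar m=1$, the two sides would differ by a nontrivial deterministic exponential factor. Everything else (expressing the expectation as an integral, the product form of the Radon--Nikodym derivative, and the reindexing of the sums) is routine; the only care needed is that $F$ is taken $(\mu\smot\gamma)^{\smot N}$-measurable so that both sides are well defined, e.g.\ as an identity in $[0,\infty]$ for nonnegative $F$.
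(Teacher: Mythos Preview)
Your proof is correct and follows essentially the same route as the paper's: compute the product Radon--Nikodym derivative $d(\mu\smot\gamma)^{\smot N}/d\nubb^{\smot N}$, center the exponent via $X_N,Y_N$, and kill the leftover deterministic factor using the low-temperature entropy saturation $\log Z^{\bar q,\bar m}+\log 2=\bar m[\Phi'(\bar q)\bar q+\EE_{\nubb}(f)]$ recorded in \eqref{conditions}. The only cosmetic difference is that the paper invokes \eqref{conditions} directly, while you re-derive it from \eqref{relen_gqm}, \eqref{signs_dq} and the first line of \eqref{signs_dm}.
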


\begin{proof}
By definition, for any ${\bf y} = (y_1, \cdots, y_N) \in S^{2N}, y_i\in S^2$, it holds: 
\beq \bea \label{radon_G_N}
\frac{d(\mu\smot\gamma)^{{\small \otimes} N}}{d{\nubb}^{{\small \otimes} N}}({\bf y} ) & = \prod\limits_{i=1}^N \frac{d(\mu\smot\gamma)}{d{\nubb}}(y_i) = \exp\left(N\log Z^{\bar q, \bar m} -\bar m\sum\limits_{i=1}^N \Phi'(\bar q)\, g^2\circ\pi_1(y_i) + f(y_i) \right)\,, 
\eea \eeq
hence
\beq \label{after_change_F} 
2^N\E\left( F\left(X_{1,1},Y_{1,1}, \dots, X_{1,N},Y_{1,N}\right) \right)= e^{N\left(\log Z^{\bar q, \bar m}+\log2\right)} \int e^{-\bar m\sum\limits_{i=1}^N \Phi'(\bar q) g^2\circ\pi_1(y_i) + f(y_i)} F({\bf y}) d{\nubb}^{{\small \otimes} N}({\bf y}) . 
\eeq 
By the entropy condition \eqref{conditions},
\beq \label{post_radon_G_N}
\log Z^{\bar q, \bar m}+\log2 = \bar m \left[ \Phi'(\bar q) \bar q + \EE_{\nubb}(f) \right].
\eeq
Plugging this in \eqref{after_change_F}, and remembering the definition of the vectors \eqref{vectors}, the Lemma follows straightforwardly.
\end{proof}

\vskip1cm

\begin{proof}[Proof of Proposition \ref{PPP}]

We will show that for any compact $K\subset \R$ 

\beq \label{half_kallenberg}
\lim\limits_{N\to\infty} \E\left(  \Xi_N(K) \right) = \int_K e^{-\bar mz} dz\,.
\eeq
Due to the complete independence over the $\alpha$-s, this suffices to prove the Lemma by Kallenberg's theorem \cite[Theorem 4.15]{kallenberg}. To this aim, recall that, by definition,

\beq
\Hm_N(1) =  N\Phi\left(\frac{1}{N}\sum\limits_{i=1}^N g^2(X_{1,i}) \right) + \sum\limits_{i=1}^N f(X_{1,i},Y_{1,i}),
\eeq

so that

\beq \label{expt_first}
\bea
\E\left[ \Xi_N(K) \right] & =\E\left[ \sum\limits_{\alpha=1}^{2^N} \delta_{\Hm_N(\alpha)-N(\Phi(\bar q)+\EE_{\nubb}(f))-\om}(K)\right]\\
& =2^N \PP\left[\Hm_N(1)-N(\Phi(\bar q)+\EE_{\nubb}(f))-\om\in K\right]\\
& =2^N \E\left[\boldsymbol1_{\Hm_N(1)-N(\Phi(\bar q)+\EE_{\nubb}(f))-\om\in K}\right].
\eea
\eeq

By Lemma \ref{change_measure},

\beq \label{app5}
\E\left[ \Xi_N(K) \right] =  \EE\left[e^{-\bar m    \sqrt{N}(\Phi'(\bar q)X_N+Y_N)     }\boldsymbol1_{N\left[ \Phi\left( \frac{X_N}{\sqrt{N}}+\bar q\right) - \Phi(\bar q) \right] + \sqrt{N}Y_N-\om\in K}\right].
\eeq

As $\Phi$ is twice differentiable, for any fixed $N\in \N$ we can consider a map

\beq
x\in I_N \defeq \left[ - \sqrt{N},  \,\,  \sqrt{N} \right] \, \to \, \xi_N(x) \in 
I \defeq [\bar q -1, \, \bar q +1],
\eeq

that to any $x\in I_N$ associates a point $\xi_N(x)\in I$ such that

\beq
\bea
\Phi\left(\frac{x}{\sqrt{N}} +\bar q\right)-\Phi(\bar q)-\Phi'(\bar q)\frac{x}{\sqrt{N}} & = \frac{1}{2}\Phi''(\xi_N(x))\frac{x^2}{N} \\
& = R_N(x)\frac{x^2}{N},
\eea
\eeq

where to lighten notation we defined

\beq \label{def_RN}
R_N(x)\defeq\frac{1}{2}\Phi''(\xi_N(x)).
\eeq 

As $\bar q, g^2(x) \in [0,1] \,\, \forall x\in S$, it holds

\beq \label{g_bound}
X_N \in I_N \,\, \text{for any realization of} \,\, X_N;
\eeq

specifically we can write

\beq \label{shift_and_taylor}
N\left[ \Phi\left( \frac{X_N}{\sqrt{N}}+\bar q\right) - \Phi(\bar q) \right]  \equiv \sqrt{N} \Phi'(\bar q)X_N +R_N(X_N)X_N^2.
\eeq

\vskip0.5cm

Notice also (recalling that, by assumption,  $\Phi''(a)<0 \,\, \forall a \in \R$ ):

\beq \label{consequences_concavity1}
R_N(x)< 0 \quad \forall x\in I_N,
\eeq

\beq  \label{consequences_concavity3}
\lim_{N\to\infty} R_N(x) = \frac{1}{2}\Phi''(\bar q)<0 \quad \text{uniformly for} \,\, x\in o(\sqrt{N}),
\eeq

\beq  \label{consequences_concavity2}
\text{if} \qquad R \defeq \frac{1}{2}\, \inf\limits_{I} |\Phi''| > 0 \qquad \text{then} \qquad |R_N(x)| \geq R \quad \forall x\in I_N.
\eeq

\vskip0.5cm

Going back to \eqref{app5} we rewrite it as

\beq \label{num_points_int}
\E\left[ \Xi_N(K) \right] = \int_{I_N \times \R} e^{-\bar m \sqrt{N} (\Phi'(\bar q)x+y) } \Psi_N(x,y) dQ_N(x,y)
\eeq

where $Q_N \in \mathcal{M}_1^+(\R^2)$ is the distribution of $(X_N, Y_N)$ and 

\beq \label{indicator}
\Psi_N(x,y) \defeq 
\begin{cases}
1 \quad \text{if} \quad \sqrt{N}(\Phi'(\bar q)x+y)+R_N(x)x^2-\om \in K, \\
0 \quad \text{otherwise}.
\end{cases}
\eeq

It is easily seen that

\beq \label{resto}
\frac{1}{\sqrt{N}} \int_{I_N\times \R} e^{-\bar m\sqrt{N}(\Phi'(\bar q)x+y)} \Psi_N(x,y)  \,\, dxdy 
\eeq

vanishes as $N\to\infty$. Indeed, integrating by substitution according to the change of variables

\beq \label{change_variables}
(\tilde x , \,\, \tilde y) = \left(x, \,\, \sqrt{N}(\Phi'(\bar q)x+y)+R_N(x) x^2- \om \right)
\eeq

the function $\Psi_N(x,y)$ becomes $\boldsymbol 1_K(\tilde y)$ so that through \eqref{consequences_concavity1},  \eqref{consequences_concavity3} we easily obtain

\beq
\eqref{resto} \leq \frac{e^{-\bar m\om}}{N} \int_K e^{-\bar my'} dy' \int_\R e^{-\bar mRx'^2} dx' = O\left( \frac{1}{\sqrt{N}} \right) \qquad (N \uparrow \infty),
\eeq
where we also used the fact that $z\to e^{-\bar m z}$ is non-increasing and that the definition of $\om$ \eqref{omega} implies $e^{-\bar m\om}/N = O\left( N^{-1/2} \right)$. But this, through \eqref{num_points_int} and Theorem \ref{BRR_19},  yields
\beq \label{gauss_ppp}
\bea
\lim\limits_{N\to\infty} \E\left(\Xi_N(K)\right) =\lim\limits_{N\to\infty} \int_{I_N\times \R} e^{-\bar m\sqrt{N}(\Phi'(\bar q)x+y)} \Psi(x,y) \, \varphi_{\Sigma}(x,y) dxdy
\eea
\eeq
where $\varphi_{\Sigma}$ is the Gaussian bivariate density with mean $(0,0)$ and covariance matrix $\Sigma$.

We nof focus on the right hand side of \eqref{gauss_ppp} and write it as
\beq
\int_{I_N\times \R} e^{-\bar m\sqrt{N}(\Phi'(\bar q)x+y)} \Psi_N(x,y) \, \varphi_{\Sigma}(x,y) dxdy = \mathcal{J}^1_N + \mathcal{J}^2_N,
\eeq
where
\beq
\mathcal{J}^1_N\defeq \int \int_{-\log N}^{\log N}  e^{-\bar m\sqrt{N}(\Phi'(\bar q)x+y)} \Psi_N(x,y) \, \varphi_{\Sigma}(x,y) \, dx \, dy,
\eeq
and
\beq \label{gaussian_resto}
\mathcal{J}^2_N\defeq \int \int_{\left[-\log N,\, \log N \right]^{\mathsf{c}} \, \cap \, I_N} e^{-\bar m\sqrt{N}(\Phi'(\bar q)x+y)} \Psi_N(x,y) \, \varphi_{\Sigma}(x,y) \, dx \, dy.
\eeq
We begin by showing that 
\beq \label{premier}
\lim_{N\to\infty} \mathcal{J}^2_N= 0\,.
\eeq
Indeed,  by $K$-compactness  it holds that $K \subset [t,+\infty)$ for some $t\in \R$, so that for every $(x,y): \,\Psi_N(x,y)\neq 0$ the exponential term in \eqref{gaussian_resto} is bounded above by $e^{-\bar m \left(- R_N(x)x^2+\om+t\right)}$.  Specifically, again by virtue of \eqref{consequences_concavity1}, \eqref{consequences_concavity2}, we obtain
\beq 
\bea
\mathcal{J}^2_N & \leq e^{-\bar m(\om+t)} \int_{\left[-\log N,\, \log N \right]^{\mathsf{c}}} e^{\bar mR_N(x)x^2}  \varphi^1_\Sigma(x)dx \\
& \leq e^{-\bar m\left( R \log^2N + \om + t \right)} \in o(1) \quad \text{as} \quad N\to\infty,
\eea
\eeq
where $x\to\varphi^1_\Sigma(x)$ is the Gaussian univariate density of the first marginal of a bivariate Gaussian with density $(x,y)\to\varphi_\Sigma(x,y)$. This settles the claim \eqref{premier}.\\

As for $\mathcal{J}^1_N$, we have

\beq \label{before_change}
\mathcal{J}^1_N = \frac{1}{2\pi\sqrt{|\Sigma|}} \int_{-\log N}^{\log N}  \int e^{-m\sqrt{N}\left(\Phi'(\bar q)x+y\right)-\frac{1}{2}\vect\cdot\Sigma^{-1}\cdot\vect^\top} \Psi_N(\vect) \,\,d y \, dx,
\eeq
with $\vect^\top$ denoting transpose.  Note that with $(\tilde x, \, \tilde y)$ the variables as in \eqref{change_variables},  then
$$
\bea
\vect & \equiv \left( \tilde x, \, \frac{\tilde y - R_N(\tilde x) \tilde x^2 + \om}{\sqrt{N}} - \Phi'(\bar q)\tilde x \right)\\
& = \tilde x \cdot \left(1, \, -\Phi'(\bar q) \right) +  \frac{\tilde y - R_N(\tilde x) \tilde x^2 + \om}{\sqrt{N}} \cdot (0,1) 
\eea
$$
and that for any scalars $a,b\in\R$ and vectors $\vect_1, \vect_2 \in \R^2$ it trivially holds
\beq \label{distrib}
\bea
-\frac{1}{2}\left( a \vect_1+ b \vect_2 \right) \cdot \Sigma^{-1} \cdot \left( a \vect_1^\top+ b \vect_2^\top \right) = &-\frac{1}{2}a^2\vect_1 \cdot \Sigma^{-1} \cdot \vect_1^\top -ba\vect_1 \cdot \Sigma^{-1} \cdot \vect_2^\top + \\
& -\frac{1}{2}b^2\vect_2 \cdot \Sigma^{-1} \cdot \vect_2^\top.
\eea
\eeq
Specifically, as
\beq
\tilde y \in K, \,\, |\tilde x|\leq \log N \quad \Rightarrow \quad \frac{ \tilde y - R_N(\tilde x) \tilde x^2 +\om}{\sqrt{N}} \in O\left(\frac{\log^2 N}{\sqrt{N}}\right) \quad \left( N\to\infty \right)
\eeq
uniformly, and in light of \eqref{distrib}, we get that the quadratic exponent of the Gaussian density in the new variables $(\tilde x, \,\, \tilde y )$ equals
\beq
-\frac{1}{2}\tilde x^2\left(1, \, -\Phi'(\bar q) \right) \cdot \Sigma^{-1} \cdot \left(1, \, -\Phi'(\bar q) \right)^\top + O\left(\frac{\log^3 N}{\sqrt{N}}\right) \quad \left( N\to\infty \right).
\eeq

Therefore

\beq
\bea
\mathcal{J}^1_N 
= \frac{e^{-\bar m\om}}{2\pi\sqrt{|\Sigma|N}} \int_K e^{-\bar m \tilde y} \int_{-\log N}^{ \log N} e^{ - \tilde x^2\left\{ -\bar mR_N(\tilde x) + \frac{1}{2}\left(1,-\Phi'(\bar q)\right)\cdot\Sigma^{-1}\cdot\left(1,-\Phi'(\bar q)\right)^\top \right\} + o(1)}  \,\,d \tilde  x \, d \tilde y.
\eea
\eeq

As \eqref{consequences_concavity2} holds, we have

\beq 
\lim\limits_{N\to \infty} \sup_{x\in\R:\,\, |x|\leq \log N} R_N(x) =\frac{1}{2}\Phi''(\bar q),
\eeq

and by definition

\beq
\frac{e^{-\bar m\om}}{ 2\pi\sqrt{|\Sigma|N}}= \sqrt{ \frac{C}{2\pi} }.
\eeq

All in all, we get 

\beq
\lim\limits_{N\to\infty} \mathcal{J}^1_N  = \sqrt{ \frac{C}{2\pi} } \int_K e^{-\bar m\tilde y} \int e^{ - \frac{C}{2}\tilde x^2 }  \,\,d \tilde x \, d \tilde y = \int_K e^{-\bar m \tilde y} d \tilde y
\eeq

which ends the proof.

\end{proof}

\subsection{The limiting law of the overlap: proof of Proposition \ref{overlap_law}.} In this subsection we prove Proposition \ref{overlap_law}, i.e. we show that if the system is in low temperature and $\bar \nub$ is the solution of the Boltzmann-Gibbs principle \eqref{bgvp_rem_tap}, then the limits \eqref{overlap_limit_eq}, \eqref{overlap_limit_neq} hold.\\
First, we prove \eqref{overlap_limit_eq}. We start with a technical Lemma, which is a direct consequence of Lemma \ref{ub_lemma} and Theorem \ref{teo1}.\\
\begin{lem} \label{rep_av_F_as}
For every $r > 0$ and $O_N=O_N(\alpha,\alpha')$ s.t. for some $M>0$: $\left| O_N(\alpha, \alpha') \right| \leq M$ for every $\alpha, \alpha' \in \{1, \cdots, 2^N\}, \, N\in \N$; it holds
$$\lim_{N\to\infty} \left\langle {\bf 1}_{B^c_r}[\Lb] O_N(\alpha, \alpha') \right\rangle_N = 0\quad \PP-\text{a.s.}$$
where $B^c_{r} = \{\rhob \in \pst:\quad d(\rhob,\bar \nub)\geq r\}=\pst \setminus B_{\bar{\boldsymbol{\nu}},r}$.
\end{lem}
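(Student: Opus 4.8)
The plan is to dominate the replicated average, for every realisation of the disorder, by $M$ times the Gibbs mass of the atypical set $B^c_r$, and then to show that this mass decays exponentially fast. First I would use that $\langle\cdot\rangle_N$ is the product of two independent copies of the Gibbs measure $\mathcal{G}_N$ and that the indicator ${\bf 1}_{B^c_r}[\Lb]$ involves the first configuration $\alpha$ only: summing the second configuration out and invoking $|O_N(\alpha,\alpha')|\le M$ gives
\[
\Bigl| \bigl\langle {\bf 1}_{B^c_r}[\Lb]\, O_N(\alpha,\alpha') \bigr\rangle_N \Bigr| \;\le\; M \sum_{\alpha:\, \Lb \in B^c_r} \mathcal{G}_N(\alpha) \;=\; \frac{M}{Z_N}\sum_{\alpha:\, \Lb\in B^c_r} e^{N\Phib[\Lb]} \;=:\; M\, Q_N(r),
\]
so that the claim reduces to showing $Q_N(r)\to 0$, $\PP$-a.s., for every fixed $r>0$.

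For the decay of $Q_N(r)$ I would take logarithms, writing $\frac1N\log Q_N(r)=\frac1N\log\sum_{\alpha:\,\Lb\in B^c_r}e^{N\Phib[\Lb]}-\frac1N\log Z_N$ (the bound being trivial when the sum in the numerator is empty), and then restrict to the full-probability event on which the conclusions of Lemma \ref{ub_lemma} and Theorem \ref{teo1} both hold. On this event the first term on the right has $\limsup_N$ strictly below the exponential rate $\lim_N\frac1N\log Z_N$ of $Z_N$: this is exactly the first bullet of Lemma \ref{ub_lemma}, the strict gap stemming from the uniqueness of the maximiser $\bar{\boldsymbol\nu}=\text{argmax}_{\mathcal{K}}\Omega$, which lies in the open ball $B_r$ and thus forces $\sup_{\mathcal{K}\cap B^c_r}\Omega<\max_{\mathcal{K}}\Omega$; and the second term converges to that same rate by Theorem \ref{teo1}. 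Hence $\limsup_N\frac1N\log Q_N(r)<0$, so there is $\delta=\delta(r)>0$ with $Q_N(r)\le e^{-N\delta}$ for all $N$ large, and in particular $Q_N(r)\to 0$.

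Combining the two steps yields $|\langle {\bf 1}_{B^c_r}[\Lb] O_N(\alpha,\alpha')\rangle_N|\le M\,Q_N(r)\to 0$, $\PP$-a.s.; and since Lemma \ref{ub_lemma} and Theorem \ref{teo1} both hold, on one and the same full-measure event, simultaneously for all $r>0$, the exceptional null set can be taken independent of $r$. The argument is essentially a corollary of the two quoted results; the only points deserving a line of care are the replica factorisation for a bounded but indefinite-sign observable — handled by passing to absolute values and using $|O_N|\le M$ before summing out the second replica — and the bookkeeping of the almost-sure exceptional set just mentioned. I therefore expect no substantive obstacle.
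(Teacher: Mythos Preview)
Your proposal is correct and follows essentially the same route as the paper: both bound the replicated average by $M$ times the Gibbs mass of $B^c_r$, sum out the second replica, and then use the strict gap from Lemma~\ref{ub_lemma} together with Theorem~\ref{teo1} to obtain exponential decay of that mass. Your write-up is in fact a bit more careful than the paper's in flagging the empty-sum case and the uniformity of the null set in $r$.
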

\begin{proof}
Fix $r>0$, then
$$\bea 
\left\langle {\bf 1}_{B^c_r}[\Lb] \left| O_N(\alpha, \alpha') \right| \right\rangle_N  & \leq M \left\langle {\bf 1}_{B^c_r}[\Lb] \right\rangle_N \\
& = M \sum\limits_{\alpha: \Lb\not\in B_r}\Gm \sum\limits_{\alpha'\leq 2^N} \Gmp \\
& = M \sum\limits_{\alpha: \Lb\not\in B_r}\Gm.
\eea$$
Therefore, showing that $\PP-$a.s.
\beq \label{claim4}
\lim\limits_{N\to\infty}\sum\limits_{\alpha: \Lb\not\in B_r}\Gm=\lim\limits_{N\to\infty}\frac{\sum\limits_{\alpha: \Lb\not\in B_r}e^{N\Phib[L_{N,\alpha}]}}{Z_N}=0,
\eeq
would prove the Lemma.
We therefore claim \eqref{claim4}.\\
Lemma \ref{ub_lemma} and Theorem \ref{teo1} imply that $\PP-$a.s. it holds
$$F_r \defeq \limsup\limits_{N\to \infty} \frac{1}{N}\log \sum_{\alpha: \Lb \not\in B_r} e^{N\Phib[L_{N,\alpha}]} < \lim\limits_{N\to \infty} \frac{1}{N}\log Z_N = F.$$
Specifically, for $\eta\defeq F-F_r > 0$, it holds
$$Z_N > e^{N\left(F-\frac{\eta}{4}\right)}, \quad \sum\limits_{\alpha: \Lb\not\in B_r}e^{N\Phib[L_{N,\alpha}]} < e^{N\left(F_r+\frac{\eta}{4}\right)},$$
$\PP-$a.s and provided that $N$ is large.
As the latter implies
$$\sum\limits_{\alpha: \Lb\not\in B_r}\Gm=\frac{\sum\limits_{\alpha: \Lb\not\in B_r}e^{N\Phib[L_{N,\alpha}]}}{Z_N}\leq e^{-\frac{\eta}{2}N},$$
the claim \eqref{claim4} is settled and the Lemma follows.
\end{proof}
\vskip0.4cm
\noindent Notice that the non-negative operator $T:\pst \to \R$ defined as 
$$T[\nub]\defeq \left( \int g^2(x)\nu_1(dx)-\bar q \right)^2$$ 
is continous, bounded and such that (recall $\bar q = \int g^2(x) \bar{\nu}_1$)
$$T\left[ \Lb \right]=\left(q_N(\alpha, \alpha)-\bar q\right)^2, \quad T\left[ \nubb \right]=0.$$
Specifically, given an arbitrary $\delta >0$ we can find $r>0$ such that if $\Lb\in B_r \equiv \oball$, then $T[\Lb] < \delta$. This implies
\beq \label{pr3.6_1}
\bea
\repav{{\bf 1}_{B_\epsilon}[\Lb]\left( q_N(\alpha,\alpha')-\bar q\right)^2\delta_{\alpha=\alpha'}}
&=\repav{{\bf 1}_{B_\epsilon}[\Lb]T\left[ \Lb \right]\delta_{\alpha=\alpha'}}\\
&=\E \sum\limits_{\alpha: \Lb \in B_\epsilon } T[\Lb] \Gm^2 \leq \delta
\eea \eeq
where in the last line we used that in low temperature Theorem \ref{teo3} guarantees that the process $\{ \Gm \}_{\alpha\leq 2^N}$ converges to a Poisson-Dirichlet point process with parameter $\bar m \in (0,1)$, and this implies $\E \sum_{\alpha \leq 2^N} \Gm^2 \sim 1-\bar m \in (0,1)$. Specifically, applying Lemma \ref{rep_av_F_as} we get 
$$\lim\limits_{N\to\infty} \repav{\left( q_N(\alpha,\alpha')-\bar q\right)^2\delta_{\alpha=\alpha'}} = \repav{{\bf 1}_{B_\epsilon}[\Lb]\left( q_N(\alpha,\alpha')-\bar q\right)^2\delta_{\alpha=\alpha'}} \leq \delta,$$
which, being $\delta$ arbitrarily small, proves the first claim \eqref{overlap_limit_eq} of Proposition \ref{overlap_law}.\\

In order to conclude the proof of the Proposition, we only need to show \eqref{overlap_limit_neq}.\\

To this aim, define 
$$\Tlb\defeq\frac{1}{N-2}\sum\limits_{i=3}^N\delta_{\left(X_{\alpha,i},Y_{\alpha,i}\right)},$$
and let $\tlbu, \tlbd$ be its marginals. Then, similarly to the proof of \ref{PPP}, for any fixed $N\in \N$ consider a map $\alpha \to \, \zeta_N^\alpha$ that to any configuration $\alpha$ associates a point $\zeta_N^\alpha\in [0,1]$ (specifically in the interval between $\EE_{L^1_{N,\alpha}}(g^2)$ and $\EE_{\tlbu} (g^2)$) such that

\beq
\bea
\Phi\left(\EE_{L^1_{N,\alpha}} (g^2) \right)=\Phi\left(\EE_{\tlbu} (g^2) \right)+\Phi'\left( \zeta_N^\alpha \right)\left[ \EE_{L^1_{N,\alpha}}(g^2) - \EE_{\tlbu} (g^2) \right].
\eea
\eeq
\\

By construction, one has

\beq\label{two_ham}\Hm_N(\alpha)=N\Phib\left[\Lb\right] = (N-2)\Phib\left[\Tlb\right]  + W_N{}(\alpha)+R_N(\alpha)
\eeq

where

$$\bea
W_N(\alpha)&\defeq \Phi'(\zeta^\alpha_N)\left[ g^2(X_{\alpha,1})+g^2(X_{\alpha,2})\right]+f(X_{\alpha,1},Y_{\alpha,1})+f(X_{\alpha,2},Y_{\alpha,2}) \, ;\\
R_N(\alpha)&\defeq2\Phi\left( \EE_{L^1_{N,\alpha}} (g^2)\right) -2\Phi'\left( \zeta_N^\alpha\right)\EE_{L^1_{N,\alpha}} (g^2) \, .
\eea$$
This implies that for every map $(\alpha, \alpha')\to O_N(\alpha, \alpha')$
\beq\label{confronto_av}
\bea
\E \langle O_N(\alpha,\alpha') \rangle_N & = \E \frac{\langle\langle O_N(\alpha,\alpha') e^{W_N(\alpha)+W_N(\alpha') + R_N(\alpha) +  R_N(\alpha')} \rangle\rangle_N }{\langle\langle e^{W_N(\alpha)+W_N(\alpha') + R_N(\alpha) +  R_N(\alpha')} \rangle\rangle_N }\\
\eea
\eeq
where we defined
$\langle\langle \cdot \rangle\rangle_N \defeq \tilde{Z}_N^{-2}\sum_{\alpha, \alpha'}\cdot \,\, \Gmt \Gmtp$
and
$$\Gmt \defeq \frac{\exp{(N-2)\Phib[\Tlb]}}{\tilde{Z}_N}, \quad \tilde{Z}_N
\defeq \sum\limits_{\alpha=1}^{2^N} e^{(N-2)\Phib[\Tlb]}.$$

As $\Phi$ is twice differentiable, for every $\delta>0$ there exists $\delta'>0$ such that, being $L^1_{N,\alpha}$ the first marginal of $\Lb$, if
\beq\label{cont_Phi}
\left|  \EE_{L^1_{N,\alpha}}(g^2) - \bar q\right| \leq \delta', \quad \left| \zeta_N^\alpha  - \bar q\right|  < \delta'
\eeq
then
$$\begin{cases}
\left| R_N(\alpha) - 2\Phi(\bar q) +2\Phi'(\bq)\bq\right| < \delta, \\
\left| W_N(\alpha)-\Phi'(\bq)\left[ g^2(X_{\alpha,1})+g^2(X_{\alpha,2})\right]-f(X_{\alpha,1},Y_{\alpha,1})-f(X_{\alpha,2},Y_{\alpha,2}) \right| < \delta.
\end{cases}$$
Moreover, by duality and continous projection, it is clear that we can choose $r$ small enough s.t.
\beq \label{need2}
\Tlb\in B_{\nubb, r} \,\, \Rightarrow \,\, \EE_{\tlbu}(g^2) \in \left[\EE_{\bar{\nu}_1}(g^2)-\delta, \, \EE_{\bar{\nu}_1}(g^2)+\delta\right]\equiv \left[\bq-\frac{\delta'}{2},\bq+\frac{\delta'}{2}\right].
\eeq
Notice also that 

\beq\label{interval_zeta}
\EE_{L^1_{N,\alpha}}(g^2) = \frac{N-2}{N} \left[ \frac{g^2(X_{\alpha,1})+g^2(X_{\alpha,2})}{N-2} + \EE_{\tlbu} (g^2)\right]
\eeq

which implies that as $N\to\infty$, $\EE_{L^1_{N,\alpha}}(g^2) =\zeta^\alpha_N = \EE_{\tlbu}(g^2) + O(N^{-1})$ uniformly.\\

It is readily checked that Lemma \ref{ub_lemma} and Theorem \ref{teo1} still hold if we substitute the original Hamiltonian $N\Phib[\Lb]$ with $(N-2)\Phib[\Tlb]$; this implies that Lemma \ref{rep_av_F_as} works also for the average $\langle\langle \cdot \rangle\rangle_N$. Specifically, for every $r>0$, the couples $\alpha, \alpha'$ contributing to the sums corresponding to the averages in the right hand side of \eqref{confronto_av} are the ones for which $\Tlb, \Tlbp \in  B_{\nubb, r}$. 

All in all, being $\delta$ arbitrary,  we immediately get
\beq \label{need}
\lim\limits_{N\to\infty} \E \langle F_{\alpha,\alpha'} \rangle_N = \lim\limits_{N\to\infty} \E \frac{\langle\langle \, F_{\alpha,\alpha'}\, V_{\alpha} V_{\alpha'} \, \rangle\rangle_N}{\langle\langle \, V_{\alpha} V_{\alpha'} \, \rangle\rangle_N} 
\eeq
where we defined 
$$V_{\alpha}\defeq \exp{\sum\limits_{k=1}^2 \Phi'(\bar q)g^2\left( X_{\alpha,k} \right) +f( X_{\alpha,k} , Y_{\alpha,k} )}.$$

For $O_N(\alpha,\alpha')\equiv\delta_{\alpha\neq\alpha'}\left( q_N(\alpha, \alpha') - \bar q_0 \right)^2$, the r.h.s.  of \eqref{need} equals the large-$N$ limit  of

\beq \bea \label{threebirds}
& \E\frac{\langle\langle \, q_N(\alpha,\alpha')^2\delta_{\alpha\neq\alpha'} \, V_\alpha V_{\alpha'} \, \rangle\rangle_N}{\langle\langle \, V_\alpha V_{\alpha'} \, \rangle\rangle_N} \, + \, 
\bar q_0^2 \E\frac{\langle\langle \, \delta_{\alpha\neq\alpha'} \, V_\alpha V_{\alpha'} \, \rangle\rangle_N}{\langle\langle \, V_\alpha V_{\alpha'} \, \rangle\rangle_N}-2\bar q_0\E\frac{\langle\langle \, q_N(\alpha,\alpha')\delta_{\alpha\neq\alpha'} \, V_\alpha V_{\alpha'} \, \rangle\rangle_N}{\langle\langle \, V_\alpha V_{\alpha'} \, \rangle\rangle_N}.
\eea\eeq

Since $g$ is bounded,  and by symmetry,  the first term in \eqref{threebirds} converges in the $N$-limit to 

\beq\label{quadratic}
\bea
&\frac{1}{N^2}\sum\limits_{i\neq j}\E\frac{\langle\langle \, g(X_{\alpha,i})g(X_{\alpha',i}) g(X_{\alpha,j}) g(X_{\alpha',j}) \delta_{\alpha\neq\alpha'} \, V_\alpha V_{\alpha'} \, \rangle\rangle_N}{\langle\langle \, V_\alpha V_{\alpha'} \, \rangle\rangle_N} + O\left(\frac{1}{N}\right)=\\
&=\frac{N(N-1)}{N^2}\E\frac{\langle\langle \, g(X_{\alpha,1})g(X_{\alpha',1}) g(X_{\alpha,2}) g(X_{\alpha',2}) \delta_{\alpha\neq\alpha'} \, V_\alpha V_{\alpha'} \, \rangle\rangle_N}{\langle\langle \, V_\alpha V_{\alpha'} \, \rangle\rangle_N} + O\left(\frac{1}{N}\right).
\eea\eeq
\\
Notice now that Theorem \ref{teo3} clearly also applies to the collection of weights $\Gmt,  \alpha = 1 \dots 2^N$, which therefore converges weakly to a Poisson-Dirichlet point process with parameter $\bar m$.\\
Some particularly useful properties of such process are given in  Theorem \ref{prop_ppp} of the Appendix. A simple domination argument shows that one can pass to the $N \to\infty$ limit replacing the $\{\Gmt\}_\alpha$ by the points of its weak limit so that\footnote{also using that the random vectors $(V_{\alpha}, \, g(X_{\alpha,1})g(X_{\alpha,2})V_{\alpha})$ are independent of the process $\{\Gmt\}_\alpha$.} we can use the formula \eqref{ppp_double} from Theorem \ref{prop_ppp} to get
\beq \bea 
& \lim\limits_{N\to\infty} \E\frac{\langle\langle \, q_N(\alpha,\alpha')^2\delta_{\alpha\neq\alpha'} \, V_\alpha V_{\alpha'} \, \rangle\rangle_N}{\langle\langle \, V_\alpha V_{\alpha'} \, \rangle\rangle_N} \\
& \qquad = \lim\limits_{N\to\infty}  \E \, \frac{\langle\langle \, g(X_{\alpha,1})g(X_{\alpha',1}) g(X_{\alpha,2}) g(X_{\alpha',2}) \delta_{\alpha\neq\alpha'} \, V_\alpha V_{\alpha'} \, \rangle\rangle_N}{\langle\langle \, V_\alpha V_{\alpha'} \, \rangle\rangle_N}\\
&\qquad = \lim\limits_{N\to\infty}  \E \, \frac{\sum\limits_{\alpha\neq \alpha'} \, g(X_{\alpha,1})g(X_{\alpha,2})V_\alpha \,\, g(X_{\alpha',1}) g(X_{\alpha',2}) V_{\alpha'} \,\, \Gmt\Gmtp}{\left[ \sum\limits_{\alpha} \, V_\alpha \Gmt \right]^2 }\\
&\qquad = \lim\limits_{N\to\infty} \bar m \left[ \frac{\E g(X_{1,1}) g(X_{1,2}) V_1 V_1^{\bar m-1}}{\E V_1^{\bar m}} \right]^2 = \bar m \left( \int g d\nubb \right)^4 = \bar m \bar q_0^2.
\eea\eeq
Similarly we get 

$$\bea 
\lim\limits_{N\to\infty} \E\frac{\langle\langle \, \delta_{\alpha\neq\alpha'} \, V_\alpha V_{\alpha'} \, \rangle\rangle_N}{\langle\langle \, V_\alpha V_{\alpha'} \, \rangle\rangle_N} & = \bar m, \quad \lim\limits_{N\to\infty} \E\frac{\langle\langle \, q_N(\alpha, \alpha') \delta_{\alpha\neq\alpha'} \, V_\alpha V_{\alpha'} \, \rangle\rangle_N}{\langle\langle \, V_\alpha V_{\alpha'} \, \rangle\rangle_N} = \bar m \bar q_0,& 
\eea$$

and \eqref{overlap_limit_neq} follows. This ends the proof of Proposition \ref{overlap_law}.

\appendix

\section{The relative entropy.}\label{prop_H}
\begin{Def}\label{first_def_H} Given a couple $\nu, \mu$ of Borel probability measures on a Polish space $S$, their relative entropy is defined as
\beq
H(\nu\mid\mu)=\begin{aligned}
\begin{cases}
\EE_\nu\left( \log \left( \frac{d\nu}{d\mu} \right) \right) \quad&\text{if}\quad \nu<<\mu,\quad \EE_\nu\left(\left| \log \left( \frac{d\nu}{d\mu} \right) \right| \right) < \infty,  \\
\infty\quad&\text{else}.
\end{cases}
\end{aligned}
\eeq
\end{Def}
It is a well-known fact that the Definition \ref{first_def_H} is equivalent to the following (see \cite{DV}, \cite{SYL} for details)
\beq\label{secnd_def_H} H(\nu\mid\mu)= \sup_{u \, \in \,  C_b(S)} \left\{ \int u d\nu - \log \int e^u d\mu \right\}
\eeq
and it is easily seen that $\nu \to H(\nu\mid\mu)$ from $\ps$ to $\R$ is non-negative, convex and lower semicontinous. Specifically, the sublevel $K$ defined in \eqref{Kdef} is compact in $\ps$.

\begin{Def}\label{candidates}
For all $u\in C_b(\R)$
define the measure
$$G_u\in\ps:\quad \frac{dG_u}{d\mu}(s)=\frac{e^{u(s)}}{Z_u}.$$
\end{Def}

\begin{prop}\label{ub_H_Gu}
For all $u\in C_b(S)$, $\nu\in\ps$ it holds
$$H(\nu\mid\mu)\geq H(G_u\mid\mu)+\int u \, d(\nu-G_u).$$
\end{prop}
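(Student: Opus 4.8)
The plan is to prove the Gibbs variational inequality $H(\nu\mid\mu)\geq H(G_u\mid\mu)+\int u\,d(\nu-G_u)$ by a direct computation of the difference of the two relative entropies, exploiting that $G_u$ has an explicit Radon--Nikodym density with respect to $\mu$. The inequality is trivially true when $H(\nu\mid\mu)=\infty$, so I may assume $\nu\ll\mu$ with $\log(d\nu/d\mu)\in L^1(\nu)$; in particular $\nu\ll\mu$ and, since $dG_u/d\mu=e^u/Z_u$ is bounded away from $0$ and $\infty$ (as $u\in C_b(S)$), also $\nu\ll G_u$ and $G_u\sim\mu$.

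First I would write, $\mu$-a.e., $\dfrac{d\nu}{dG_u}=\dfrac{d\nu}{d\mu}\cdot\dfrac{d\mu}{dG_u}=\dfrac{d\nu}{d\mu}\cdot Z_u e^{-u}$, so that
\begin{equation}
\log\frac{d\nu}{dG_u}=\log\frac{d\nu}{d\mu}-u+\log Z_u .
\end{equation}
Integrating against $\nu$ and using $\nu(S)=1$ gives
\begin{equation}
H(\nu\mid G_u)=H(\nu\mid\mu)-\int u\,d\nu+\log Z_u .
\end{equation}
On the other hand, computing $H(G_u\mid\mu)$ directly from $\log(dG_u/d\mu)=u-\log Z_u$ yields $H(G_u\mid\mu)=\int u\,dG_u-\log Z_u$, i.e. $\log Z_u=\int u\,dG_u-H(G_u\mid\mu)$. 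Substituting this into the previous display produces
\begin{equation}
H(\nu\mid G_u)=H(\nu\mid\mu)-H(G_u\mid\mu)-\int u\,d(\nu-G_u).
\end{equation}
Since relative entropy is always non-negative, $H(\nu\mid G_u)\geq 0$, and rearranging gives exactly the claimed inequality, with equality precisely when $\nu=G_u$.

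I expect the only genuine point requiring care — the "main obstacle", though a mild one — to be the justification that all the integrals above are well-defined and finite, so that the additive manipulations are legitimate: one needs $u\in L^1(\nu)$ and $u\in L^1(G_u)$ (both immediate from $u\in C_b(S)$ and the finiteness of the measures), and one needs $\log(d\nu/d\mu)\in L^1(\nu)$, which is part of the hypothesis $H(\nu\mid\mu)<\infty$; the case $H(\nu\mid\mu)=\infty$ is dispatched separately at the start. Alternatively, if one prefers to avoid any integrability bookkeeping, the same inequality follows in one line from the dual (Donsker--Varadhan) representation \eqref{secnd_def_H}: taking the admissible test function $u\in C_b(S)$ gives $H(\nu\mid\mu)\geq\int u\,d\nu-\log\int e^u d\mu=\int u\,d\nu-\log Z_u$, and since $H(G_u\mid\mu)=\int u\,dG_u-\log Z_u$ by the explicit density, subtracting yields $H(\nu\mid\mu)-H(G_u\mid\mu)\geq\int u\,d(\nu-G_u)$. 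I would present the direct computation as the main argument and mention the dual-representation route as a remark.
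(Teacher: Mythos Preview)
Your proposal is correct. The paper's proof is precisely your ``alternative'' route via the Donsker--Varadhan dual representation: it invokes \eqref{secnd_def_H} to get $H(\nu\mid\mu)\geq\int u\,d\nu-\log Z_u$, computes $H(G_u\mid\mu)=\int u\,dG_u-\log Z_u$ from the explicit density, and subtracts. Your main argument instead proceeds by the chain rule for relative entropy, identifying the gap in the inequality as $H(\nu\mid G_u)$. This buys you the equality case $\nu=G_u$ for free and makes the Bregman-divergence structure transparent, at the price of the integrability bookkeeping and the separate treatment of $H(\nu\mid\mu)=\infty$ that you flag; the dual-representation argument sidesteps all of that and works uniformly for every $\nu\in\ps$ in two lines. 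Since the paper only needs the bare inequality, its choice is the more economical one here.
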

\begin{proof}
From the variational definition of $H$ \eqref{secnd_def_H} we have
$$H(\nu\mid\mu) \geq \int u d\nu - \log Z_u \quad \forall \nu\in\ps, \forall u\in C_b(S).$$
Moreover
$$H(G_u\mid \mu) = \int \log  \left( \frac{dG_u}{d\mu} \right) dG_u = \int u dG_u - \log Z_u,$$
and the thesis follows straightforwardly.
\end{proof}

\section{Edgeworth expansions, and Poisson-Dirichlet identities.}\label{local_CLT}

Denoting by $\vect = (x,y)\to \varphi_{\Sigma}(\vect)$ the bivariate Gaussian density with mean zero and covariance matrix $\Sigma$, the following normal approximation result holds:

\begin{teor}{$\left[ \, \text{\cite{BRR}}\text{, Theorem} \,\, 19.5 \, \right]$} \label{BRR_19}
Let $\left\{ {\bf V}_i\right\}_{i\geq 1}$ be a sequence of i.i.d. random vectors with values in $\R^2$, having mean zero, a positive-definite covariance matrix $\Sigma$ and a nonzero, absolutely continous component. Then  if $\E |\!| {\bf V}_1  |\!|^3 < \infty$, writing $Q_N$ for the distribution on of $N^{-\frac{1}{2}}(  {\bf V}_1+\cdots+ {\bf V}_N)$,  one has

\beq
\int \left( 1 + |\!| x  |\!|^3 \, \right) d\left| Q_N - \Upsilon_N \right|(x,y)=o\left(\frac{1}{\sqrt{N}}\right) \quad \left( N\to\infty\right)
\eeq

where 

\beq
\frac{d\Upsilon_N}{d\lambda_2}(\vect)=\frac{h(\vect)}{\sqrt{N}}+\varphi_\Sigma(\vect)
\eeq

for a bounded smooth function $h:\R^2\to\R$, $\lambda_2$ being the Lebesgue measure on $\R^2$.

\end{teor}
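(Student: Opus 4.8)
The statement is the bivariate case of the first-order Edgeworth expansion, and the plan is to follow the classical Fourier-analytic route. Write $\psi(t)\defeq\E\,e^{i\langle t,{\bf V}_1\rangle}$ for the characteristic function of a single summand, so that $\widehat{Q_N}(t)=\psi(t/\sqrt N)^N$, and let $P$ denote the cubic polynomial built from the third cumulants of ${\bf V}_1$ for which $\Upsilon_N$ has Fourier transform $t\mapsto e^{-\frac12 t^\top\Sigma t}\bigl(1+N^{-1/2}P(it)\bigr)$; its inverse transform is $\varphi_\Sigma+N^{-1/2}h$ with $h=P(\nabla)\varphi_\Sigma$, a Gaussian times a Hermite-type polynomial — this is how the "bounded smooth $h$" of the statement arises. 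The first step is the \emph{cumulant expansion near the origin}: since $\E\|{\bf V}_1\|^3<\infty$, the map $s\mapsto\log\psi(s)$ is three times differentiable at $0$ and $\log\psi(s)=-\tfrac12 s^\top\Sigma s+\tfrac{i}{6}\kappa_3(s)+o(|s|^3)$ as $s\to0$, with $\kappa_3(s)=\E\langle s,{\bf V}_1\rangle^3$; exponentiating and collecting powers of $N^{-1/2}$ fixes $P$ and, via the elementary bound $|e^a-e^b|\le|a-b|\,e^{\max(\operatorname{Re}a,\operatorname{Re}b)}$, yields on $|t|\le\delta\sqrt N$ (for a small fixed $\delta>0$, $N$ large) an estimate of the shape $\bigl|\psi(t/\sqrt N)^N-e^{-\frac12 t^\top\Sigma t}(1+N^{-1/2}P(it))\bigr|\le C\,N^{-1}(|t|^3+|t|^6)\,e^{-\frac14 t^\top\Sigma t}$.

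The second step is \emph{Fourier inversion together with control of the tail}, and this is where the nonzero absolutely continuous component is used. Writing $\mu=\theta\mu_{\mathrm{ac}}+(1-\theta)\mu_{\mathrm{s}}$ with $\theta>0$, the Riemann--Lebesgue lemma gives $\rho_\delta\defeq\sup_{|s|\ge\delta}|\psi(s)|<1$; moreover for $N$ large $\mu^{*N}$ inherits a density from its a.c.\ part (decompose $\mu^{*N}$ into the binomial mixture over the number of a.c.\ factors and bound the $L^1$-norms of the corresponding transforms), so $Q_N$ has a density $q_N$ — something that is anyway necessary for the target bound, since an atomic or singular part would contribute a term not tending to zero — and Fourier inversion is legitimate. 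Splitting $\R^2=\{|t|\le\delta\sqrt N\}\cup\{|t|>\delta\sqrt N\}$, using the central estimate of Step~1 on the first region and the bound $|\psi(t/\sqrt N)^N|\le\rho_\delta^{\,N-N_0}\,|\psi(t/\sqrt N)|^{N_0}$ (with $N_0$ fixed so that $|\psi|^{N_0}\in L^1$) on the second — which beats any power of $N$ — one obtains the uniform estimate $\sup_x\bigl|q_N(x)-\upsilon_N(x)\bigr|=O(N^{-1})$, where $\upsilon_N=\varphi_\Sigma+N^{-1/2}h$.

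It remains to upgrade this to the \emph{weighted total-variation bound}. The polynomial factor is handled by the standard device that multiplying a density by a monomial $x^\alpha$ corresponds on the Fourier side to applying $D^\alpha$ to the characteristic function; since $\E\|{\bf V}_1\|^3<\infty$ one may differentiate $\psi$ up to order three, these derivatives satisfy cumulant expansions of the same shape as above, and repeating Steps~1--2 for $D^\alpha\bigl(\psi(\cdot/\sqrt N)^N\bigr)$ and $D^\alpha$ of the transform of $\Upsilon_N$ with $|\alpha|\le3$ gives $\sup_x(1+|x_1|^3+|x_2|^3)\,|q_N(x)-\upsilon_N(x)|=O(N^{-1})$. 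Converting this into $\int(1+\|x\|^3)\,d|Q_N-\Upsilon_N|=o(N^{-1/2})$ is the genuinely delicate point, and the one I expect to be the main obstacle: a plain split of $\R^2$ into a large ball and its complement does not close, because only three moments of ${\bf V}_1$ are available and the tail mass of $q_N$ cannot be absorbed into the target rate. The resolution, following Bhattacharya--Rao, is to truncate the summands at a level $c_N$ growing with $N$, run the whole Fourier argument for the truncated, all-moments sum $\bar Q_N$, and estimate $\int(1+\|x\|^3)\,d|Q_N-\bar Q_N|$ through $N\,\PP(\|{\bf V}_1\|>c_N)$ and the third-moment bound, choosing $c_N$ so that both contributions are $o(N^{-1/2})$; this truncation/smoothing step is also the place where the absolutely continuous component hypothesis is needed in full strength, to guarantee that $\bar Q_N$ has a density and that its characteristic function decays off the origin uniformly in $c_N$.
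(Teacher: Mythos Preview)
The paper does not prove this statement at all: Theorem~\ref{BRR_19} is quoted verbatim from \cite[Theorem~19.5]{BRR} and used as a black box in the proof of Proposition~\ref{PPP}. There is therefore no ``paper's own proof'' to compare against; the authors simply invoke the reference.

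Your sketch is a faithful outline of the classical Fourier-analytic proof of first-order Edgeworth expansions --- cumulant expansion of $\log\psi$ near the origin, Riemann--Lebesgue decay of $\psi$ away from the origin via the absolutely continuous component, and truncation of the summands to upgrade pointwise density bounds to weighted total-variation bounds --- which is indeed the route taken in \cite{BRR}. As a sketch it is sound and correctly identifies the delicate step (the passage from $\sup_x(1+\|x\|^3)|q_N-\upsilon_N|=O(N^{-1})$ to $\int(1+\|x\|^3)\,d|Q_N-\Upsilon_N|=o(N^{-1/2})$, handled by truncation). But for the purposes of this paper no proof is required: a one-line citation suffices.
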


\begin{teor}{$\left[ \, \text{\cite{SGCH}}\text{, Theorem} \,\, 6.4.5\, \right]$}\label{prop_ppp}
Assume that $\{v_\alpha\}_{\alpha\leq 2^N}$ is a Poisson-Dirichlet point process with intensity measure $e^{-mt}dt$ for some $m<1$, independent of a sequence $\{(U_\alpha, V_\alpha)\}_{\alpha\leq 2^N}$ of i.i.d. vectors, copies of some $(U,V)$: $\E U^2<\infty,\E V^2< \infty,  V\geq 1$. Then we have the formulas

\beq \label{ppp_single}
\E \frac{\sum\limits_{\alpha}v_\alpha U_\alpha}{\sum\limits_{\alpha}v_\alpha V_\alpha}  = \frac{\E UV^{m-1}}{\E V^m},
\eeq

\beq \label{ppp_double}
\E \frac{\sum\limits_{\alpha\neq\alpha'}v_\alpha v_\beta U_\alpha U_\beta}{\left( \sum\limits_{\alpha}v_\alpha V_\alpha\right)^2} = m\left(\frac{\E UV^{m-1}}{\E V^m}\right)^2,
\eeq

\beq \label{ppp_squared}
\E \frac{\sum\limits_{\alpha}v_\alpha^2 U_\alpha^2}{\left( v_\alpha V_\alpha \right)^2} = (1-m) \frac{\E U^2 V^{m-2}}{\E V^m}.
\eeq
\end{teor}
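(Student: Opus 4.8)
These three identities are the classical Ruelle-cascade moment formulas; rather than merely quoting \cite{SGCH} I would derive them directly from the Poisson structure. First I pass to the multiplicative picture: writing $v=e^{t}$, the family $\{v_\alpha\}$ is a Poisson point process on $(0,\infty)$ with intensity $s^{-m-1}\,ds$, marked independently by i.i.d.\ copies $\{(U_\alpha,V_\alpha)\}$ of $(U,V)$. In the relevant regime $0<m<1$, using $V\geq1$ and $\E V^{2}<\infty$, one checks at once that $\E V^{m}$, $\E|U|V^{m-1}$ and $\E U^{2}V^{m-2}$ are finite and, crucially, that $\sum_\alpha v_\alpha V_\alpha<\infty$ almost surely: by the mapping theorem the marked points $\{v_\alpha V_\alpha\}$ form a Poisson process with intensity $(\E V^{m})\,s^{-m-1}\,ds$ (substitute $s=vV$), which integrates $s\wedge1$ precisely because $m<1$. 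In particular its Laplace functional is
\[
\E\exp\!\Big(-\lambda\textstyle\sum_\alpha v_\alpha V_\alpha\Big)=\exp\!\Big(-\tfrac{\Gamma(1-m)}{m}\,\E V^{m}\,\lambda^{m}\Big),\qquad \lambda>0 .
\]

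\noindent The engine is then the elementary representation $a^{-1}=\int_0^{\infty}e^{-\lambda a}\,d\lambda$, $a^{-2}=\int_0^{\infty}\lambda\,e^{-\lambda a}\,d\lambda$ applied to the random denominator $a=\sum_\beta v_\beta V_\beta$, combined with Mecke's (Palm) formula for the marked Poisson process. For \eqref{ppp_single}, expand $\big(\sum_\alpha v_\alpha U_\alpha\big)a^{-1}=\int_0^\infty\sum_\alpha v_\alpha U_\alpha e^{-\lambda v_\alpha V_\alpha}\prod_{\beta\neq\alpha}e^{-\lambda v_\beta V_\beta}\,d\lambda$; taking expectations, Mecke's formula turns the tagged sum into the tagged-point integral against $s^{-m-1}\,ds\otimes\mathrm{Law}(U,V)$ times the Laplace functional of the remaining process. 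The tagged integral is $\int_0^\infty s^{-m}e^{-\lambda sV'}\,ds=(\lambda V')^{m-1}\Gamma(1-m)$, so after integrating in $\lambda$ (substitute $\tau=\lambda^{m}$) every $\Gamma(1-m)$ and every power of $\E V^{m}$ cancels and there remains $\E[UV^{m-1}]/\E[V^{m}]$. Identity \eqref{ppp_squared} is the very same computation, now with tagged weight $v^{2}$ and kernel $a^{-2}$: the tagged integral becomes $(\lambda V')^{m-2}\Gamma(2-m)$, and $\Gamma(2-m)=(1-m)\Gamma(1-m)$ supplies the prefactor $1-m$. Identity \eqref{ppp_double} is obtained in the same way from $a^{-2}$ but using the second-order Campbell (two-point Mecke) formula, so that two independent tagged contributions produce the square $\big(\E[UV^{m-1}]/\E[V^{m}]\big)^{2}$, while the exponent $\lambda^{2m-1}$ surviving in the $\lambda$-integral yields the extra factor $m$.

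\noindent The one genuinely delicate point is \emph{legitimising the Fubini/Mecke interchanges}: $\sum_\alpha v_\alpha U_\alpha$ need not be absolutely convergent, so one cannot freely swap $\E$, the $\lambda$-integral and the sum. I would resolve this by truncating the Poisson process to its atoms with $v_\alpha>\varepsilon$ (finitely many, a.s.), running the Mecke/Fubini steps there — where every expression is a finite sum — and then letting $\varepsilon\downarrow0$, using dominated convergence in $\lambda$ together with $\sum_{v_\alpha>\varepsilon}v_\alpha V_\alpha\uparrow\sum_\alpha v_\alpha V_\alpha<\infty$ a.s.\ and the uniform Laplace bound above. With that regularisation in place the computation of the previous paragraph goes through verbatim; alternatively one may simply cite \cite{SGCH}, as done in the statement.
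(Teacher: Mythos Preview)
The paper does not actually prove this theorem: it is stated in the appendix purely as a quotation of \cite[Theorem~6.4.5]{SGCH}, with no argument supplied. Your proposal therefore goes well beyond what the paper does.

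Your direct derivation is correct and is in fact the standard route to these Ruelle--cascade identities: pass to the multiplicative process with intensity $s^{-m-1}\,ds$, represent the denominators via $a^{-1}=\int_0^\infty e^{-\lambda a}\,d\lambda$ and $a^{-2}=\int_0^\infty \lambda\,e^{-\lambda a}\,d\lambda$, and then apply the first- and second-order Mecke (Slivnyak--Campbell) formulas together with the Laplace functional $\E\exp(-\lambda\sum_\alpha v_\alpha V_\alpha)=\exp\bigl(-\tfrac{\Gamma(1-m)}{m}\,\E V^{m}\,\lambda^{m}\bigr)$. The Gamma cancellations you indicate --- $\Gamma(1-m)$ against $c=\tfrac{\Gamma(1-m)}{m}\E V^{m}$ in \eqref{ppp_single}, $\Gamma(2-m)=(1-m)\Gamma(1-m)$ producing the prefactor in \eqref{ppp_squared}, and the $\lambda^{2m-1}$ integral yielding the factor $m$ in \eqref{ppp_double} --- all check out. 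Your caveat about the possible conditional convergence of $\sum_\alpha v_\alpha U_\alpha$ is well taken, and the $\varepsilon$-truncation you propose (restrict to the finitely many atoms with $v_\alpha>\varepsilon$, then let $\varepsilon\downarrow 0$ by dominated convergence) is exactly how one makes the Fubini/Mecke interchanges rigorous. One small remark: the paper's indexing $\alpha\leq 2^{N}$ in the statement is a notational artefact of the finite-volume context; the identities themselves are for the infinite Poisson process, which is how you (correctly) read it.

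In short: the paper simply cites the result, whereas you supply a self-contained and correct proof. Either is acceptable; yours is more informative.
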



\begin{thebibliography}{1} 

\bibitem{AA} Arguin, Louis-Pierre, and Michael Aizenman. {\it On the structure of quasi-stationary competing particle systems.} The Annals of Probability 37.3 (2009): 1080-1113.

\bibitem{AR} Ruzmaikina, Anastasia, and Michael Aizenman. {\it Characterization of invariant measures at the leading edge for competing particle systems.} The Annals of Probability 33.1 (2005): 82-113.

\bibitem{BRR} Bhattacharya, Rabi N., and R. Ranga Rao. {\it Normal approximation and asymptotic expansions}. Society for Industrial and Applied Mathematics, (2010).

\bibitem{A} Bolthausen, Erwin, and Alain-Sol Sznitman. {\it On Ruelle's probability cascades and an abstract cavity method.} Communications in mathematical physics 197.2 (1998): 247-276.

\bibitem{B} Bolthausen, Erwin, and Alain-Sol Sznitman. {\it Ten lectures on random media.} Vol. 32. Springer Science \& Business Media, (2002).

\bibitem{BK1} Bolthausen, Erwin, and Nicola Kistler. {\it Universal structures in some mean field spin glasses and an application.} Journal of mathematical physics 49.12 (2008): 125205.

\bibitem{BK2} Bolthausen, Erwin, and Nicola Kistler. {\it A quenched large deviation principle and a Parisi formula for a Perceptron version of the GREM.} Probability in complex physical systems. Springer, Berlin, Heidelberg, (2012). 425-442.


\bibitem{CPS} Chen, Wei-Kuo, Dmitry Panchenko, and Eliran Subag. {\it The generalized TAP free energy II.} Communications in Mathematical Physics 381.1 (2021): 257-291.

\bibitem{CLR} Crisanti, Andrea, Luca Leuzzi, and Tommaso Rizzo. {\it Complexity in mean-field spin-glass models: Ising p-spin.} Physical Review B 71.9 (2005): 094202.

\bibitem{D} Derrida, Bernard. {\it Random Energy Model: An exactly solvable model of disordered systems.}, Phys. Rev. B 24 (1981)

\bibitem{Dgrem} Derrida, Bernard.  {\it A generalization of the random energy model which includes correlations between energies.} Journal de Physique Lettres 46.9 (1985): 401-407.

\bibitem{DV}  Donsker, Monroe D., and S.R. Srinivasa Varadhan. {\it Asymptotic evaluation of certain Markov process expectations for large time, I.} Communications on Pure and Applied Mathematics 28.1 (1975): 1-47.


\bibitem{WF}  Feller, William. {\it An introduction to probability theory and its applications.} Vols. I \& II, Wiley I 968 (1971).

\bibitem{g} Guerra, Francesco.  {\textit Broken replica symmetry bounds in the mean field spin glass model.} Comm. in Math. Phys. Vol. 233 (2002).

\bibitem{GS} Gufler, Stephan,  Adrien Schertzer,  and Marius A. Schmidt. {\it On concavity of TAP free energy in the SK model}.  arXiv:2209.08985 (2022). 

\bibitem{kallenberg} Kallenberg, Olav. (2017). {\it Random measures, theory and applications}. Springer International Publishing.

\bibitem{SYL} Lee, Se Yoon. {\it Gibbs sampler and coordinate ascent variational inference: A set-theoretical review.} Communications in Statistics -Theory and Methods 51.6 (2022): 1549-1568.

\bibitem{MPV} Mezard, Marc, Giorgio Parisi, and Miguel Angel Virasoro. {\it Spin glass theory and beyond: An Introduction to the Replica Method and Its Applications}. Vol. 9. World Scientific Publishing Company, (1987).

\bibitem{plefka} Plefka, Timm. {\it Convergence condition of the TAP equation for the infinite-ranged Ising spin glass model.} Journal of Physics A: Mathematical and general 15.6 (1982): 1971.

\bibitem{PoWo} Polyanskiy, Yury, and Yihong Wu. {\it Lecture notes on information theory} (2019).

\bibitem{R} Ruelle, David. {\it A mathematical reformulation of Derrida's REM and GREM.}, Comm.Math.Phys. 108 (1987).

\bibitem{sk} Sherrington, David, and Scott Kirkpatrick. {\it Solvable model of a spin-glass.} Physical review letters 35.26 (1975): 1792.

\bibitem{Nish} Nishimori, Hidetoshi. {\it Statistical Physics of Spin Glasses and Information Processing: an Introduction.} Oxford; New York: Oxford University Press (2001).

\bibitem{SGCH} Talagrand, Michel. {\it Spin Glasses: A Challenge for Mathematicians. Cavity and Mean Field Models}, Springer Verlag (2003).

\bibitem{TAP} Thouless, David J., Philip W. Anderson, and Robert G. Palmer. {\it Solution of 'solvable model of a spin glass'.} Philosophical Magazine 35.3 (1977): 593-601.




\end{thebibliography}
\end{document}